\journal{Elsevier Ad Hoc Networks Journal}
\DeclareMathOperator{\argmax}{\arg\max}
\begin{document}

\begin{frontmatter}

%% Title, authors and addresses

%% use the tnoteref command within \title for footnotes;
%% use the tnotetext command for theassociated footnote;
%% use the fnref command within \author or \address for footnotes;
%% use the fntext command for theassociated footnote;
%% use the corref command within \author for corresponding author footnotes;
%% use the cortext command for theassociated footnote;
%% use the ead command for the email address,
%% and the form \ead[url] for the home page:
%% \title{Title\tnoteref{label1}}
%% \tnotetext[label1]{}
%% \author{Name\corref{cor1}\fnref{label2}}
%% \ead{email address}
%% \ead[url]{home page}
%% \fntext[label2]{}
%% \cortext[cor1]{}
%% \address{Address\fnref{label3}}
%% \fntext[label3]{}

\title{MPAR: A Movement Pattern-Aware Optimal Routing for Social Delay Tolerant Networks}

%% use optional labels to link authors explicitly to addresses:
%% \author[label1,label2]{}
%% \address[label1]{}
%% \address[label2]{}

\author{Lei~You$^{1}$,
        Jianbo~Li$^{2*}$, 
        Changjiang Wei$^{3}$ 
        Lejuan Hu$^{4}$
}

\address{Information Engineering College, Qingdao University, Qingdao 266071, Shandong Province, China \\
\texttt{youleiqdu@gmail.com$^1$,lijianboqdu@gmail.com$^2$,wcj@qdu.edu.cn$^3$, 907316924@qq.com$^4$}
}

\begin{abstract}
%% Text of abstract
Social Delay Tolerant Networks (SDTNs) are a special kind of Delay Tolerant Network (DTN) that consists of a number of mobile devices with social characteristics. The current research achievements on routing algorithms tend to separately evaluate the available profit for each prospective relay node and cannot achieve the global optimal performance in an overall perspective. In this paper, we propose a Movement Pattern-Aware optimal Routing (MPAR) for SDTNs, by choosing the optimal relay node(s) set for each message, which eventually based on running a search algorithm on a hyper-cube solution space. Concretely, the movement pattern of a group of node(s) can be extracted from the movement records of nodes. Then the set of commonly visited locations for the relay node(s) set and the destination node is obtained, by which we can further evaluate the co-delivery probability of the relay node(s) set. Both local search scheme and tabu-search scheme are utilized in finding the optimal set, and the tabu-search based routing Tabu-MPAR is proved able to guide the relay node(s) set in evolving to the optimal one. We demonstrate how the MPAR algorithm significantly outperforms the previous ones through extensive simulations, based on the synthetic SDTN mobility model.
\end{abstract}

\begin{keyword}
%% keywords here, in the form: keyword \sep keyword

%% PACS codes here, in the form: \PACS code \sep code

%% MSC codes here, in the form: \MSC code \sep code
%% or \MSC[2008] code \sep code (2000 is the default)
Delay Tolerant Networks, Mobile Social Networks, Movement,  Opportunistic routing
\end{keyword}

\end{frontmatter}

%% \linenumbers
\theoremstyle{plain}
\newtheorem{definition}{Definition}
\newtheorem{theorem}{Theorem}
\newtheorem{lemma}{Lemma}
\newtheorem{proposition}{Proposition}
\newtheorem{postulation}{Postulation}
%% main text
\section{Introduction}
In traditional data networks such as Internet, there are usually some assumptions of the network model e.g. the existence of at least one end-to-end path between source-destination pair \cite{Cao2012}. Any arbitrary link connecting two nodes is assumed to be bidirectional supporting symmetric data rates with low error probability and latency \cite{Vastardis2013}. Messages are buffered in intermediate nodes (e.g. routers) and further forwarded to the next-hop relay or successfully received by the destination. In this case, each message is not expected to occupy the buffer of nodes for a long period of time. However, these all above usually fail in the context of Delay Tolerant Networks (DTNs)\cite{Fall2003}. Some applications, e.g., email service, address the delivery success while having relatively flexible requirement of latency, which is known as ``delay-tolerant". For further popularizing these kind of applications, we have to reconsider the widely used network architecture so as to relax the assumption of the continuous end-to-end connectivity that TCP/IP based \cite{Ott2012}. 

Recently years there have seen the wide adoptions of mobile devices such as smartphones, laptops and tablet PCs. The mobile devices may form a network in an ad hoc manner, to work as an auxiliary to cellular networks for some services such as social information sharing. Many researchers use the term ``Social Delay Tolerant Networks (SDTNs)'' to describe this special kind of DTN, where mobile users move around and communicate with each other via their carried short-distance wireless communication devices \cite{Wei2013b,Bigwood2009,Nguyen2009}. Since SDTNs experience intermittent connectivity incurred by the mobility of users, routing is still the most challenging problem \cite{Wei2013} for such networks. The fundamental characteristics of STDN is that there exist potential social relationships behind the nodes in the network, which in turn affects the movement pattern of nodes. Many previous works have explored how to leverage the social relationships so as to enhance the routing performance. For example, social-aware routing algorithms based on social network analysis have been proposed, such as Bubble Rap \cite{Hui2008}, SimBet \cite{Daly2007}. Besides, algorithms that based on some custom defined social metrics are proposed in \cite{Xiao2013,Li2013,Ning2013,Wu2012,Moreira2012,Ciobanu2013}.

In many real SDTNs, mobile users that have a common interest generally will visit some locations that are related to this interest. Previous works observed that 50\% of mobile users in this network spent 74.0\% of their time at a single access point (AP) to show the characteristic of frequently visiting a few locations \cite{Henderson2004a}. This indicates the feasibility to ``connect'' some nodes through their commonly visited locations where some cache devices are deployed for buffering the message, e.g. throwbox proposed in \cite{Ibrahim2009}. Cache devices have several advantages over ordinary nodes. First, there would usually be some stable connections between nodes and the cache device \cite{Henderson2004a}, which hardly exist between two ordinary nodes due to their mobility. Second, they are fixed as a kind of network infrastructures, and their capacity can be much more sufficient than the ordinary nodes. 
From the above discussion, intuitively we can deploy cache devices in frequently visited locations to improve the routing performance in SDTNs. 

In this paper, we focus on the multi-copy unicast routing problem in SDTNs, where there is one source-destination nodes pair for each packet, and there can be multiple replicas of the packet in the network. 
Different from previous works that based on classical social network analysis or custom defined social metrics, our scheme makes use of the movement records collected by nodes. The basic idea is to view the set of nodes holding a packet or its replicas as an entirety, thus extracting its frequently visited locations from the movement records. The frequently visited locations of the destination node is obtained in the same way. Then we get the intersection between the two locations sets and then derive the co-delivery probability for the whole relay node set. Our goal is to calculate the optimal nodes set that maximizes the co-delivery probability. In addition, since that each generated message is only valid before reaching its deadline, we take the message time-to-live value into consideration when extracting the movement pattern from the movement records. Under this model, we propose the Movement Pattern-Aware optimal Routing (MPAR) for SDTNs. We adopt the optimal opportunistic routing scheme by maintaining an optimal node set for the destination node. To the best of our knowledge, this is the first work to exploit the social-aware routing algorithm that leverages the movement pattern of a group of nodes. Our main contributions are summarized as follows:
\begin{enumerate}
\item We present a periodical time-aware movement record model and extract the movement pattern from the movement record of nodes. Each node set is viewed as an entirety during the whole routing process. Corresponding Movement pattern can be extracted from the same set in different time intervals. The frequently visited locations are directly corresponding to the movement pattern for each node(s) set.

\item Under our network model that cache devices are deployed in several positions, we analyze two key performance metrics, based on which the routing problem is formally proved to be an $NP-Hard$ combinatorial optimization problem.

\item Two search algorithms are proposed to solve the optimization problem, which are respectively based on the local search scheme and the tabu search scheme. The key elements of the tabu search scheme are specifically defined for our optimal set search problem.

\item Two respective movement pattern-aware routing schemes are designed based on the local search algorithm and the tabu search algorithm, which is called Local-MPAR and Tabu-MPAR and are respectively corresponding to the reactive routing strategy and the proactive routing strategy. Tabu-MPAR is proved able to guide the relay node(s) set in evolving to the optimal one.
\end{enumerate}

The rest of this paper is organized as follows. In section \ref{sec:preliminaries} we introduce the system model. In section \ref{sec:overview} we analyze some key properties behind the routing, and model the routing as an optimal search problem. We discuss the computational hardness of this problem and propose a heuristic method to approximately obtain the optimal solution in section \ref{sec:analysis}. The details about the routing algorithm are given in section \ref{sec:routing}. Section~\ref{sec:simulation} analyses the simulation result. We conclude the paper in Section \ref{sec:conclusion}. The proofs of Theorem 1 and 2 are presented in the Appendix.

\section{Preliminaries}
\label{sec:preliminaries}
In this section, we give the system model and the assumptions behind our scheme. 

\subsection{Network Model}

We consider a SDTN composed of $n$ nodes $\overline{N}=\{n_{i}|n_{i}\in \overline{N}, 1\leq i\leq n\}$ moving among $m$ locations $\overline{A}=\{a_{j}|a_{j}\in\overline{N},1\leq j\leq m\}$. Each mobile node frequently visits some locations $A\subseteq \overline{A}$. This model is derived from the case of real mobile networks. A typical example is the Wi-Fi campus network at Dartmouth College. Another SDTN that follows this characteristic is the Vehicular Ad-hoc NETworks (VANETs), where lots of buses and taxies move among bus stations and taxi stops. 
%We assume that the interval of any node $n_i$ to visit any location $a_j\in A(n_i)$ follows an exponential distribution. 
Like in the previous work \cite{Xiao2013,Gao2009}, we assume that the behavior of each node $n_i$ visiting any location $a_j\in A(n_i)$ follows the Poisson process.
In other words, the time interval that each node visits a location follows an exponential distribution.
Besides, each location in our model is assumed to have a throwbox\cite{Ibrahim2009} to store and transmit messages. And we assume that the capacity of the throwbox is sufficient enough to take custody of messages. 
%As shown in \figurename~\ref{fig:box}, a node comes into the transmit range of the throwbox at time $t_1$, then it sends the message to the throwbox. At time $t_2$, the destination node of this message comes into the range and receive the message from the throwbox. 
%\begin{figure}
%\centering
%\subfigure[Node a sends the message to the throwbox\label{send_to_box}]
%{\includegraphics[width=0.4\linewidth]{send_to_box}}~~~~~~
%\subfigure[Node b requests the message from the throwbox\label{receive_from_box}]
%{\includegraphics[width=0.4\linewidth]{receive_from_box}}
%\caption{Finish the delivery operation by throwbox.}
%\label{fig:box}
%\end{figure}

We assume that each message has a time-to-live value $\tau _l$, indicating the remaining life time of the message. With the time passing, $\tau _l$ gradually decreases, and when $\tau _l$ is zero, the message is deleted from nodes' buffer. This time-to-live field prevents a message from staying in the network for a very long time, and allows the application to assign a deadline for the message. For applications such as news dissemination or advertising publishing, the message is only valid before its deadline. When the message reaches the deadline, there is no need to continue buffering the message. 

\subsection{Basic Definitions}

As a large number of mobile devices are carried by human-beings or relative to the society of people, we assume that the movement patterns of nodes are of periodicity. For example, Smith goes to work on Monday to Friday, so that his frequently visited locations might be composed of home, office, some bus-stations and a few nosheries. On the weekends, he usually goes to a club or some coffee bars, or other places different from those visited on workdays. However, his actions usually repeat per week, i.e., there exists some periodicity for his movement records. From this point, we assume the period time is $T$. Normally, the length of $T$ will be several days. Then we divide $T$ into $h$ slots. The length of each slot is thereby $\frac{T}{h}$. We represent the time points sequence as $<t_0,t_1, t_2, \ldots, t_h>$, then any two time points $t_s$ and $t_e$ $(t_s<t_e)$ form a time interval $[t_s,t_e]$, as shown in \figurename~\ref{fig:time_slots}. Thus we can extract the movement pattern of a node for the time interval $[t_s,t_e]$, by referring to the movement records of nodes. The definition of the record of a node is shown as follows.

\begin{figure}[!t]
\centering
  \includegraphics[width=0.75\linewidth]{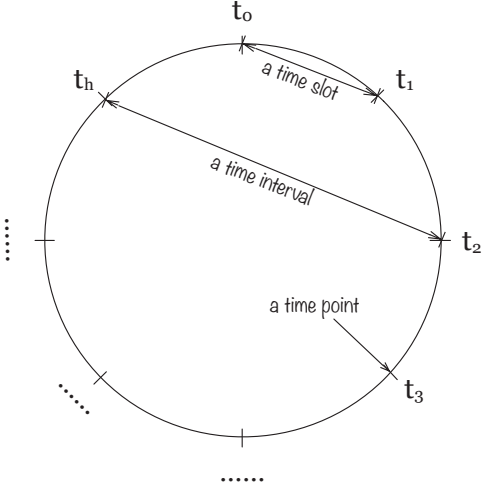}
  \caption{The period time $T$ is divided into h time slots.}
  \label{fig:time_slots}
\end{figure}

\begin{definition} Movement Record.\\
The movement record of a node $n_i$ is a $h\times m$ matrix $\mathbb{R}(i)$
\[
\mathbb{R}(i)=
\left.\left[
\overbrace{
\begin{array}{cccc}
\sfrac{1}{r_{1,1}^i} & \sfrac{1}{r_{1,2}^i} & \ldots & \sfrac{1}{r_{1,m}^i}\\
\sfrac{1}{r_{2,1}^i} & \sfrac{1}{r_{2,2}^i} & \ldots & \sfrac{1}{r_{2,m}^i} \\
\vdots & \vdots & \ddots & \vdots \\
\sfrac{1}{r_{h,1}^i} & \sfrac{1}{r_{h,2}^i} & \ldots & \sfrac{1}{r_{h,m}^i} \\
\end{array}}^{m~locations}
\right]\right\}h~time~slots
\]

of which the $k_{th}$ row $\mathbb{R}(i,k)$ is the movement record during the time slot $[t_{k-1},t_{k}]$, and we have \[\mathbb{R}(i,k)=[\sfrac{1}{r^i_{k,1}},\sfrac{1}{r^i_{k,2}},\ldots,\sfrac{1}{r^i_{k,m}}]\] where $r^i_{k,j}$ is the node $n_i$'s time interval on average to visit the location $a_j$, from the start to the end of the time slot $t_k$.
\end{definition}

From the above definition, we know that $\sfrac{1}{r_{k,j}^i}$ represents $n_i$'s average meeting frequency on the location $a_j$. Specifically, when $n_i$ has never arrived location $a_j$ during the whole $k_{th}$ time slot, we have $r_{k,j}^i=\infty$ and thus $\sfrac{1}{r_{k,1}^i}=0$. We can also get the average meeting time interval $M_{i,j}$ of $n_i$ on the location $a_j$ by averaging $r^{i}_{k,j}$ for all $k\in[1,h]$ in column $j$, formally we have
\begin{equation}
M_{i,j}=\left.\sum_{k=1}^{k=h}r^{i}_{k,j}\middle/h\right.
\label{eq:meeting_interval}
\end{equation}

\section{Overview of the Routing Problem}
\label{sec:overview}
Before discussing the details of our routing scheme, let us overview the routing problem. We first discuss how to extract the movement pattern of a set of node(s) in Section \ref{sec:Movement Pattern}. Then we analyze some key properties behind the routing in Section \ref{sec:Analysis}. Finally, the routing problem is formally defined as an optimal search problem in section \ref{sec:formalization}.

\subsection{Movement Pattern}
\label{sec:Movement Pattern}
The movement pattern is extracted from the records of node(s). We first define the function $\mathbbm{E}$ that transforms a movement record $\mathbb{R}$ to the corresponding movement pattern.

\begin{definition} Function $\mathbbm{E}$.
\[\mathbbm{E}([x_1,x_2,\ldots,x_m])=[\varkappa _1,\varkappa _2,\ldots,\varkappa _m]\]\textnormal{where}
\begin{equation}
\varkappa _j=\left\{
\begin{array}{cl}
 1 &x_j\geq\frac{\updelta}{m}\sum_{i=1}^{i=m}x_i\\
 0 & otherwise
\end{array}
\right.
\label{eq:extract}
\end{equation}
\textnormal{where $x_i$ is the $i$th element of the vector processed by $\mathbbm{E}$ and $0<\updelta<1$ is a system parameter.}
\label{def:function}
\end{definition}

The basic utility of function $\mathbbm{E}$ is to filter out the rarely visited locations so as to leave the frequently visited ones shown up. Following the above definition directly, we can define the movement pattern.

\begin{definition} Movement Pattern.
A movement pattern for a given node set $N$ at the time interval $[t_p,t_q]$ is represented as $\mathcal{P}(V,[t_s,t_e])$, where we have: 
\begin{equation}
\mathcal{P}(N,[t_s,t_e])= \mathbbm{E}(\sum_{n_x \in N}\sum_{i=s}^{i=e}\mathbb{R}(x,t_i))
\label{eq:pattern}
\end{equation}
\end{definition}

Notice that in the above definition, the movement pattern $\mathcal{P}$ is extracted from the accumulation of several records $\mathbb{R}$ of all nodes in set $N$ with a specific constraint on the time interval $[t_s,t_e]$, by function $\mathbbm{E}$. 

For example, assume that we have a network case, that there are four nodes and two locations in total and $n_4$ is the destination node, and $T$ is divided into two time slots $t_1$ and $t_2$, as shown in \figurename~\ref{fig:example1}. 

\begin{figure}[hbt]
\centering
  \includegraphics[width=0.9\linewidth]{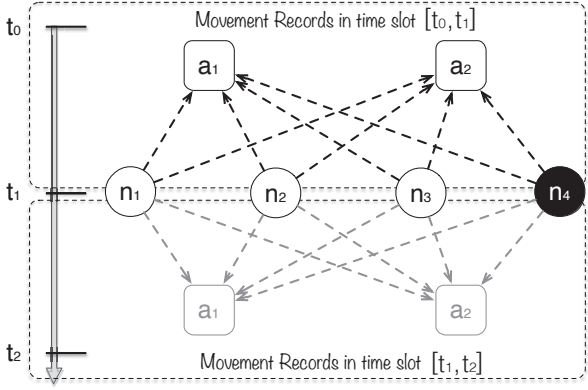}
  \caption{An example of the network instance}
  \label{fig:example1}
\end{figure}

\begin{table}[hbt]
\centering
  \begin{tabular}{|c|c|cc|}
  \hline
    & & $a_1$ & $a_2$  \\
    \hline
    
    %%t1
    \multicolumn{1}{|c|}{\multirow{4}{*}{$[t_0,t_1]$}} & $n_1$ &2.08 & 4.65 \\
    & $n_2$ & 4.4 & 4.3 \\
    & $n_3$ & 8.05 & 1.11 \\
    & $n_4$ & 2.6 & 3.3 \\
    \hline

    %%t2
    \multicolumn{1}{|c|}{\multirow{4}{*}{$[t_1,t_2]$}} & $n_1$ &6.02 & 2.95 \\
    & $n_2$ & 4.0 & 4.5 \\
    & $n_3$ & 6.05 & 15.49 \\
    & $n_4$ & 3.5 & 3.5 \\
    \hline
  \end{tabular}
  \caption{The weight of each edge in \figurename~\ref{fig:example1}}
  \label{tab:example1}
\end{table}

The weight of each edge has shown in \tablename~\ref{tab:example1}, which stands for the average visit time interval between a given "node-location" pair. Actually, these records are from the real experiment of four students' visiting interval (hours) to the two Computer Labs in Qingdao University. We had collected the data for about two weeks. Only the active 12 hours per day for students are counted. And we combine the 2 weeks datas as a one week data set such that the time period $T$ is set to $14\times 12=168$ hours. We set $[t_0,t_1]$ and $[t_1,t_2]$ to be the first and the last 84 hours, respectively. The record matrix for each student is as follow.
\[
\begin{array}{cc}
\mathbb{R}(1)=\left[
\begin{array}{cc}
\sfrac{1}{2.08} & \sfrac{1}{4.65} \\
\sfrac{1}{6.02} & \sfrac{1}{2.95}
\end{array}\right]&
\mathbb{R}(2)=\left[
\begin{array}{cc}
\sfrac{1}{4.4} & \sfrac{1}{4.3} \\
\sfrac{1}{4.0} & \sfrac{1}{4.5}
\end{array}\right]    \\ \\
\mathbb{R}(3)=\left[
\begin{array}{cc}
\sfrac{1}{8.05} & \sfrac{1}{1.11} \\
\sfrac{1}{6.05} & \sfrac{1}{15.49}
\end{array}\right]
&
\mathbb{R}(4)=\left[
\begin{array}{cc}
\sfrac{1}{2.6} & \sfrac{1}{3.3} \\
\sfrac{1}{3.5} & \sfrac{1}{3.5}
\end{array}\right]
\end{array}
\]

%\[ 
%\begin{array}{cc}
%\mathbb{R}(1,1)=[1.3,0]
%& \mathbb{R}(1,2)=[0,2.5]
%\\
%\mathbb{R}(2,1)=[0,0]
%& \mathbb{R}(2,2)=[2.7,1.5]
%\\ 
%\mathbb{R}(3,1)=[0.6,0.3]
%& \mathbb{R}(3,2)=[0.7,0] 
%\end{array}
%\]

The movement pattern of the node set $\{n_1,n_2,n_3\}$ can be extracted from the movement records by equation~\ref{eq:pattern}. The accumulation of all the records is calculated as follows:
\begin{multline*}
\sum_{i=1}^{i=3}\sum_{j=1}^{j=2}\mathbb{R}(i,j) = \\\left[\frac{1}{2.08}+\frac{1}{4.4}+\frac{1}{8.05}+\frac{1}{6.02}+\frac{1}{4.0}+\frac{1}{6.05},\right.\\
\left.\frac{1}{4.65}+\frac{1}{4.3}+\frac{1}{1.11}+\frac{1}{2.95}+\frac{1}{4.5}+\frac{1}{15.49}\right]\\= [1.414,1.974]
\end{multline*}
We set $\updelta=0.95$ in equation~\ref{eq:extract}, and according to definition~\ref{def:function} we get
\[
\frac{\updelta}{m}\sum_{i=1}^{i=m}v_i=\frac{0.95}{2}\times(1.414+1.974)\approx 1.609
\] 
which is more than $1.414$ and is less than $1.974$, so that we have
\[
\mathcal{P}(\{n_1,n_2,n_3\},[t_0,t_2])=[0,1] 
\]
The movement pattern $\mathcal{P}$ indicates that the frequently visited location(s) set for the node set $\{n_1,n_2,n_3\}$ during the time interval $[t_0,t_2]$ is $A(\{n_1,n_2,n_3\},[t_0,t_2])=\{a_2\}$.
Note that there are totally $2^{(n-1)}-1$ non-empty subsets of $\overline{N}-{n_d}$, and each can be viewed as a entirety with some internal cooperation to delivery the message to $n_d$. We extract all the movement patterns for all non-empty subsets of $\{n_1,n_2,n_3\}$ and that of the node $n_4$. The results are compiled as follows.

\resizebox{\linewidth}{!}{
\begin{tabular}{ll}
 & \\
$\mathcal{P}(\{n_1\},[t_0,t_2])=[1,0]$ & $\mathcal{P}(\{n_1,n_2\},[t_0,t_2])=[1,0]$ \\
$\mathcal{P}(\{n_2\},[t_0,t_2])=[1,1]$ & $\mathcal{P}(\{n_1,n_3\},[t_0,t_2])=[0,1]$ \\
$\mathcal{P}(\{n_3\},[t_0,t_2])=[0,1]$ & $\mathcal{P}(\{n_2,n_3\},[t_0,t_2])=[0,1]$ \\
$\mathcal{P}(\{n_4\},[t_0,t_2])=[1,1]$ & $\mathcal{P}(\{n_1,n_2,n_3\},[t_0,t_2])=[0,1]$ \\
 &
\end{tabular}
}

Respectively, the corresponding sets of locations are listed as following.

\resizebox{\linewidth}{!}{
\begin{tabular}{ll}
 & \\
$A(\{n_1\},[t_0,t_2])=\{a_1\}$ & $A(\{n_1,n_2\},[t_0,t_2])=\{a_1\}$ \\
$A(\{n_2\},[t_0,t_2])=\{a_1,a_2\}$ & $A(\{n_1,n_3\},[t_0,t_2])=\{a_2\}$ \\
$A(\{n_3\},[t_0,t_2])=\{a_2\}$ & $A(\{n_2,n_3\},[t_0,t_2])=\{a_2\}$ \\
$A(\{n_4\},[t_0,t_2])=\{a_2\}$ & $A(\{n_1,n_2,n_3\},[t_0,t_2])=\{a_1,a_2\}$ \\
 &
\end{tabular}
}

\subsection{Analysis of Two Key Properties}
\label{sec:Analysis}
\subsubsection{Delivery Probability}
Assume that the movement pattern of the node(s) set $N$ during $[t_0,t_0+\tau_l]$ is $\mathcal{P}(N,[t_0,t_0+\tau_l])$ and the movement pattern of the destination node $n_d$ is $\mathcal{P}(n_d,[t_0,t_0+\tau_l])$, then we denote the common visited location(s) set by $A$, where we have $A=A(N,[t_0,t_0+\tau_l])\bigcap A((n_d,[t_0,t_0+\tau_l]))$. 
The basic idea is to view the relay node(s) set $N$ as an entirety to cooperatively deliver the message.
And the message can be delivered to the destination node $n_d$ by the node set $N$ if there exists an location $a_j$, where $\exists n_i\in N$ and $n_d$ encounter. In addition, if we denote the arrive time of $n_i$ and $n_d$ by the random variables $T_{i,j}$ and $T_{d,j}$ respectively, then they must satisfy $T_{i,j}<T_{d,j}<\tau_l$, i.e. $n_i$ comes to the location $a_j$ before $n_d$ and not exceeding the deadline $\tau_l$. For convenience, we let the current time point to be $t_0=0$. 
Besides, as stated in section~\ref{sec:preliminaries}, the time interval that each node visits an location follows an exponential distribution. From equation~\ref{eq:meeting_interval}, we can calculate the exponential distribution parameter of that node $n_i$ visiting location $a_j$ by
\[
\lambda_{i,j} = \sfrac{1}{M_{i,j}}
\]

For the case that $A$ is empty set, we set the related delivery probability as zero, and the expected delay to reach other location as $\infty$. For the case that $A$ is non-empty set,  we can estimate the direct delivery probability of $n_i$ to the destination node $n_d$ by equation~\ref{eq:probability}. Then the message delivery probability for the set $N$ can be estimated by equation~\ref{eq:probability2}. 

\resizebox{0.9\linewidth}{!}{
\begin{tabular}{l}
\parbox{\linewidth}{
\begin{multline}
P_{i,d}=1-\prod_{a_j\in A}\left(1-P(T_{i,j}<T_{d,j}<\tau_l) \right)\\
=1-\prod_{a_j\in A}\left(1-\int _{0}^{\tau _l}\!\!\!\!\int _{t_{i,j}}^{\tau _l}f_{T_{i,j},T_{d,j}}\left(t_{i,j},t_{d,j}\right)dt_{d,j}dt_{i,j}\right)\\
=1-\prod_{a_j\in A}\left(1-\int _{0}^{\tau _l}\!\!\!\!\int _{t_{i,j}}^{\tau _l}f_{T_{i,j}}(t_{i,j})f_{T_{d,j}}(t_{d,j}) dt_{d,j}dt_{i,j}\right)\\
=1-\prod_{a_j\in A}\left(1-\int _{0}^{\tau _l}\!\!\!\!\int _{t_{i,j}}^{\tau _l}\lambda _{d,j} \lambda _{i,j} e^{-\left(\lambda _{d,j}t_{d,j}+\lambda _{i,j}t_{i,j}\right)}dt_{d,j}dt_{i,j}\right)\\
=1-\prod_{a_j\in A}\left(1-\frac{\lambda _{i,j} \left(1-e^{-\tau _l \left(\lambda _{d,j}+\lambda _{i,j}\right)}\right)}{\lambda _{d,j}+\lambda _{i,j}} \right.\\ 
-\left.\left(e^{\tau _l \lambda _{i,j}}-1\right) e^{-\tau _l \left(\lambda _{d,j}+\lambda _{i,j}\right)}\right)
\label{eq:probability}
\end{multline}
}
\end{tabular}
}
\resizebox{0.9\linewidth}{!}{
\begin{tabular}{l}
\parbox{\linewidth}{
\begin{multline}
P_{N,d}=1-\prod_{n_i\in N}(1-P_{i,d}) \\
=1-\prod_{n_i\in N}\prod_{a_j\in A}\left(1-\frac{\lambda _{i,j} \left(1-e^{-\tau _l \left(\lambda _{d,j}+\lambda _{i,j}\right)}\right)}{\lambda _{d,j}+\lambda _{i,j}}\right. \\
-\left.\left(e^{\tau _l \lambda _{i,j}}-1\right) e^{-\tau _l \left(\lambda _{d,j}+\lambda _{i,j}\right)}\right)
\label{eq:probability2}
\end{multline}
}
\end{tabular}
}

Specially, when we set the time-to-live value for each message to be infinity, i.e. $\tau _l = \infty$, we have
\resizebox{0.9\linewidth}{!}{
\begin{tabular}{l}
\parbox{\linewidth}{
\begin{multline}
P_{i,d} = 1-\\
\prod_{a_j\in A}\left(1-\int _0^{\infty }\!\!\!\!\int _{t_{i,j}}^{\infty }\lambda _{d,j} \lambda _{i,j} e^{-\left(\lambda _{d,j} t_{d,j}+\lambda _{i,j} t_{i,j}\right)}dt_{d,j}dt_{i,j} \right)\\
=1-\prod_{a_j\in A}\left(1-\frac{\lambda _{i,j}}{\lambda _{d,j}+\lambda _{i,j}}\right)
\label{eq:probability_infty}
\end{multline}
}
\end{tabular}
}

and for $P_{N,d}$ we have

\resizebox{0.9\linewidth}{!}{
\begin{tabular}{l}
\parbox{\linewidth}{
\begin{multline}
P_{N,d} = 1-\prod_{n_i\in N}\prod_{a_j\in A}\left(1-\frac{\lambda _{i,j}}{\lambda _{d,j}+\lambda _{i,j}}\right) \\ 
=1-\prod_{n_i\in N}\prod_{a_j\in A}\frac{\lambda _{d,j}}{\lambda _{d,j}+\lambda _{i,j}}
\label{eq:probability2_infty}
\end{multline}
}
\end{tabular}
}

We assume that the message time-to-live $\tau _l=\infty$ and the destination node is $n_4$ in the network instance shown in \figurename~\ref{fig:example1}. The frequently visited locations for $N=\{n_1,n_2,n_3\}$ is $A(N,[t_0,t_3])=\{a_2\}$ and that for $n_4$ is $A(n_4,[t_0,t_3])=\{a_1,a_2\}$, so that we get the commonly visited location(s) set $A=\{a_2\}$.
The average meeting interval time for each node-location pair can be obtained by equation~\ref{eq:meeting_interval}.
\resizebox{\linewidth}{!}{
\begin{tabular}{l}
\parbox{\linewidth}{
\[
\begin{array}{cccc}
M_{1,1}=4.05 & M_{2,1}=4.2 & M_{3,1}=7.05 & M_{4,1}=3.05 \\
M_{1,2}=3.8 & M_{2,2}=4.4 & M_{3,2}=8.3 & M_{4,2}=3.4
\end{array}
\]
}
\end{tabular}
}

And then we can estimate the delivery probability for the node set $N$ as follows:

\resizebox{0.9\linewidth}{!}{
\begin{tabular}{l}
\parbox{\linewidth}{
\begin{multline*}
P_{\{n_1,n_2,n_3\},4}=\\
1-\left(\frac{\lambda_{4,2}}{\lambda_{1,2}+\lambda_{4,2}}\middle)\middle(\frac{\lambda_{4,2}}{\lambda_{2,2}+\lambda_{4,2}}\middle)\middle(\frac{\lambda_{4,2}}{\lambda_{3,2}+\lambda_{4,2}}\right) \\
= 1-\left(\frac{M_{1,2}}{M_{1,2}+M_{4,2}}\middle)\middle(\frac{M_{2,2}}{M_{2,2}+M_{4,2}}\middle)\middle(\frac{M_{3,2}}{M_{3,2}+M_{4,2}}\right) \\
= 0.789
\end{multline*}
}
\end{tabular}
}

\subsubsection{Expected Delay}

Besides the delivery probability, another property to explore is the node's expected delay before moving to other frequently visited locations. Assume that the frequent visited location(s) set of $n_i$ is $A_i$, then we let the random variable $D_i$ denote the minimum expected delay for $n_i$ to arrive at another location $a_j\in A_i$. Formally we have $D_i=\min\{T_{i,1},T_{i,2},\ldots,T_{i,|A_i|}\}$, and consequently we get the  probability density function of $D_i$ as

\[f_{D_i}(t)=\sum_{a_k\in A_i}\lambda_{i,k}\prod_{a_h\in A_i}e^{-\lambda_{i,h}t}\]

Thus we have

\begin{multline}
\label{eq:delay}
E[D_i]=\int_{-\infty}^{\infty}tf_{D_i}(t)dt \\
=\sum_{a_k\in A_i}\int_{0}^{\infty}t\lambda_{i,k}\prod_{a_h\in A_i}e^{-\lambda_{i,h}t}dt \\
=\sum_{a_k\in A_i}\frac{\lambda_{i,k}}{\left(\sum_{a_k\in A_i}\lambda_{i,h}\right)^2}\\
=\frac{1}{\sum_{a_k\in A_i}\lambda_{i,k}}\\
\end{multline}

For the example shown in \figurename~\ref{fig:example1}, we can calculate $E[D_i]$ for node $n_1$, $n_2$ and $n_3$ respectively according to equation~\ref{eq:delay}.
\[
\begin{array}{c}
E[D_1]=\sfrac{1}{\lambda_{1,1}}=M_{1,1}=4.05\\
E[D_2]=\frac{1}{\lambda_{2,1}+\lambda_{2,2}}=\frac{M_{2,1}M_{2,2}}{M_{2,1}+M_{2,2}}=2.15\\
E[D_3]=\sfrac{1}{\lambda_{3,2}}=M_{3,2}=8.3 
\end{array}
\]

For any pair of nodes $n_i$ and $n_j$, if $E[D_i]<E[D_j]$, then we can conclude that $n_i$ is expected to arrive at some new locations sooner, which means more connectivity opportunities to other nodes. In the intermittent network scenarios such as DTNs, when the multi-copy strategy is employed, it is vital to quickly spread the message over the network. Due to this reason, the expected delay $E[D]$ is a valuable metric for enhancing the routing performance. In Local-MPAR, we use $E[D]$ to evaluate the suitability of a node to be the active node, which is the only node that can change the state of the network for a certain message. In Tabu-MPAR, $E[D]$ is used as the utility to asymmetrically spraying a given number of replicas. In the example shown in \tablename~\ref{tab:example1}, node $n_2$ is the most suitable node to work as the active node, since it is expected to take the message to a new location more quickly than others.

\subsection{Problem Formalization}
\label{sec:formalization}
The routing objective in DTNs can be various. Basically, there are two primary metrics for routing performance evaluation, message delivery ratio and message average delay. Though the end-to-end delay should be viewed as a very important metric for optimization in some applications emphasizing the timeliness of messages, the routing performance is fundamentally based on an acceptable delivery ratio. From this point, a high delivery ratio is the guarantee of a good routing performance for any scenarios of DTNs. Based on this, the goal of our routing is to maximize the message delivery ratio for each generated message. Specifically, the goal is to select a subset $N\subseteq\overline{N}$ to make $P_{N,d}\geq P_{N',d}$ hold for $\forall N'\subseteq\overline{N}$, where $n_d$ is the destination node of the message.

For the example in \figurename~\ref{fig:example1}, we can get the estimated delivery probability of all non-empty subsets of $\{n_1,n_2,n_3\}$ as we do for it in section~\ref{sec:Analysis} by equation~\ref{eq:probability2_infty}:
\[
\begin{array}{ll}
P_{\{n_1\},4}=0.430 & P_{\{n_1,n_2\},4}=0.670  \\
P_{\{n_2\},4}=0.673 & P_{\{n_2,n_3\},4}=0.600  \\
P_{\{n_3\},4}=0.291 & P_{\{n_1,n_3\},4}=0.626  \\
P_{\{n_1,n_2,n_3\}}=0.789\\
\end{array}
\]

The optimal node(s) set is $\{n_1,n_2,n_3\}$ in this example. Notice that the estimated probability is not in direct proportion to the number of nodes holding the message. The delivery probability $P_{N,d}$ is focused on the node set $N$, which is viewed as an entirety during the routing process. It is possible to have that $P_{N,d}<P_{N',d}$ where $N'\subset N$. Actually, in the example shown in \figurename~\ref{fig:example1}, the delivery probability of the set $\{n_2,n_3\}$ is lower than that of the set $\{n_2\}$. If ignoring the constraint of buffer and energy resources, the routing strategy that broadcast each message over the network achieves the best performance, which means that a larger relay nodes may leads to a better (at least not worse) delivery performance. However, this ideal situation hardly exist in real network scenarios. So we tend to select the minimal node set that has a suitable movement pattern so as to maximize the estimated co-delivery probability for this node set. We can see from the example that if we treat the node set holding the same message as an entirety, then the estimated delivery probability is influenced by both the scale of the selected set and the movement pattern extracted from the records of this node set. Before formally defining our problem, firstly we give the definition of the optimal set $N_{opt}$.
\begin{definition} The optimal set $N_{opt}$\\
The optimal set $N_{opt}$ is a set of nodes in the network, where we have
\begin{displaymath}
N_{opt}= \min\{\mathrel{\mathop{\argmax}\limits_{N\subseteq \overline{N}}}P_{N,d}\}
\end{displaymath}
\end{definition}
The optimal set is the minimal node set that can achieve the highest delivery probability by cooperatively relay the message to the destination node, based on which we can formally define the problem.
\begin{definition}$N_{opt}$ Search Problem.\\
Let $f(\mathbf{x})$ to be the objective function and $g(\mathbf{x})$ the constraint function, where $\mathbf{x}=\{x_1,x_2,\ldots,x_n\}$ is the decision variable. And we have 

\resizebox{0.9\linewidth}{!}{
\begin{tabular}{l}
\parbox{\linewidth}{
\begin{multline*}
f(\mathbf{x})=\sum_{i=1}^{i=n}x_i\log\left(\prod_{a_j\in A}\left(1-\frac{\lambda _{i,j} \left(1-e^{-\tau _l \left(\lambda _{d,j}+\lambda _{i,j}\right)}\right)}{\lambda _{d,j}+\lambda _{i,j}}\right. \right.\\
\left.-\left.\left(e^{\tau _l \lambda _{i,j}}-1\right) e^{-\tau _l \left(\lambda _{d,j}+\lambda _{i,j}\right)}\right)\right)
\end{multline*}
}
\end{tabular}
}
\[
g(\mathbf{x})=\left|\sum_{i=1}^{i=n}x_i\right|
\]
Thus the problem can be formalized as
\[
\begin{array}{c}
\min f(\mathbf{x}) \\
s.t.\forall\mathbf{x'},~g(\mathbf{x'})-g(\mathbf{x})\geq 0, \\
\forall i,x_i\in \{0,1\}
\end{array} 
\]
And when the optimal solution $\mathbf{x}_{opt}$ is obtained, we have
\[
N_{opt}=\{n_i|x_i>0,1\leq i\leq n\}
\]
\label{def:N_opt}
\end{definition}

Notice that in the above definition, there exists a one-to-one match between $\mathbf{x}$ and $N$. Actually we have $f(\mathbf{x})=log(1-P_{N,d})$, where $f(\mathbf{x})$ is negative correlation with $P_{N,d}$. In this way, the $N_{opt}$ Search Problem is formalized as an optimization problem.

\section{Analysis for $N_{opt}$ Search Problem}
\label{sec:analysis}
In this section, we discuss the computational hardness of this problem and propose a heuristic method to approximately obtain the optimal solution. We first show that a newly added element in the input sequence can widely change the value of the objective function. For the offline algorithm, we put the analysis of the $NP-Hardness$. Finally we propose our algorithm based on the tabu search.

\subsection{Computational Hardness}
An online algorithm is one that can process its inputs piece-by-piece, i.e. it processes the input data in the sequence that they are fed to the algorithm, instead of requiring the entire input data set at the beginning. The input of our online algorithm is the movement records of all nodes in the network. We simply represent the record sequence as $\mathbb{R}_{1},\mathbb{R}_{2},\ldots,\mathbb{R}_{q}$ (each element in the sequence is a vector, e.g. $\mathbb{R}_j=\mathbb{R}(x,t_i)$ for node $n_x$ in the time slot $t_i$) and we currently only know the $1$---$k$ part of the sequence. Then what interests us is that how much the newly added element $\mathbb{R}_{k+1}$ can affect the solution, which is answered by Theorem~\ref{thm:online}. 

\begin{theorem}
For $\forall k\in[1,z)$, even the $1$---$k$ part of the sequence is known, the newly added element $\mathbb{R}_{k+1}$ can still change the movement pattern to any state with at least 1 non-zero element.
\label{thm:online}
\end{theorem}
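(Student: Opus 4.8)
The plan is to unwind the definition of the movement pattern and then exhibit, for an arbitrary admissible target, an explicit record vector that realizes it. Write $S_k=\sum_{j=1}^{k}\mathbb{R}_j=[s_1,\ldots,s_m]$ for the accumulation of the known prefix $\mathbb{R}_1,\ldots,\mathbb{R}_k$; every entry is non-negative because each record entry $\sfrac{1}{r}$ is non-negative. By equation~\ref{eq:pattern}, the pattern produced once $\mathbb{R}_{k+1}$ arrives is precisely $\mathbbm{E}(S_k+\mathbb{R}_{k+1})$, so the claim reduces to the following: for every binary target $P=[p_1,\ldots,p_m]$ having at least one entry equal to $1$, there is a non-negative vector $w=\mathbb{R}_{k+1}$ with $\mathbbm{E}(S_k+w)=P$. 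The admissible record vectors fill all of $[0,\infty)^m$, since an arbitrarily short visit interval yields an arbitrarily large entry, and this freedom is what the construction will exploit.

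First I would split the target into its on-set $O=\{j:p_j=1\}$ (non-empty by hypothesis) and off-set $F=\{j:p_j=0\}$, and propose the construction that lifts every on-coordinate to a single large common value while leaving the off-coordinates untouched: for $j\in O$ put $w_j=C-s_j$ (so that $u_j:=s_j+w_j=C$), and for $j\in F$ put $w_j=0$ (so that $u_j=s_j$). Taking $C\geq\max_j s_j$ keeps $w$ non-negative. Under this choice the threshold that $\mathbbm{E}$ compares against, namely $\theta(C)=\frac{\updelta}{m}(|O|\,C+\sum_{j\in F}s_j)$, grows linearly in $C$.

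Then I would verify the two families of inequalities that encode $\mathbbm{E}(u)=P$. For $j\in O$ the condition $C\geq\theta(C)$ rearranges to $C(1-\frac{\updelta|O|}{m})\geq\frac{\updelta}{m}\sum_{j\in F}s_j$; since $|O|\leq m$ and $0<\updelta<1$ force $\updelta|O|/m<1$, the coefficient on the left is strictly positive and the inequality holds for all large enough $C$. For $j\in F$ the condition $s_j<\theta(C)$ holds once $C$ is large, because $\theta(C)\to\infty$ while each $s_j$ is fixed and $F$ is finite. Choosing $C$ above both of these finitely many thresholds simultaneously yields $\mathbbm{E}(S_k+w)=P$. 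The degenerate case $F=\varnothing$ (the all-ones target) is even easier, since $u_j\equiv C$ gives $\theta=\updelta C<C$ and every coordinate is marked.

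The crux, and the only genuinely non-obvious point, is that $w$ may only raise coordinates and never lower them, so a coordinate of $S_k$ that is already large cannot be switched off by direct manipulation. The resolution is that $\mathbbm{E}$ thresholds against the total mass: inflating the on-coordinates drags $\theta(C)$ up past every fixed off-coordinate and turns them off without touching them, while the positivity of $1-\updelta|O|/m$ is exactly what prevents this inflation from simultaneously dropping the on-coordinates below their own threshold. This is also where the restriction to targets with at least one non-zero entry is forced, since for any non-negative input $\mathbbm{E}$ always marks the maximal coordinate and the all-zero pattern is therefore unreachable.
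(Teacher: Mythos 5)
Your proof is correct, and it follows the same basic blueprint as the paper's own argument: reduce the claim to the two families of threshold inequalities that define $\mathbbm{E}$ on the accumulated vector, set the off-coordinates of the new record to zero, and make the on-coordinates large. Where you genuinely differ is in how "large" is parameterized and verified. The paper fixes the total mass of the new record at $\max_i\{w_i\}+\epsilon$, where $w_i=\frac{m}{\delta}S_i^{k}-\sum_{j}S_j^{k}$, imposes the per-coordinate lower bound $\frac{\delta}{m}\bigl(\max_i\{w_i\}-\min_i\{w_i\}+\epsilon\bigr)$ on each on-coordinate, and then solves for the values of $\epsilon$ making this system feasible; this requires min/max bookkeeping and, as printed, contains slips (a ``$+1$'' where ``$+\epsilon$'' is meant in its equation~\ref{eq:q1}, and a final bound $\epsilon>\frac{\delta}{\delta-1}\min\{w_i\}-\max\{w_i\}$ whose derivation is compressed). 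You instead equalize all accumulated on-coordinates at one common level $C$ and send $C\to\infty$: the on-inequalities collapse to $C\bigl(1-\frac{\delta|O|}{m}\bigr)\geq\frac{\delta}{m}\sum_{j\in F}s_j$, the off-inequalities to $s_j<\theta(C)$ with $\theta(C)\to\infty$, and a single $C$ above finitely many thresholds settles everything. This buys transparency: the one positivity fact $1-\frac{\delta|O|}{m}>0$ is isolated as the crux, playing exactly the role of the paper's $\epsilon$-feasibility condition, and a monotonicity-in-$C$ argument replaces the paper's algebraic juggling. You also make explicit a point the paper leaves unstated, namely why the all-zero target must be excluded: for any nonzero non-negative input, $\mathbbm{E}$ always marks a maximal coordinate, since $\frac{\delta}{m}\sum_i x_i\leq\delta\max_i x_i<\max_i x_i$.
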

\begin{proof}
See Appendix~A.
\end{proof}

Then we show that even all movement records are available for any node, there is not an efficient polynomial algorithm yet for solving the $N_{opt}$ problem.

\begin{theorem}
Even the global information is known in advance, i.e., the value of $\lambda_{i,j}$ is available for any node $n_i$, the $N_{opt}$ Search Problem is still NP-Hard. 
\label{thm:npc}
\end{theorem}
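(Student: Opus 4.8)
The plan is to establish NP-Hardness by a polynomial-time reduction from a known NP-Hard problem. Looking at the structure of the $N_{opt}$ Search Problem, the objective $f(\mathbf{x})$ is a sum over selected nodes of log-terms, and the constraint $g(\mathbf{x})$ measures the cardinality of the selected set. The crucial observation from the problem's own Definition~\ref{def:N_opt} and the surrounding discussion is that $P_{N,d}$ is \emph{not} monotone in $N$: adding a node can lower the co-delivery probability, and we seek the \emph{minimal} set achieving the maximum. This non-monotonicity is exactly what makes the combinatorial search hard, and it signals that the reduction should come from a subset-selection problem where the right combination of elements must be chosen. I would target a reduction from \textbf{Minimum Set Cover} (or equivalently \textbf{Exact Cover}), since the commonly-visited-location set $A=A(N,[t_0,t_0+\tau_l])\cap A(n_d,[t_0,t_0+\tau_l])$ is itself determined by which nodes are selected, giving a natural ``coverage'' structure over locations.

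First I would set up the reduction instance. Given a Set Cover instance with universe $U=\{a_1,\ldots,a_m\}$ and a family of subsets $S_1,\ldots,S_n\subseteq U$, I would construct a network with $m$ locations and $n+1$ nodes, designating $n_d$ as the destination. For each candidate node $n_i$ I would engineer its rates $\lambda_{i,j}$ (equivalently, its movement record $\mathbb{R}(i)$, choosing the visit intervals $r^i_{k,j}$ and the threshold $\updelta$) so that the extracted movement pattern $\mathcal{P}(\{n_i\},\cdot)$ has support exactly $S_i$; that is, $n_i$ frequently visits location $a_j$ iff $a_j\in S_i$. I would make $n_d$ frequently visit \emph{all} locations, so that for any selected set $N$ the common set $A$ equals the union $\bigcup_{n_i\in N}S_i$ (I must check the extraction function $\mathbbm{E}$ in Definition~\ref{def:function} behaves as intended under accumulation, which is the delicate bookkeeping step). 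Because $f(\mathbf{x})=\log(1-P_{N,d})$ and each factor $\frac{\lambda_{d,j}}{\lambda_{d,j}+\lambda_{i,j}}$ lies strictly in $(0,1)$, maximizing $P_{N,d}$ forces $A$ to cover $U$; the cardinality constraint $g(\mathbf{x})$ then drives the solver to use the \emph{fewest} nodes achieving full coverage. Thus an optimal $N_{opt}$ yields a minimum set cover and vice versa, completing the reduction.

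The main obstacle I anticipate is controlling the coupling between the continuous probabilities and the discrete coverage condition. The objective is a product of per-node, per-location factors, not a clean indicator of ``is $U$ covered'', so I must choose the rate parameters so that (i) a configuration covering all of $U$ strictly dominates any configuration missing even one location in the value of $P_{N,d}$, and (ii) among all covering configurations the minimizer of $g$ is selected by the constraint. Concretely I would pick the $\lambda_{i,j}$ for $a_j\in S_i$ large (short visit interval) so each contributing factor is bounded well away from $1$, making every newly covered location deliver a fixed multiplicative gain, while keeping $\lambda_{d,j}$ fixed; a careful gap argument then shows the coverage term dominates any cardinality effect. I would verify this gap is of polynomial bit-size so the whole construction is genuinely polynomial-time. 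Once the equivalence of optimal solutions is argued, NP-Hardness of $N_{opt}$ Search follows immediately from that of Set Cover, and the theorem is proved.
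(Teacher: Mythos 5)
Your reduction is genuinely different from the paper's, but it contains a structural flaw that is not mere bookkeeping. The paper reduces from \emph{Subset Sum}: it gives every node the identical movement record, so the pattern is $[1,1,\ldots,1]$ and the common location set $A$ is fixed independently of which subset is chosen, and it then exploits the fact that $\log(1-P_{N,d})$ is a \emph{sum} of per-node terms, so hitting a target probability corresponds to hitting a target sum. Your Set Cover route collides with exactly this additive structure. For $\tau_l=\infty$ one has $\log(1-P_{N,d})=\sum_{n_i\in N}\sum_{a_j\in A}\log\bigl(\lambda_{d,j}/(\lambda_{d,j}+\lambda_{i,j})\bigr)$, i.e.\ the objective is additive over node--location incidences. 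With the uniform rates you propose ($\lambda_{i,j}$ equal to some large $\lambda$ for $a_j\in S_i$ and zero otherwise, $\lambda_{d,j}$ constant), $P_{N,d}$ becomes a function of the single integer $\sum_{n_i\in N}|S_i\cap A|$: covering one location many times is indistinguishable from covering many distinct locations once. The product form simply cannot see the union $\bigcup_{n_i\in N}S_i$, so your key claim that ``maximizing $P_{N,d}$ forces $A$ to cover $U$'' has no mechanism behind it, and no choice of gap parameters repairs this.

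Even granting the union property, the tie-break structure of the problem defeats the reduction. By the definition of $N_{opt}=\min\{\argmax_{N}P_{N,d}\}$, cardinality is minimized only among \emph{exact maximizers} of $P_{N,d}$. Every node whose $S_i$ meets $A$ contributes factors strictly less than $1$, so adding it strictly increases $P_{N,d}$; hence under your construction the unique maximizer is the full node set, and the cardinality criterion never gets to compare the small covers you care about --- $N_{opt}$ is ``all nodes'' regardless of the Set Cover instance. Your step ``(ii) among all covering configurations the minimizer of $g$ is selected by the constraint'' misreads this: $g$ is not a budget ranging over covering sets, it only breaks ties at the exact optimum of $P_{N,d}$. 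Finally, the union claim itself fails in general for the extraction function $\mathbbm{E}$ of Definition~\ref{def:function}: the threshold $\frac{\updelta}{m}\sum_{j}x_j$ is relative, so a location visited by a single selected node can drop below threshold once many records are accumulated; this is repairable only if you are allowed to choose $\updelta<1/n$ as part of the instance. These are precisely the pitfalls the paper's Subset Sum construction is engineered to avoid, by freezing $A$ and matching the sum-of-logarithms structure of the objective to the sum structure of SSP.
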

\begin{proof}
See Appendix~B.
\end{proof}

\subsection{The Trap of Local Optimum}
\label{sec:The Trap of Local Optimum}
A direct approach is to employ a local search algorithm to find the solution. However, this may lead us to the trap of local optimum. For the example shown in \figurename~\ref{fig:example1}, assume that node $n_2$ generates a message for $n_4$, then \tablename~\ref{tab:local_search} shows the calculation process of a local search algorithm.
\begin{table}[hbt]
  \caption{Local search process for the example in \figurename~\ref{fig:example1}}
  \centering
  \resizebox{\linewidth}{!}{
  \begin{tabular}{|c|c|c|c|c|c|c|}
    \hline
    %empty 
    & \multicolumn{3}{c|}{current solution}
    & \multicolumn{3}{c|}{next options}
    \\    
    \cline{2-7}
    
    %empty 
    & $\textbf{x}^{now}$ & N & $P_{N,4}$ & $\textbf{x}^{next}$ & N' & $P_{N',4}$ \\
    \hline
    
    %%%%step 1
    \multicolumn{1}{|c|}{\multirow{3}{*}{\textbf{Step 1}}}
    &\multicolumn{1}{c|}{\multirow{3}{*}{$[0,1,0]$}}
    &\multicolumn{1}{c|}{\multirow{3}{*}{$\{n_2\}$}}
    &\multicolumn{1}{c|}{\multirow{3}{*}{$0.673$}}
    &$[0,1,\underline{1}]$ &$\{n_2,n_3\}$ & $0.600$ 
    \\
    
    \multicolumn{1}{|c|}{}
    &\multicolumn{1}{c|}{}
    &\multicolumn{1}{c|}{}
     &\multicolumn{1}{c|}{}
    &$[0,\underline{0},0]$ & $\phi$ & $0$
    \\
    
    \multicolumn{1}{|c|}{}
    &\multicolumn{1}{c|}{}
    &\multicolumn{1}{c|}{}
        &\multicolumn{1}{c|}{}
    &$[\underline{1},1,0]$ & $\{n_1,n_2\}$ & $0.670$
    \\
    
    \hline
        %%%%stop
    \multicolumn{1}{|c|}{\multirow{3}{*}{\textbf{Stop}}}
    &\multicolumn{1}{c|}{\multirow{3}{*}{$[0,1,0]$}}
    &\multicolumn{1}{c|}{\multirow{3}{*}{$\{n_2\}$}}
    &\multicolumn{1}{c|}{\multirow{3}{*}{$0.673$}}
    &\multicolumn{1}{c|}{\multirow{3}{*}{---}} 
    &\multicolumn{1}{c|}{\multirow{3}{*}{---}}  
    &\multicolumn{1}{c|}{\multirow{3}{*}{---}}   
    \\
    
    \multicolumn{1}{|c|}{}
    &\multicolumn{1}{c|}{}
    &\multicolumn{1}{c|}{}
     &\multicolumn{1}{c|}{}
    & &  & 
    \\
    
    \multicolumn{1}{|c|}{}
    &\multicolumn{1}{c|}{}
    &\multicolumn{1}{c|}{}
        &\multicolumn{1}{c|}{}
    & &  & 
    \\
    \hline    
  \end{tabular} }
  \label{tab:local_search}
\end{table}

In \tablename~\ref{tab:local_search}, only the node $n_2$ holds the message at the beginning, and the delivery probability of the set $N=\{n_2\}$ is $0.673$, which is higher than any of the next possible choice. So the algorithm mistakes $\textbf{x}^{now}$ as the optimal solution, while the real optimal one is $[1,1,1]$ that has the probability $0.789$. In this case, we have fallen into the trap of local optimum, and we cannot jump out of it by the local search algorithm.

\subsection{Heuristic Search in a Hypercube Space}
Theorem \ref{thm:npc} shows that there would not be any efficient polynomial algorithm for solving the $N_{opt}$ Search Problem and section~\ref{sec:The Trap of Local Optimum} shows that the local search algorithm may lead us to the trap of local optimum, which drive us to resort to some intelligent search schemes. In this paper, we leverage the tabu search algorithm to deal with the $N_{opt}$ problem. First of all we show the framework of the tabu search method in Algorithm~\ref{alg:framework}.

The tabu search process can be mainly divided into two steps, both of which require several basic rules and data structures shown in the notation table in algorithm~\ref{alg:framework}. The formal definitions of these notations will come up in the following context. The whole search process is to change the feasible solution step by step in the solution space, meanwhile evaluating the value of function $p$ and trying to obtain the optimal one. Step 1 is the initial step of the tabu search, where we choose an initial feasible solution in the solution space $S$ and represent it as $\textbf{x}^{now}$. Then we set the current best solution $\textbf{x}^{best}$ to be $\textbf{x}^{now}$. At this moment, there is no tabu records in the tabu table $\mathcal{T}$, as shown in line 3 of step 1. 

In step 2, if the stop rule $\mathcal{E}$ is satisfied, then the search procedure ends and returns the current best solution $\textbf{x}^{best}$ as the final result. Otherwise, we choose a subset from $\textbf{x}^{now}$'s neighborhoods $\mathcal{N}(\textbf{x}^{now})$ by referring to the current tabu table $\mathcal{T}$ and the pre-defined aspiration criteria $\mathcal{A}$. Then the algorithm search in the subset C and choose the solution $textbf{x}$ with maximal value of function $p$. In each iteration of step 2, we update the best record solution $\textbf{x}^{best}$. Finally we update the corresponding tabu element $\varphi$ in the tabu table $\mathcal{T}$ using the pre-assigned tabu length $\mathcal{L}(\varphi)$.

\begin{algorithm}[!tbp] %算法的开始
\renewcommand{\algorithmicensure}{\textbf{Step}}
\caption{Tabu Search Framework.} %算法的标题
\label{alg:framework} %给算法一个标签，这样方便在文中对算法的引用
\begin{algorithmic}[1] %这个1 表示每一行都显示数字
\REQUIRE ~\\
\begin{center}
\resizebox{0.8\linewidth}{!}{
\begin{tabular}{|c|c|}
\hline
\textbf{notation} & \textbf{meaning}\\
\hline
$p$ & Evaluation function\\
$\mathcal{N}$ & Neighborhood of a solution\\
$\mathcal{S}$ & Solution space\\
$\mathcal{E}$ & Stop rules \\
$\mathcal{T}$ & tabu table $\left\{\begin{tabular}{c|c} 
               $\varphi$ & Tabu element \\
               \hline
               $\mathcal{L}$ & Tabu length\\
                \end{tabular}\right.$\\
$\mathcal{A}$ & Aspiration criteria\\
\hline
\end{tabular}}
\end{center}
\end{algorithmic} %这个1 表示每一行都显示数字
\begin{algorithmic}[1] %这个1 表示每一行都显示数字
\ENSURE \textbf{1:}
\STATE choose an initial feasible solution $\textbf{x}^{now}$ in $\mathcal{S}$
\STATE set the current best solution $\textbf{x}^{best}\leftarrow\textbf{x}^{now}$
\STATE set the tabu table $\mathcal{T}\leftarrow\emptyset$
\end{algorithmic} %这个1 表示每一行都显示数字
\begin{algorithmic}[1] %这个1 表示每一行都显示数字
\ENSURE \textbf{2:}
\IF{$\mathcal{E}(\textbf{x}^{best})=\textbf{true}$}
    \RETURN $\textbf{x}^{best}$
\ENDIF
\STATE get $C$ by filtering $\mathcal{N}(\textbf{x}^{now})$ according to $\mathcal{T}$ and $\mathcal{A}$
%location 1
\STATE $\textbf{x}^{now}\leftarrow\mathrel{\mathop{\argmax}\limits_{\textbf{x}\in C}}p(\textbf{x})$
\IF{$p(\textbf{x}^{now})<p(\textbf{x}^{best})$}
    \STATE $\textbf{x}^{best}\leftarrow\textbf{x}^{now}$
\ENDIF
\STATE update $\mathcal{T}$
\STATE \textbf{goto} Step 2
\end{algorithmic}
\end{algorithm}

In the following context, we give the definitions of all the notations listed in the table in algorithm~\ref{alg:framework}.
\begin{definition}Evaluation function.\\
The evaluation function of the tabu search is represented as $p(\textbf{x})$, where
\[
p(\textbf{x})=-f(\textbf{x})
\]
\label{def:eva_func}
\end{definition}
In definition~\ref{def:eva_func}, we set the evaluation function by taking the opposite value of function $f(\textbf{x})$ in definition~\ref{def:N_opt}. Then the goal of our tabu search is to obtain the maximum value of the evaluation function. We give the definition about the neighborhood of a feasible solution in the tabu search in definition~\ref{def:neighborhood}.

\begin{definition}Neighborhood.\\
Assume that $\textbf{x}=[x_1,x_2,\ldots,x_n]$$(\forall i, x_i\in{0,1})$ is a feasible solution, then the neighborhood of $\textbf{x}$ in the tabu search is defined as follows
\begin{equation}
\mathcal{N}(\textbf{x})=\left\{~\textbf{x'}~\middle|~~\sum_{i=1}^{n}\left| x_i- x'_j\right|\leq 1\right\}
\end{equation}
\label{def:neighborhood}
\end{definition}
Definition~\ref{def:neighborhood} indicates that there is only one different element in a neighborhood solution compared to the original one. For example, if the current feasible solution $\textbf{x}$ is $[0,1,0,1]$, then its neighborhood $\mathcal{N}(\textbf{x})=\{[\underline{1},1,0,1],[0,\underline{0},0,1],[0,1,\underline{1},1],[0,1,0,\underline{0}]\}$, where each of the three vectors is a neighborhood solution for the current feasible solution $\textbf{x}$ with the underline element performed ``a bitwise NOT''. 

%Actually, the solution space can be viewed as a hyper-cube, where each vertex is a feasible solution, and the neighborhood of a solution is corresponding to the set of all its adjacent vertexes, as shown in \figurename~\ref{fig:hypercube}. 
%
%\begin{figure}[!t]
%\centering
%\includegraphics[width=3.0in]{hypercube}
%\caption{An example of hyper-cube solution space.}
%\label{fig:hypercube}
%\end{figure}

The stop rule is one of the most important rules for the tabu search, which determines when the tabu search procedure stops. The stop rule for our tabu search scheme is stated in definition~\ref{def:stop_rule}.

\begin{definition}Stop rule.\\
The stop rule $\mathcal{E}$ is defined as following
\[
\mathcal{E}(\textbf{x}^{best})=\left\{
\begin{array}{cl}
true & \textnormal{if $p(\textbf{x}^{best})$ not improved after $\theta$ steps} \\
false & \textnormal{otherwise}
\end{array}\right.
\]
\label{def:stop_rule}
\end{definition}

As stated before, the value of evaluation function $p$ is updated in every iteration of the search process. If the current $\textbf{x}^{now}$ is corresponding to a better function value $p(\textbf{x}^{now})>p(\textbf{x}^{best})$, then we update $\textbf{x}^{best}$ to $\textbf{x}^{now}$. However, if the best record value of the evaluation function $p(\textbf{x}^{best})$ has not been improved after quite a few steps, then it might be difficult to get a better solution without modifying the search algorithm itself. In this case, the search procedure stops and returns the current best record solution as the (approximate) optimal one.

The tabu table $\mathcal{T}$ is composed of two basic concepts, the tabu element $\varphi$ and its tabu length $\mathcal{L}(\varphi)$. The tabu element is the object to tabu in the search algorithm, which can be a simple variation on the solution or the variation of the evaluation function value. The tabu element in our tabu search scheme is defined in definition~\ref{def:tabu_element}.

\begin{definition}Tabu element.\\
A tabu element is a variation of the current solution vector, i.e. can be formally denoted by
\[
\varphi:\textbf{x}\rightarrow\textbf{x'}
\]
where we have
\[
\textbf{x}=(x_1,\ldots,x_{i-1},x_i,x_{i+1},\ldots,x_n)~~(i\in [0,n])
\]
and
\[
\textbf{x'}=(x_1,\ldots,x_{i-1},y_i,x_{i+1},\ldots,x_n)~~(i\in [0,n], x_i\neq y_i)
\]
\label{def:tabu_element}
\end{definition}
In fact, this tabu element is a movement from the current vertex to any adjacent vertex in the solution space, as shown in \figurename~\ref{fig:hypercube}, i.e. any movement from the current blue vertex to a red vertex can be viewed as a tabu element. Then we can define the tabu length $\mathcal{L}(\varphi)$ for each tabu element $\varphi$ in definition~\ref{def:tabu_length}.

\begin{definition}Tabu length.\\
The tabu length is denoted by $\mathcal{L}(\varphi)$, and we have
\[
\mathcal{L}(\varphi)=\lfloor T\rfloor
\]
where T is a random variable and $T\sim N(\mu,\sigma^2)$, and we set
\[
\mu=\left\{
\begin{array}{ll}
\sqrt{n}[1+p(\textbf{x'})-p(\textbf{x})] & p(\textbf{x'})>p(\textbf{x}) \\
\sqrt{n} & \textnormal{otherwise}
\end{array}
\right.
\]
\label{def:tabu_length}
\end{definition}

In definition~\ref{def:tabu_length}, the tabu length is set to be an rounded down random variable obeying the normal distribution, where the mean value $\mu$ guarantees that the variation of solution leading to a higher evaluation function value statistically has a longer tabu length.
The reason why we set the tabu length to be a random number instead of a constant one is to add some stochastic factors to our method, like what has been done in Sherwood Algorithm. The stochastic law can help us to jump out of some endless cycle of solutions, or to deal with some deliberate inputs. We establish the relationship between the mean value of $T$ and the variation of the evaluation function value $|p(\textbf{x})-p(\textbf{x'})|$. There are two different cases for the variation of the evaluation function value, $p(\textbf{x'})<p(\textbf{x})$ that indicates the value might have reached a new ``valley'' and $p(\textbf{x'})>p(\textbf{x})$ that indicates the value might have climbed on a higher mountain top. In the former case, we need a relatively large tabu length so as to jump out of the possible trap of local optimum. While in the later case, we should keep the tabu length smaller to avoid the value dropped to a deep valley again. Apparently, we have $\mu\in[\sqrt{n},2\sqrt{n}]$.

From definition \ref{def:tabu_element} and \ref{def:tabu_length}, the tabu table can be consequently defined in definition~\ref{def:tabu_table}.

\begin{definition}Tabu table.\\
The structure of the tabu table is as follows, \\
\begin{center}
$\mathcal{T}$=
  \begin{tabular}{|c|c|cc|c|c|}
  \hline
   $1$ & $2$ & $\cdots$ & $\cdots$ & $n-1$ & $n$\\
   \hline
   $t_1$ & $t_2$ & $\cdots$ & $\cdots$ & $t_{n-1}$ & $t_n$ \\
    \hline
  \end{tabular} \\
\end{center}
Assume that the tabu element is $\varphi:\textbf{x}->\textbf{x'}$, the update rule is defined as 
\[
\forall i, t_i=\left\{
\begin{array}{ll}
\mathcal{L}(\varphi) & \textnormal{if $|x_i-x'_i|=1$} \\
  0 & \textnormal{if}~t_i=0 \\
t_i-1 &  \textnormal{otherwise} \\
\end{array}
\right.
\]
\label{def:tabu_table}
\end{definition}
The tabu table $\mathcal{T}$ is a $2\times n$ table, where the first row stands for the variation position of the solution vector, and the second row indicates the corresponding tabu length for each variation position. For example, assume that the variation of solution $\varphi:[0,1,0]\rightarrow[1,1,0]$ happens, then we should update the current tabu table ($t_1,t_2,t_3 > 0$) as
\begin{center}
\begin{tabular}{|c|c|c|}
\hline
1 & 2 & 3 \\
\hline
$t_1$ & $t_2$ & $t_3$ \\
\hline
\end{tabular} $\longrightarrow$
\begin{tabular}{|c|c|c|}
\hline
1 & 2 & 3 \\
\hline
$\mathcal{L}(\varphi)$ & $t_2-1$ & $t_3-1$ \\
\hline
\end{tabular}
\end{center}
In each update operation, the corresponding variation position of the solution in this table is filled with the tabu length $\mathcal{L}(\varphi)$, and all non-zero lengths are reduced by 1. Any position in the solution vector with a non-zero length is tabued, i.e. can only be modified after passing its tabu length steps.
 
Another important rule for the tabu search algorithm is the aspiration criterion that indicates the absolvable tabu elements in the tabu table $\mathcal{T}$. The aspiration for our tabu search scheme is stated in definition~\ref{def:aspiration}
\begin{definition}Aspiration criterion.\\
The aspiration criterion is defined as following
\begin{center}
\begin{tabular}{|c|}
\hline
For $\forall$ solution $\textbf{x}$, if we have $p(\textbf{x})>p(\textbf{x}^{best})$ \\
then the solution $\textbf{x}$ can be chosen even if the \\
corresponding position is tabued in $\mathcal{T}$.\\
\hline
\end{tabular}
\end{center}
\label{def:aspiration}
\end{definition}

When all the positions are tabued in $\mathcal{T}$, the current solution vector can not move to any of its neighborhood. However if there is a neighborhood solution $\textbf{x}$ has the better evaluation function value than the current recorded best one, we absolve the corresponding tabu element and move the solution vector to $\textbf{x}$.

The tabu search process of the example in \figurename~\ref{fig:example1} is shown in \tablename~\ref{tab:tabu_search}. Like the local process illustrated in \tablename~\ref{tab:example1}, the source node is $n_2$ and thus the initial solution is $[0,1,0]$. For simplification, the tabu length $\mathcal{L}$ constants at 3 rather than being randomly chosen. Besides, the variable $\theta$ in definition~\ref{def:stop_rule} of stop rule is set to be 3.

In step 1, the tabu table $\mathcal{T}$ is empty, so all the next options are choosable. According to line 5 in algorithm~\ref{alg:framework}, we choose the solution vector $[1,1,0]$ that has the maximal evaluation function value. In step 2, the first position of the solution vector is tabued, so the next solution vector can be chosen only between $[0,0,0]$ and $[1,1,1]$. The search procedure continues until it reaches step 5, where the best recorded solution has not changed after $\theta=3$ steps, so the tabu search procedure stops. Overview the whole process, we can quickly go through the solution space and jump out of the trap of local optimum, which cannot be achieved by the local search algorithm in \tablename~\ref{tab:local_search}.

\begin{table*}[hbt]
  \caption{Tabu search process for the example in \figurename~\ref{fig:example1}}
  \resizebox{\textwidth}{!}{
  \begin{tabular}{|c|c|c|c|c|c|c|c|c|c|c|c|}
    \hline
    %empty 
    & \multicolumn{3}{c|}{current solution}
    & \multicolumn{3}{c|}{current best solution}
    & \multicolumn{1}{c|}{\multirow{2}{*}{tabu table $\mathcal{T}$}}
    & \multicolumn{4}{c|}{next options}
    \\    
    \cline{2-7}\cline{9-12}
    
    %empty 
    & $\textbf{x}^{now}$ & N & $P_{N,4}$ & $\textbf{x}^{best}$ & $N^{best}$ & $P_{N^{best},4}$ & & $\textbf{x}^{next}$ & N' & $P_{N',4}$ & status \\
    \hline
%    
    %%%%step 1
    \multicolumn{1}{|c|}{\multirow{3}{*}{\textbf{Step 1}}}
    &\multicolumn{1}{c|}{\multirow{3}{*}{$[0,1,0]$}}
    &\multicolumn{1}{c|}{\multirow{3}{*}{$\{n_2\}$}}
    &\multicolumn{1}{c|}{\multirow{3}{*}{$0.673$}}
    &\multicolumn{1}{c|}{\multirow{3}{*}{$[0,1,0]$}}
    &\multicolumn{1}{c|}{\multirow{3}{*}{$\{n_2\}$}}
    &\multicolumn{1}{c|}{\multirow{3}{*}{$0.673$}}
    &\multicolumn{1}{c|}{\multirow{3}{*}{$
                                                     \begin{tabular}{|c|c|c|}
                                                        \hline
                                                        1 & 2 & 3 \\
                                                        \hline
                                                        0 & 0 & 0 \\
                                                        \hline
                                                     \end{tabular}$}}
    &$[\underline{1},1,0]$ &$\{n_1,n_2\}$ & $0.670$ & choosable
    \\
    
    \multicolumn{1}{|c|}{}
    &\multicolumn{1}{c|}{}
    &\multicolumn{1}{c|}{}
    &\multicolumn{1}{c|}{}
    &\multicolumn{1}{c|}{}
    &\multicolumn{1}{c|}{}
    &\multicolumn{1}{c|}{}
    &\multicolumn{1}{c|}{}
    &$[0,\underline{0},0]$ & $\phi$ & $0$ & choosable
    \\
    
    \multicolumn{1}{|c|}{}
    &\multicolumn{1}{c|}{}
    &\multicolumn{1}{c|}{}
    &\multicolumn{1}{c|}{}
    &\multicolumn{1}{c|}{}
    &\multicolumn{1}{c|}{}
    &\multicolumn{1}{c|}{}
    &\multicolumn{1}{c|}{}
    &$[0,1,\underline{1}]$ & $\{n_2,n_3\}$ & $0.600$ & choosable
    \\
    \hline
    
    %%%%step 2
    \multicolumn{1}{|c|}{\multirow{3}{*}{\textbf{Step 2}}}
    &\multicolumn{1}{c|}{\multirow{3}{*}{$[1,1,0]$}}
    &\multicolumn{1}{c|}{\multirow{3}{*}{$\{n_1,n_2\}$}}
    &\multicolumn{1}{c|}{\multirow{3}{*}{$0.670$}}
    &\multicolumn{1}{c|}{\multirow{3}{*}{$[0,1,0]$}}
    &\multicolumn{1}{c|}{\multirow{3}{*}{$\{n_2\}$}}
    &\multicolumn{1}{c|}{\multirow{3}{*}{$0.673$}}
    &\multicolumn{1}{c|}{\multirow{3}{*}{$
                                                     \begin{tabular}{|c|c|c|}
                                                        \hline
                                                        1 & 2 & 3 \\
                                                        \hline
                                                        3 & 0 & 0 \\
                                                        \hline
                                                     \end{tabular}$}}
    &$[\underline{0},1,0]$ &$\{n_2\}$ & $0.673$ & tabu
    \\
    
    \multicolumn{1}{|c|}{}
    &\multicolumn{1}{c|}{}
    &\multicolumn{1}{c|}{}
    &\multicolumn{1}{c|}{}
    &\multicolumn{1}{c|}{}
    &\multicolumn{1}{c|}{}
    &\multicolumn{1}{c|}{}
    &\multicolumn{1}{c|}{}
    &$[0,\underline{0},0]$ & $\phi$ & $0$ & choosable
    \\
    
    \multicolumn{1}{|c|}{}
    &\multicolumn{1}{c|}{}
    &\multicolumn{1}{c|}{}
    &\multicolumn{1}{c|}{}
    &\multicolumn{1}{c|}{}
    &\multicolumn{1}{c|}{}
    &\multicolumn{1}{c|}{}
    &\multicolumn{1}{c|}{}
    &$[1,1,\underline{1}]$ & $\{n_1,n_2,n_3\}$ & $0.789$ & choosable
    \\

    \hline
    %%%%step 3
    \multicolumn{1}{|c|}{\multirow{3}{*}{\textbf{Step 3}}}
    &\multicolumn{1}{c|}{\multirow{3}{*}{$[1,1,1]$}}
    &\multicolumn{1}{c|}{\multirow{3}{*}{$\{n_1,n_2,n_3\}$}}
    &\multicolumn{1}{c|}{\multirow{3}{*}{$0.789$}}
    &\multicolumn{1}{c|}{\multirow{3}{*}{$[1,1,1]$}}
    &\multicolumn{1}{c|}{\multirow{3}{*}{$\{n_1,n_2,n_3\}$}}
    &\multicolumn{1}{c|}{\multirow{3}{*}{$0.789$}}
    &\multicolumn{1}{c|}{\multirow{3}{*}{$
                                                     \begin{tabular}{|c|c|c|}
                                                        \hline
                                                        1 & 2 & 3 \\
                                                        \hline
                                                        2 & 0 & 3 \\
                                                        \hline
                                                     \end{tabular}$}}
    &$[\underline{0},1,1]$ &$\{n_2,n_3\}$ & $0.600$ & tabu
    \\
    
    \multicolumn{1}{|c|}{}
    &\multicolumn{1}{c|}{}
    &\multicolumn{1}{c|}{}
    &\multicolumn{1}{c|}{}
    &\multicolumn{1}{c|}{}
    &\multicolumn{1}{c|}{}
    &\multicolumn{1}{c|}{}
    &\multicolumn{1}{c|}{}
    &$[1,\underline{0},1]$ & $\{n_1,n_3\}$ & $0.626$ & choosable
    \\
    
    \multicolumn{1}{|c|}{}
    &\multicolumn{1}{c|}{}
    &\multicolumn{1}{c|}{}
    &\multicolumn{1}{c|}{}
    &\multicolumn{1}{c|}{}
    &\multicolumn{1}{c|}{}
    &\multicolumn{1}{c|}{}
    &\multicolumn{1}{c|}{}
    &$[1,1,\underline{0}]$ & $\{n_1,n_2\}$ & $0.670$ & tabu
    \\

    \hline
    %%%%step 4
    \multicolumn{1}{|c|}{\multirow{3}{*}{\textbf{Step 4}}}
    &\multicolumn{1}{c|}{\multirow{3}{*}{$[1,0,1]$}}
    &\multicolumn{1}{c|}{\multirow{3}{*}{$\{n_1,n_3\}$}}
    &\multicolumn{1}{c|}{\multirow{3}{*}{$0.626$}}
    &\multicolumn{1}{c|}{\multirow{3}{*}{$[1,1,1]$}}
    &\multicolumn{1}{c|}{\multirow{3}{*}{$\{n_1,n_2,n_3\}$}}
    &\multicolumn{1}{c|}{\multirow{3}{*}{$0.789$}}
    &\multicolumn{1}{c|}{\multirow{3}{*}{$
                                                     \begin{tabular}{|c|c|c|}
                                                        \hline
                                                        1 & 2 & 3 \\
                                                        \hline
                                                        1 & 3 & 2 \\
                                                        \hline
                                                     \end{tabular}$}}
    &$[\underline{0},0,1]$ &$\{n_3\}$ & $0.291$ & tabu
    \\
    
    \multicolumn{1}{|c|}{}
    &\multicolumn{1}{c|}{}
    &\multicolumn{1}{c|}{}
    &\multicolumn{1}{c|}{}
    &\multicolumn{1}{c|}{}
    &\multicolumn{1}{c|}{}
    &\multicolumn{1}{c|}{}
    &\multicolumn{1}{c|}{}
    &$[1,\underline{1},1]$ & $\{n_1,n_2,n_3\}$ & $0.789$ & tabu
    \\
    
    \multicolumn{1}{|c|}{}
    &\multicolumn{1}{c|}{}
    &\multicolumn{1}{c|}{}
    &\multicolumn{1}{c|}{}
    &\multicolumn{1}{c|}{}
    &\multicolumn{1}{c|}{}
    &\multicolumn{1}{c|}{}
    &\multicolumn{1}{c|}{}
    &$[1,0,\underline{0}]$ & $\{n_1\}$ & $0.430$ & tabu
    \\
    
    \hline
    %%%%step 5
    \multicolumn{1}{|c|}{\multirow{3}{*}{\textbf{Step 5}}}
    &\multicolumn{1}{c|}{\multirow{3}{*}{$[1,0,1]$}}
    &\multicolumn{1}{c|}{\multirow{3}{*}{$\{n_1,n_3\}$}}
    &\multicolumn{1}{c|}{\multirow{3}{*}{$0.626$}}
    &\multicolumn{1}{c|}{\multirow{3}{*}{$[1,1,1]$}}
    &\multicolumn{1}{c|}{\multirow{3}{*}{$\{n_1,n_2,n_3\}$}}
    &\multicolumn{1}{c|}{\multirow{3}{*}{$0.789$}}
    &\multicolumn{1}{c|}{\multirow{3}{*}{$
                                                     \begin{tabular}{|c|c|c|}
                                                        \hline
                                                        1 & 2 & 3 \\
                                                        \hline
                                                        0 & 2 & 1 \\
                                                        \hline
                                                     \end{tabular}$}}
    &$[\underline{0},0,1]$ &$\{n_3\}$ & $0.291$ & choosable
    \\
    
    \multicolumn{1}{|c|}{}
    &\multicolumn{1}{c|}{}
    &\multicolumn{1}{c|}{}
    &\multicolumn{1}{c|}{}
    &\multicolumn{1}{c|}{}
    &\multicolumn{1}{c|}{}
    &\multicolumn{1}{c|}{}
    &\multicolumn{1}{c|}{}
    &$[1,\underline{1},1]$ & $\{n_1,n_2,n_3\}$ & $0.789$ & tabu
    \\
    
    \multicolumn{1}{|c|}{}
    &\multicolumn{1}{c|}{}
    &\multicolumn{1}{c|}{}
    &\multicolumn{1}{c|}{}
    &\multicolumn{1}{c|}{}
    &\multicolumn{1}{c|}{}
    &\multicolumn{1}{c|}{}
    &\multicolumn{1}{c|}{}
    &$[1,0,\underline{0}]$ & $\{n_1\}$ & $0.430$ & tabu
    \\
    
    \hline
    %%%%stop
    \multicolumn{1}{|c|}{\multirow{3}{*}{\textbf{Stop}}}
    &\multicolumn{1}{c|}{\multirow{3}{*}{$[1,1,1]$}}
    &\multicolumn{1}{c|}{\multirow{3}{*}{$\{n_1,n_2,n_3\}$}}
    &\multicolumn{1}{c|}{\multirow{3}{*}{$0.789$}}
    &\multicolumn{1}{c|}{\multirow{3}{*}{---}}
    &\multicolumn{1}{c|}{\multirow{3}{*}{---}}
    &\multicolumn{1}{c|}{\multirow{3}{*}{---}}
    &\multicolumn{1}{c|}{\multirow{3}{*}{---}}
    &\multicolumn{1}{c|}{\multirow{3}{*}{---}} 
    &\multicolumn{1}{c|}{\multirow{3}{*}{---}} 
    &\multicolumn{1}{c|}{\multirow{3}{*}{---}}  
    &\multicolumn{1}{c|}{\multirow{3}{*}{---}}
    \\
    
    \multicolumn{1}{|c|}{}
    &\multicolumn{1}{c|}{}
    &\multicolumn{1}{c|}{}
    &\multicolumn{1}{c|}{}
    &\multicolumn{1}{c|}{}
    &\multicolumn{1}{c|}{}
    &\multicolumn{1}{c|}{}
    &\multicolumn{1}{c|}{}
    & &  &  &
    \\
    
    \multicolumn{1}{|c|}{}
    &\multicolumn{1}{c|}{}
    &\multicolumn{1}{c|}{}
    &\multicolumn{1}{c|}{}
    &\multicolumn{1}{c|}{}
    &\multicolumn{1}{c|}{}
    &\multicolumn{1}{c|}{}
    &\multicolumn{1}{c|}{}
    & &  &  &
    \\    
    \hline    
  \end{tabular}}
  \label{tab:tabu_search}
\end{table*}

\section{Movement Pattern-Aware Optimal Routing}
\label{sec:routing}
In this section, we give the details about the routing process. We implement two routing protocols, Local-MPAR and Tabu-MPAR, based on two search algorithms, the local search algorithm and the tabu search algorithm respectively. In Local-MPAR, the other node(s)'s $\lambda$ value for any one node is required when they encounter, while in Tabu-MPAR, we assume that each node knows the $\lambda$ values of all other node(s) in advance.

In this paper, we investigate the routing process for a single message (it may have one or more than one replicas). To illustrate this further, we refer to the concept in Epidemic routing protocol \cite{Vahdat2000} that treats the message as virus. Then a successful message replication operation can be viewed as an infection process. The delivery operation is excluded in our discussion, i.e., when we say $n_a$ infects $n_b$, $n_b$ is not the destination node. Start from this point, we can classify all the nodes in the network into three kinds. 
\begin{itemize}
\item \textit{Infected node.} An infected node is a node holding the message or any of its replicas. However an infected node cannot infect other pure nodes.
\item \textit{Pure node.} A pure node is a node not holding the message or any of its replicas. It can be infected by any infectious node.
\item \textit{Infectious node.} An infectious node is a special kind of infected node that is able to infect other pure nodes.
\end{itemize}

An \textit{infectious} state is able to move to an \textit{infected} state, which means that it still holds the message but unable to add new ones to the network. A \textit{pure} state can also be transferred to an \textit{infected} state or infectious state, which means that it receives the message. We further explain these states by referring to Epidemic \cite{Vahdat2000} and SprayAndWait \cite{Spyropoulos2005} protocols.

%\begin{figure}[!t]
%\centering
%\includegraphics[width=2.5in]{states}
%\caption{Node state transition.}
%\label{fig:states}
%\end{figure}
 
In Epidemic routing, there are only two kinds of nodes, \textit{pure node} and \textit{infectious node}. The replicas distribution process ends when all the pure nodes are infected and become infectious nodes. SprayAndWait routing employs all the three kinds of nodes. In its source edition, there is only one infectious node in the network, while in its binary edition the infectious nodes can be multiple but limited in a fixed number. The replicas distribution process ends when there is no infectious node in the network. 

Actually, these three kinds of node exist in all kinds of routing protocols in mobile delay tolerant networks. Specifically, Epidemic and SprayAndWait are two most representative zero-knowledge based routing protocols. All other replicas distribution process in any routing protocol can more or less be viewed as the improvement of these two basic schemes, by letting the node state transfers among \textit{pure}, \textit{infected} and \textit{infectious}. Before introducing the Local-MPAR and Tabu-MPAR, we propose three postulations about the routing.

\begin{postulation}
\label{post:replication}
Any node $n_a$ replicates the message to $n_b$ only if $n_b$ currently does not hold any of its replicas.
\end{postulation}

\begin{postulation}
\label{post:deletion}
The operation of cleaning messages out of the buffer caused by time-to-live deadline or buffer size constraint is called ``drop'', while that caused by state transition is called ``deletion''.
\end{postulation}

\begin{postulation}
The set of nodes holding the message or any of its replicas is denoted by $N$, which is compose of all infectious nodes and infected nodes.
\end{postulation}

Then we have the following theorem.

\begin{theorem}
\label{theorem:replication_deletion}
The replication operation is corresponding to the transition from W to G and deletion operation to the transition from G to W.
\end{theorem}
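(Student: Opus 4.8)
The plan is to prove the statement as a correspondence (a bookkeeping identification) between the routing state of the network and the vertices of the hypercube solution space, rather than by any computation. By Postulation~3 the set $N$ of nodes currently holding the message, infected together with infectious, is exactly the set encoded by the indicator vector $\mathbf{x}$, with $x_i=1$ precisely when $n_i\in N$. Reading the node state W as the pure state (coordinate $0$) and G as the infected/infectious state (coordinate $1$), I would first observe that each elementary routing operation alters the membership of a single node in $N$, hence toggles a single coordinate of $\mathbf{x}$; this is exactly a move to an adjacent vertex in the sense of Definition~\ref{def:neighborhood} and Definition~\ref{def:tabu_element}. The theorem then reduces to matching the two operations with the two directions of such a toggle.

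For the replication direction I would invoke Postulation~\ref{post:replication}: when a node $n_a$ replicates the message to $n_b$, by hypothesis $n_b$ held no replica beforehand, so $n_b\notin N$ and $x_b=0$ before the operation. After replication $n_b$ enters $N$, so its coordinate flips from $0$ to $1$ while all others are fixed. This is precisely the W$\to$G transition, establishing the first half of the claim.

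For the deletion direction I would first separate deletion from drop using Postulation~\ref{post:deletion}: only a change caused by a state transition counts as deletion, whereas removal forced by the time-to-live deadline or a buffer constraint is a drop and therefore falls outside the correspondence. A deletion takes a node $n_b$ out of the infected/infectious class, hence out of $N$, so $x_b$ flips from $1$ to $0$ with all other coordinates unchanged; this is the G$\to$W transition, establishing the second half. Combining both directions with the encoding of $N$ by $\mathbf{x}$ gives the asserted correspondence.

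The step I expect to be the main obstacle is showing that the correspondence is exhaustive and unambiguous: I must verify that replication and deletion are the only operations that change membership in $N$, so that no other state change (for example an infectious node lapsing to merely infected, which by Postulation~3 leaves $N$ intact) produces an unaccounted vertex move, and that drops are correctly excluded so that every coordinate toggle is matched to exactly one replication or one deletion. Pinning down this exhaustiveness, rather than either direction individually, is where the argument needs care.
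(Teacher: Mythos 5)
Your proof has a genuine gap, and it originates in your very first move: you read G as ``the infected/infectious state,'' i.e.\ you merge the paper's two message-holding states, B (infectious) and G (infected), into a single coordinate-$1$ state. Under that identification the theorem degenerates into the tautology that adding a node to $N$ flips a $0$ to a $1$ and removing it flips a $1$ to a $0$. But in the paper's three-state system B and G are distinct, and the whole content of the theorem is that replication and deletion pass specifically through G and never touch B directly: a pure node acquiring the message enters G (not B), and a node relinquishing the message by state transition leaves from G (not from B). After your identification the distinction between W$\to$G and W$\to$B cannot even be stated, so your argument establishes a weaker claim (membership in $N$ toggles one coordinate of $\mathbf{x}$), not the theorem as written.

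The missing ingredient is exactly what the paper's proof turns on: there is no direct transition between B and W, so a node can only reach B from W by way of G (W$\to$G$\to$B), and can only reach W from B by way of G (B$\to$G$\to$W). Granting this structural fact, Postulation~\ref{post:replication} places the recipient of a replication in W beforehand, and the only admissible first step into the message-holding class is W$\to$G; symmetrically, Postulation~\ref{post:deletion} identifies deletion with the state-transition-caused exit, whose only admissible last step is G$\to$W. Note also that the obstacle you flagged---exhaustiveness, i.e.\ that B$\leftrightarrow$G moves leave $N$ intact---is not where the real difficulty lies; such moves are harmless in both your argument and the paper's. What you actually need, and never establish, is the absence of the direct W$\leftrightarrow$B transitions.
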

\begin{proof}
From the definition of the three states, a node holding the message can be either in state B or G, and a node not holding the message is accurately corresponding to state W. Since that there is no direct transition between B and W, a node in state B transfers to W has to pass the state G, and vise versa. Then from postulation~1, the replication operation is corresponding to the transition from W to B or W to T, i.e. W to G. From postulation~2, the deletion operation is corresponding to the transition from B to W or G to W, i.e. G to W.
\end{proof}

\subsection{Local Search Based Movement Pattern-Aware Routing}
In Local-MPAR algorithm, each node could be of all the three states. However, we only allow not more than 1 \textit{infectious} node exists for each generated message. The basic idea of Local-MPAR is to dynamically adjust the set $N$ that includes all nodes holding the message, so as to maximize the co-delivery probability $P_{N,d}$.

Algorithm~\ref{alg:Local-MPAR} shows the Local-MPAR process. There are three stages in Local-MPAR. In the initial stage, the source node $n_s$ generates a message $M$, and then set its state to be infectious. Correspondingly, $N$ is initialized to be composed of only $n_s$.  For the node set $N$, we have the following postulation.
\begin{postulation}
\label{post:N}
In Local-MPAR, the node set $N$ is always maintained by the active node.
\end{postulation}
The whole routing process can be viewed as the dynamical adjustment of the node set $N$, as shown in the second stage. The key operation of this stage is to let each node finish the state transition according to the rules shown in\tablename~\ref{tab:Local-MPAR}. For simplicity, in the following context we name the table position in row B column G as BG, by parity of reasoning. We can see from \tablename~\ref{tab:Local-MPAR} that the state transition happens only if either $n_a$ or $n_b$ is in infectious state, i.e. state B. There is nothing in BB because that it cannot be true that both $n_a$ and $n_b$ are in infectious state in Local-MPAR. 

\begin{figure}[!t]
\centering
\includegraphics[width=2in]{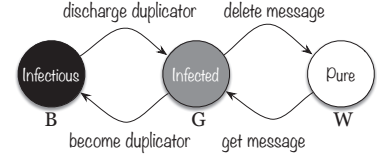}
\caption{Node state transition in Local-MPAR.}
\label{fig:states2}
\end{figure}

\begin{lemma}
\label{lemma:infectious}
The replication or deletion of message happens only if one of the encountered two nodes is in infectious state.
\end{lemma}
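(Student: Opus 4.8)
The plan is to reduce the lemma to the state-transition structure already fixed for Local-MPAR, and then to read the conclusion off the transition table. First I would invoke Theorem~\ref{theorem:replication_deletion}, which states that a replication is precisely a transition from W to G and a deletion is precisely a transition from G to W. Consequently both operations named in the lemma are, without exception, particular cases of a state transition between the pure state W and the message-holding macrostate G. It therefore suffices to establish the sharper claim that, in Local-MPAR, any state transition occurring when two nodes $n_a$ and $n_b$ encounter requires at least one of them to be in the infectious state B.

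To prove this sharper claim I would argue directly from \tablename~\ref{tab:Local-MPAR}, which records the resulting action for each ordered pair of states of the meeting nodes; the argument is a finite case check over the cells of this table. For the W$\to$G direction (replication), the source of a new copy must be infectious, since by definition an infected-but-not-infectious node cannot infect a pure node, while the recipient must be pure by Postulation~\ref{post:replication}; hence replication can only be recorded in a cell whose row or column is B. For the G$\to$W direction (deletion), I would use the Local-MPAR invariant that at most one infectious node exists per message and that this infectious (active) node is the only node able to change the network state, which is exactly why the BB cell is absent and why no encounter between two non-infectious nodes triggers any action. Together these observations show that every nonempty action cell has the form $(\text{B},\cdot)$ or $(\cdot,\text{B})$.

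I would then close by contraposition: if neither $n_a$ nor $n_b$ is infectious, both lie among the states G and W, the corresponding table cell carries no action, and so neither a W$\to$G nor a G$\to$W transition can take place; by Theorem~\ref{theorem:replication_deletion} neither replication nor deletion occurs, which is the lemma. The step I expect to be the main obstacle is the exhaustive verification over the table: the lemma is only as strong as the claim that \emph{all} cells avoiding state B (namely GG, GW, WG and WW, together with the omitted BB cell) indeed carry no message movement. The care therefore lies in reading the table entries off consistently with Postulations~\ref{post:replication} and~\ref{post:deletion} and with the single-infectious-node restriction, since a single overlooked action in a B-free cell would break the sharper claim even though the reduction of the first paragraph would survive.
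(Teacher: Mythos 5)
Your proposal is correct and follows essentially the same route as the paper, whose entire proof is the one-line remark that the lemma is ``directly obtained by \tablename~\ref{tab:Local-MPAR}'': you simply make explicit the finite case check (that the GG, GW, WG, WW cells carry no action and BB is absent) together with the correspondence from Theorem~\ref{theorem:replication_deletion} that the paper leaves implicit. No gap; your version is just a more careful writing-out of the same table inspection.
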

\begin{proof}
Directly obtained by \tablename~\ref{tab:Local-MPAR}.
\end{proof}

\begin{theorem} 
The routing stage of Local-MPAR is a local search process in the solution space $\mathcal{S}$.
\end{theorem}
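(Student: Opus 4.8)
The plan is to exhibit a faithful correspondence between the evolution of the message-holder set $N$ under Local-MPAR and the trajectory of a local search walk over the hypercube $\mathcal{S}$, matching the three ingredients of local search: initialization, single-step neighbor moves, and greedy improvement terminating at a local optimum. First I would fix the encoding already noted after Definition~\ref{def:N_opt}: a subset $N\subseteq\overline{N}$ is in one-to-one correspondence with its indicator vector $\textbf{x}=[x_1,\ldots,x_n]$, where $x_i=1$ exactly when $n_i\in N$, so every routing state (the current set of nodes holding the message) is a single vertex $\textbf{x}^{now}$ of $\mathcal{S}$. The initial stage, in which only the source $n_s$ is infectious and $N=\{n_s\}$, corresponds to choosing the initial feasible solution $\textbf{x}^{now}$ equal to the unit vector supported at $n_s$, which matches step~1 of the local search.

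Second I would show that each elementary routing operation moves $\textbf{x}^{now}$ to a neighbor in $\mathcal{N}(\textbf{x}^{now})$. By Theorem~\ref{theorem:replication_deletion} a replication is exactly a $W\to G$ transition of one node and a deletion is exactly a $G\to W$ transition of one node; in indicator terms each flips a single coordinate of $\textbf{x}$ (from $0$ to $1$ on replication, from $1$ to $0$ on deletion), so $\sum_i|x_i-x'_i|=1$ and therefore $\textbf{x}'\in\mathcal{N}(\textbf{x}^{now})$. The point that forbids a multi-coordinate jump is Lemma~\ref{lemma:infectious} together with the Local-MPAR restriction that at most one infectious node exists per message: a replication or deletion can be triggered only by the unique active (infectious) node during a pairwise encounter, so at most one membership in $N$ changes per encounter. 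Hence a genuine transition is a step of Hamming distance exactly $1$, while an encounter that changes nothing keeps $\textbf{x}^{now}$ fixed (distance $0$); both remain inside $\mathcal{N}(\textbf{x}^{now})$.

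Third I would identify the selection rule as the greedy improving rule of local search. By Postulation~\ref{post:N} the active node holds $N$ and can therefore evaluate $P_{N,d}$ (equivalently $-f(\textbf{x})$, which is monotone in $P_{N,d}$) both for the current set and for the candidate obtained by flipping the coordinate of the encountered node; the transition rules in \tablename~\ref{tab:Local-MPAR} commit the flip only when it raises the co-delivery probability, which is precisely ``move to the neighbor with the larger evaluation value'' restricted to the single neighbor reachable through that encounter. When no encounter offers an improving flip, $\textbf{x}^{now}$ is frozen, which is exactly local-search termination at a local optimum and accounts for the trap displayed in \tablename~\ref{tab:local_search}. Collecting the three correspondences establishes that the routing stage realizes a local search on $\mathcal{S}$.

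The step I expect to be the main obstacle is rigorously justifying the ``exactly one coordinate per step'' claim: one must argue from the pairwise-encounter model and the single-infectious-node invariant that two membership changes are never committed at the same instant, so that the discrete trace of $N$ is a sequence of adjacent vertices rather than a leap across the hypercube. Once that is secured, the remainder is a matter of reading off \tablename~\ref{tab:Local-MPAR} and invoking the already-proved Theorem~\ref{theorem:replication_deletion} and Lemma~\ref{lemma:infectious}.
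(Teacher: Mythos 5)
Your proposal is correct and follows essentially the same route as the paper's own proof: both map $N$ to its indicator vector, use Postulation~4 and Lemma~1 (you additionally cite Theorem~3) to argue that each replication/deletion is a single-coordinate flip landing in $\mathcal{N}(\textbf{x}^{now})$, and then read off from the Local-MPAR transition table that flips are committed only when they improve $P_{N,d}$, so the process freezes exactly at a local optimum. Your treatment of the initialization step and the explicit worry about simultaneous membership changes are slightly more detailed than the paper's terse argument, but they do not constitute a different approach.
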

\begin{proof}
From postulation~4, we know that the node set $N$ is updated only in infectious node. Lemma~1 directly shows that any happened replication or deletion operation can be informed to the infectious node immediately, so that $N$ would be updated in time. Either a replication or deletion operation just causes 1 replicas difference in the network, thus only adding or removing one element in the set $N$, which indeed transfers the correspondent solution vector to one of its neighborhoods $\mathcal{N}(N)$. We can see from \tablename~4 that N would not vary if and only if $\forall N'\in\mathcal{N}(N), P_{N',d}\leq P_{N,d}$, i.e.,  if and only if the solution reaches the local optimum.
\end{proof}

There still remains one question. Since there is only one infectious node in Local-MPAR, how should we choose the infectious node? The basic principle is to choose the node with the smallest expected delay to arrive at another frequently visited location, i.e. $E[D]$ in equation~\ref{eq:delay}. In \tablename~\ref{tab:Local-MPAR} GB, when there is no need to change $n_a$ from G to W, we should consider that whether to change $n_a$ to be the infectious node. If $E[D_a]$ is less than $E[D_b]$, then we deem that $n_a$ is more suitable to be the message duplicator, because it is expected to visit other location(s) sooner, which means more transmission opportunities. Conversely, if the duplicator responsibility moves to $n_b$, i.e. G$\rightarrow$B happens in $n_b$, we should symmetrically have B$\rightarrow$G happens in $n_a$ to keep the uniqueness property of the infectious node, as shown in \tablename~\ref{tab:Local-MPAR} BG.

The routing process of the network instance of \figurename~\ref{fig:example1} is shown in \figurename~\ref{fig:local_rt}. At time $t_1$, node $n_2$ generates the message of which the destination is $n_4$. From time $t_2$ to $t_3$, since that the co-delivery probability of both $\{n_1,n_2\}$ and $\{n_2,n_3\}$ is less than $\{n_2\}$. However, from the discussion before, we know that the set $N_{opt}=\{n_1,n_2,n_3\}$ could reach the best delivery ratio. However in Local-MPAR $N$ is not able to evolve to $N_{opt}$ due to the trap of local optimum, as same as in \tablename~\ref{tab:local_search}.

\begin{table*}[hbt]
\centering
  \caption{Node state transition rules for Local-MPAR}
  \label{tab:Local-MPAR}
  \resizebox{\textwidth}{!}{
  \begin{tabular}{|c|c|c|c|c|}
  \hline

   &\multicolumn{4}{c|}{\textbf{state of $n_{b}$}}
   \\
   \hline
    \multicolumn{1}{|c|}{\multirow{9}{*}{\textbf{state of $n_{a}$}}}
   &
   & \textbf{B}
   & \textbf{G}
   & \textbf{W}
   \\
   \cline{2-5}
   & \textbf{B}
   & \multicolumn{1}{c:}{---}
   & \multicolumn{1}{c:}{B
     $\rightarrow\left\{
     \begin{array}{ll} 
      \textnormal{G} & \textnormal{if}~\textnormal{G}\rightarrow\textnormal{B happens in }n_b\\
      \textnormal{B} & \textnormal{else}
     \end{array}\right.
     $}
   & B
   \\
   \cline{2-2}\cdashline{3-5}
   & \textbf{G}
   & \multicolumn{1}{c:}{G
     $\rightarrow\left\{
     \begin{array}{ll}
      \textnormal{W} & \textnormal{if}~P_{N-\{n_a\},d}>P_{N,d} \\
      \textnormal{B} & \textnormal{if}~P_{N-\{n_a\}}\leq P_{N,d}~\textnormal{and}~E[D_a]<E[D_b] \\
      \textnormal{G} & \textnormal{else}
     \end{array}\right.
     $}
   & \multicolumn{1}{c:}{G}
   & G
   \\
   \cline{2-2}\cdashline{3-5}
   & \textbf{W}
   & \multicolumn{1}{c:}{W
     $\rightarrow\left\{
     \begin{array}{ll}
      \textnormal{G} & \textnormal{if}~P_{N\bigcup\{n_b\},d}>P_{N,d} \\
      \textnormal{W} & \textnormal{else}
     \end{array}\right.
     $}
   & \multicolumn{1}{c:}{W}
   & W
   \\
  \hline
  \end{tabular}}
\end{table*}

\begin{table*}[hbt]
\centering
  \caption{Node state transition rules for Tabu-MPAR}
  \label{tab:Tabu-MPAR}
  \resizebox{\textwidth}{!}{
  \begin{tabular}{|c|c|c|c|c|}
  \hline

   &\multicolumn{4}{c|}{\textbf{state of $n_{b}$}}
   \\
   \hline
    \multicolumn{1}{|c|}{\multirow{9}{*}{\textbf{state of $n_{a}$}}}
   &
   & \textbf{B}
   & \textbf{G}
   & \textbf{W}
   \\
   \cline{2-5}
   & \textbf{B}
   & \multicolumn{1}{c:}{B}
   & \multicolumn{1}{c:}{
       \begin{tabular}{c}
     $B\rightarrow B|G$\\
     \textit{(depends on the tickets left in $n_a$} \\
     \textit{after allocating to $n_b$)} \\
     \end{tabular}
   }
   & 
    \begin{tabular}{c}
     $B\rightarrow B|G$\\
     \textit{(depends on the} \\
     \textit{tickets left in $n_a$} \\
     \textit{after allocating to $n_b$)} \\
     \end{tabular}
   \\
   \cline{2-2}\cdashline{3-5}
   & \textbf{G}
   & \multicolumn{1}{c:}{
   \begin{tabular}{c}
     $G\rightarrow B|G$\\
     \textit{(depends on the} \\
     \textit{tickets left in $n_a$} \\
     \textit{after allocating to $n_b$)} \\
     \end{tabular}
   }
   & \multicolumn{1}{c:}{G}
   & G
    $\rightarrow\left\{
     \begin{array}{ll}
      W & \textnormal{if}~n_a\notin N_{opt}~\textnormal{and}~n_b\in N_{opt}\\
      G & \textnormal{else}
     \end{array}\right.
    $
   \\
   \cline{2-2}\cdashline{3-5}
   & \textbf{W}
   & \multicolumn{1}{c:}{
     \begin{tabular}{c}
     $W\rightarrow B|G$\\
     \textit{(depends on the} \\
     \textit{tickets left in $n_a$} \\
     \textit{after allocating to $n_b$)} \\
     \end{tabular}
     }
   & \multicolumn{1}{c:}{W
    $\rightarrow\left\{
    \begin{array}{ll}
     \textnormal{G} & \textnormal{if}~n_a\in N_{opt}~\textnormal{and}~n_b\notin N_{opt}\\
     \textnormal{W} & \textnormal{else}
    \end{array}\right.
    $
   }
   & W
   \\
  \hline
  \end{tabular}}
\end{table*}

\begin{algorithm}[tbp] %算法的开始
\renewcommand{\algorithmicensure}{\textbf{Initial Stage:}}
\caption{Local-MPAR Algorithm.} %算法的标题
\label{alg:Local-MPAR} %给算法一个标签，这样方便在文中对算法的引用
\begin{algorithmic}[1] %这个1 表示每一行都显示数字
\ENSURE
\STATE $n_s$ generates message $M$
\STATE $n_s.state\leftarrow$\textbf{B}
\STATE $N\leftarrow\{n_s\}$
\end{algorithmic} %这个1 表示每一行都显示数字
\begin{algorithmic}[1] %这个1 表示每一行都显示数字
\renewcommand{\algorithmicensure}{\textbf{Routing Stage:}}
\ENSURE
\FOR{any pair of $n_a$ and $n_b$}
    \IF{$n_a$ and $n_b$ encounter}
        \STATE finish the state transition according to \tablename~\ref{tab:Local-MPAR}
        \STATE update $N$
    \ENDIF
\ENDFOR
\end{algorithmic}
\begin{algorithmic}[1]
\renewcommand{\algorithmicensure}{\textbf{End Stage:}}
\ENSURE
\STATE $M$ is delivered
\end{algorithmic}
\end{algorithm}

\subsection{Tabu Search Based Movement Pattern-Aware Routing}
In Tabu-MPAR algorithm, each node also has the three states as same as in Local-MPAR. Like that in SprayAndWait routing, we allow the infectious nodes to be multiple but constrained in a finite number.

\begin{figure}[!tb]
\centering
\includegraphics[width=2in]{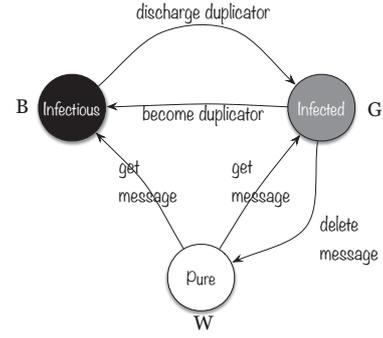}
\caption{Node state transition in Tabu-MPAR.}
\label{fig:states3}
\end{figure}

\begin{algorithm}[tbp] %算法的开始
\renewcommand{\algorithmicensure}{\textbf{Initial Stage:}}
\caption{Tabu-MPAR Algorithm.} %算法的标题
\label{alg:Tabu-MPAR} %给算法一个标签，这样方便在文中对算法的引用
\begin{algorithmic}[1] %这个1 表示每一行都显示数字
\ENSURE
\STATE $n_s$ generates message $M$
\STATE $n_s.state\leftarrow$\textbf{B}
\STATE $n_s$ computes $N_{opt}$ by tabu search and saves it in $M$
\STATE $n_s.tickets\leftarrow |N_{opt}|$
\end{algorithmic} %这个1 表示每一行都显示数字
\begin{algorithmic}[1] %这个1 表示每一行都显示数字
\renewcommand{\algorithmicensure}{\textbf{Routing Stage:}}
\ENSURE
\FOR{any pair of $n_a$ and $n_b$}
    \IF{$n_a$ and $n_b$ encounter}
        \STATE $a=\frac{E[D_b]}{E[D_a]+E[D_b]}\cdot |N_{opt}|$
        \STATE $b=\frac{E[D_a]}{E[D_a]+E[D_b]}\cdot |N_{opt}|$
        \IF{$a<1$}
            \STATE $n_a.tickets=\lceil a \rceil$
            \STATE $n_b.tickets=\lfloor b \rfloor$
        \ELSE
            \STATE $n_a.tickets=\lfloor a \rfloor$
            \STATE $n_b.tickets=\lceil b \rceil$
        \ENDIF
        \STATE finish the state transition according to \tablename~\ref{tab:Tabu-MPAR}
    \ENDIF
\ENDFOR
\end{algorithmic}
\begin{algorithmic}[1]
\renewcommand{\algorithmicensure}{\textbf{End Stage:}}
\ENSURE
\STATE $M$ is delivered
\end{algorithmic}
\end{algorithm}

The Tabu-MPAR protocol is shown in algorithm~\ref{alg:Tabu-MPAR}. There are three stages as same as in Local-MPAR. In the initial stage, when the source node $n_s$ generates the message $M$, the state of $n_s$ is set to be B, as shown in line 1-2. The node set $N_{opt}$ is calculated in the source node. Besides, there is a ticket number for each generated message and is denoted by $|N_{opt}|$, which indicates that the maximal number of replicas of this message is equal to the size of $N_{opt}$. A node is in state B if and only if it has more than one tickets for this message \footnote{To say strictly, this node is currently in B state for this message, but for other message the state might be W or G}. A node with only one ticket for the message is of state G, and nodes not holding this message are in state P.

The second stage in algorithm~\ref{alg:Tabu-MPAR} shows the routing process of Tabu-MPAR. There is two main differences compared to Local-MPAR. First, there is no need to update the node set $N_{opt}$ in other node, since that it had been computed in the source node $n_s$ before and added to the message head. Second, when any two nodes $n_a$ and $n_b$ encounter, we reallocate the tickets before start the state transition process. The strategy of tickets distribution is based on $E[D_a]$ and $E[D_b]$, as shown in line 3 and 4, which ensure that the node with a smaller $E[D]$ to be allocated with more tickets. Line 5--10 guarantee that both $a$ and $b$ are an integer between $1$ and $L-1$. Then we can define the transition rule in \tablename~\ref{tab:Tabu-MPAR}. The state transition could happen in all grids except for the diagonal. Besides, we can see that the $3\times3$ table is a symmetrical matrix. Actually, when the transition in grid GB happens in $n_a$, the transition in grid BG happens in $n_b$, and vise versa. This rule also applies for WB, BW and WG, GW. The possible transitions among states are illustrated in \figurename~\ref{fig:states3}.

\begin{figure*}
\centering
\subfigure[Time $t_1$\label{local_rt1}]
{\includegraphics[width=0.24\linewidth]{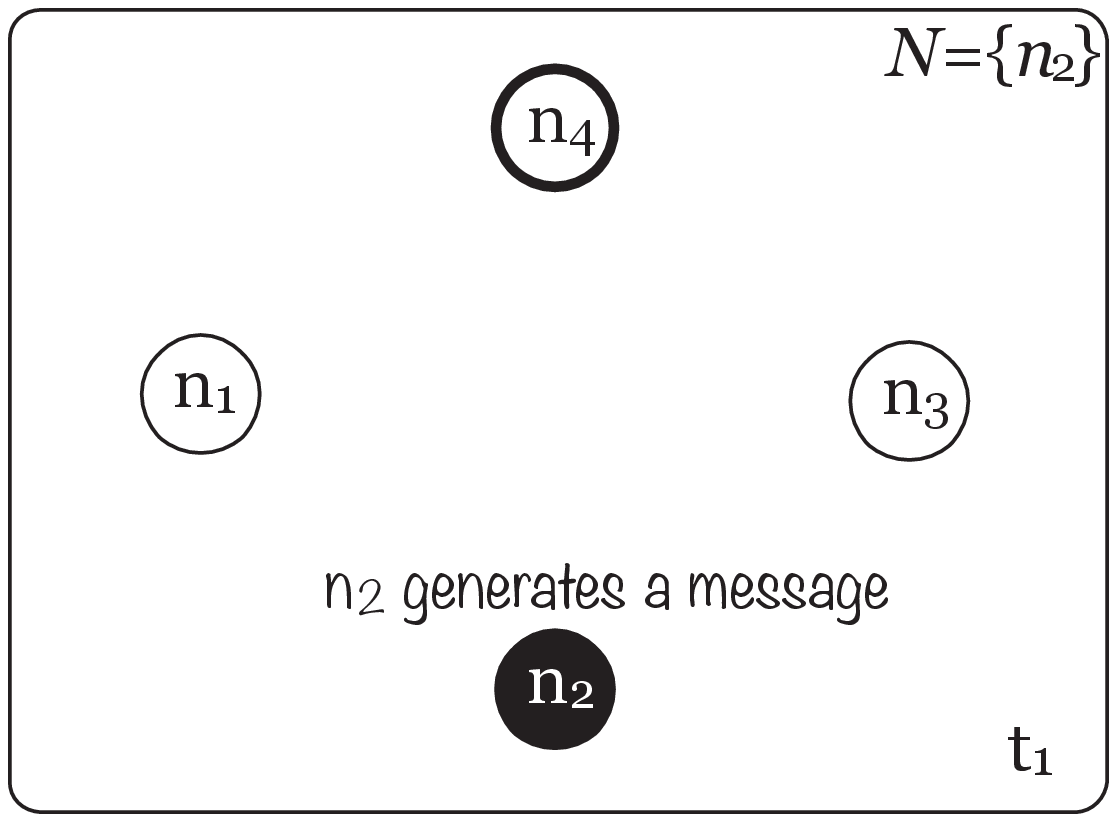}}
\subfigure[Time $t_2$\label{local_rt2}]
{\includegraphics[width=0.24\linewidth]{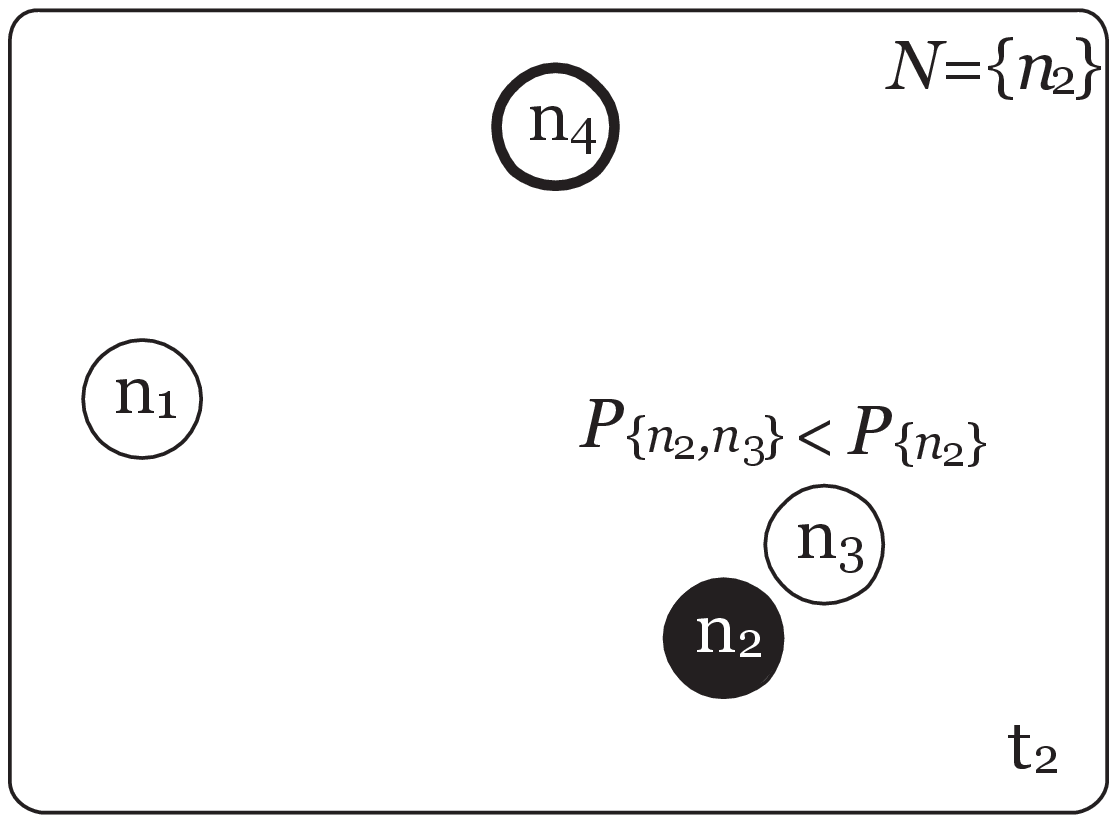}}
\subfigure[Time $t_3$\label{local_rt3}]
{\includegraphics[width=0.24\linewidth]{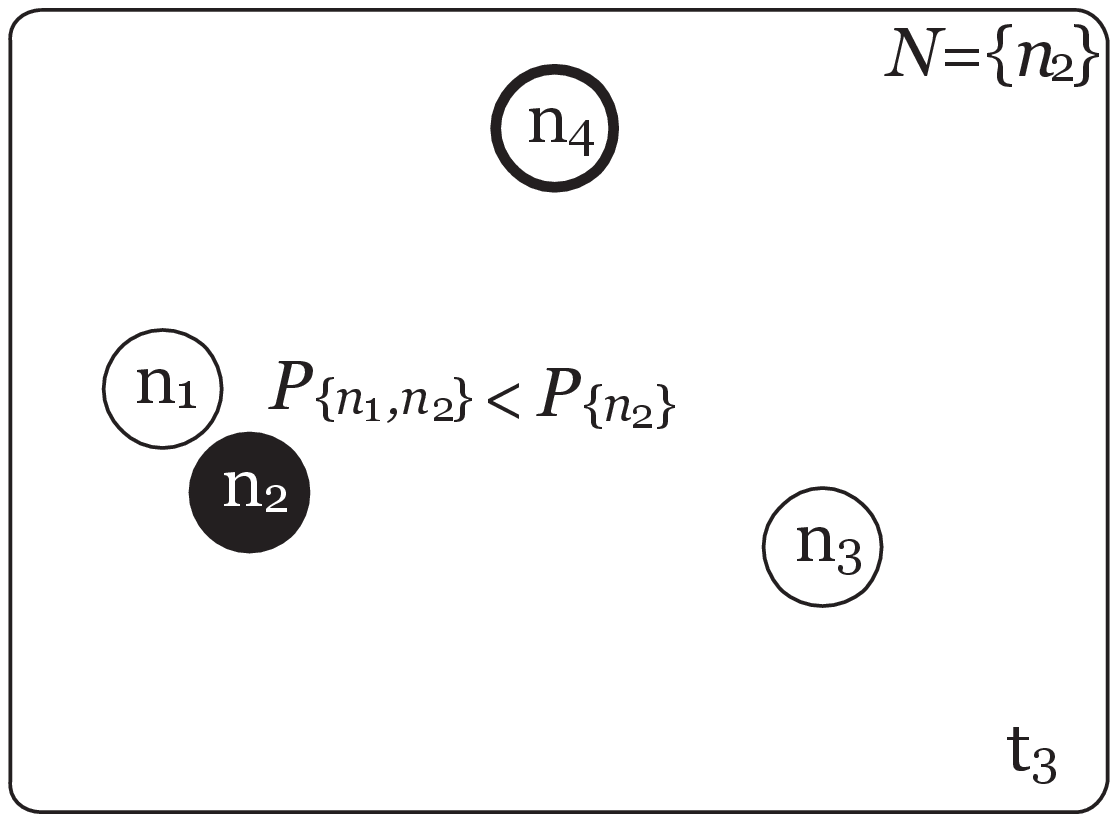}}
\subfigure[Time $t_4$\label{local_rt4}]
{\includegraphics[width=0.24\linewidth]{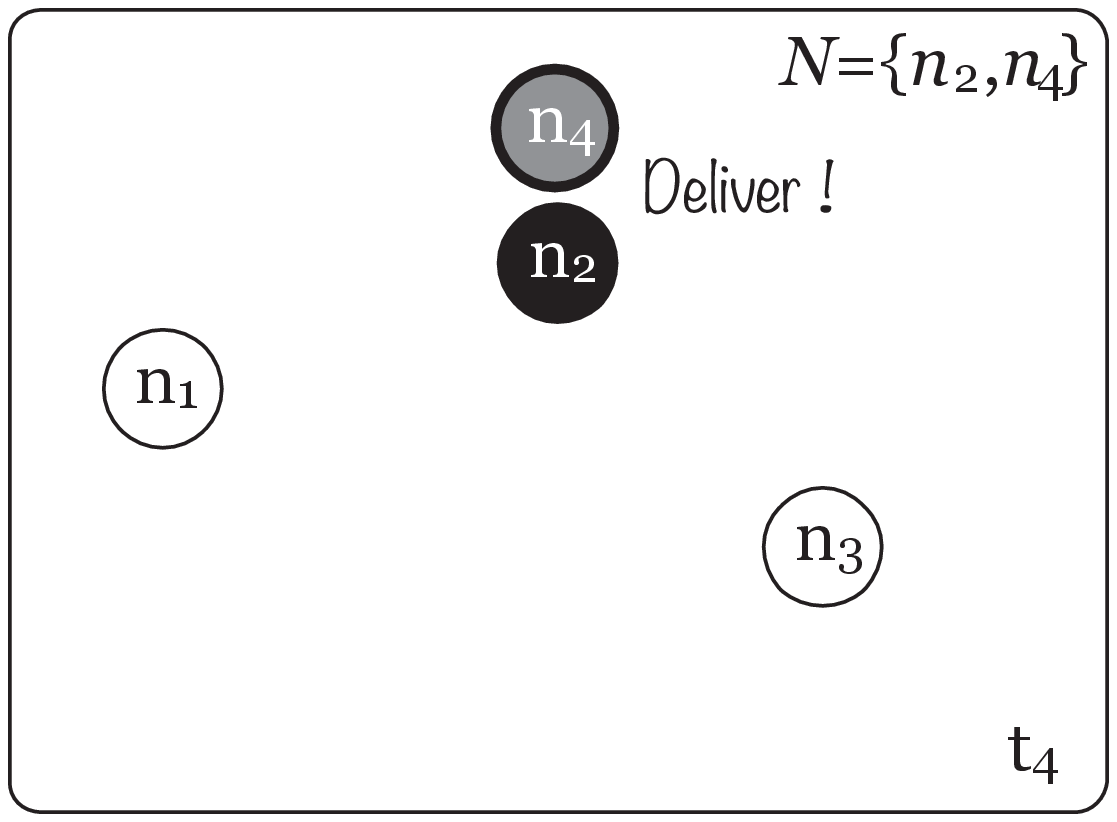}}
\caption{The routing process of Local-MPAR.}
\label{fig:local_rt}
\end{figure*}

\begin{figure*}
\centering
\subfigure[Time $t_1$\label{tabu_rt1}]
{\includegraphics[width=0.24\linewidth]{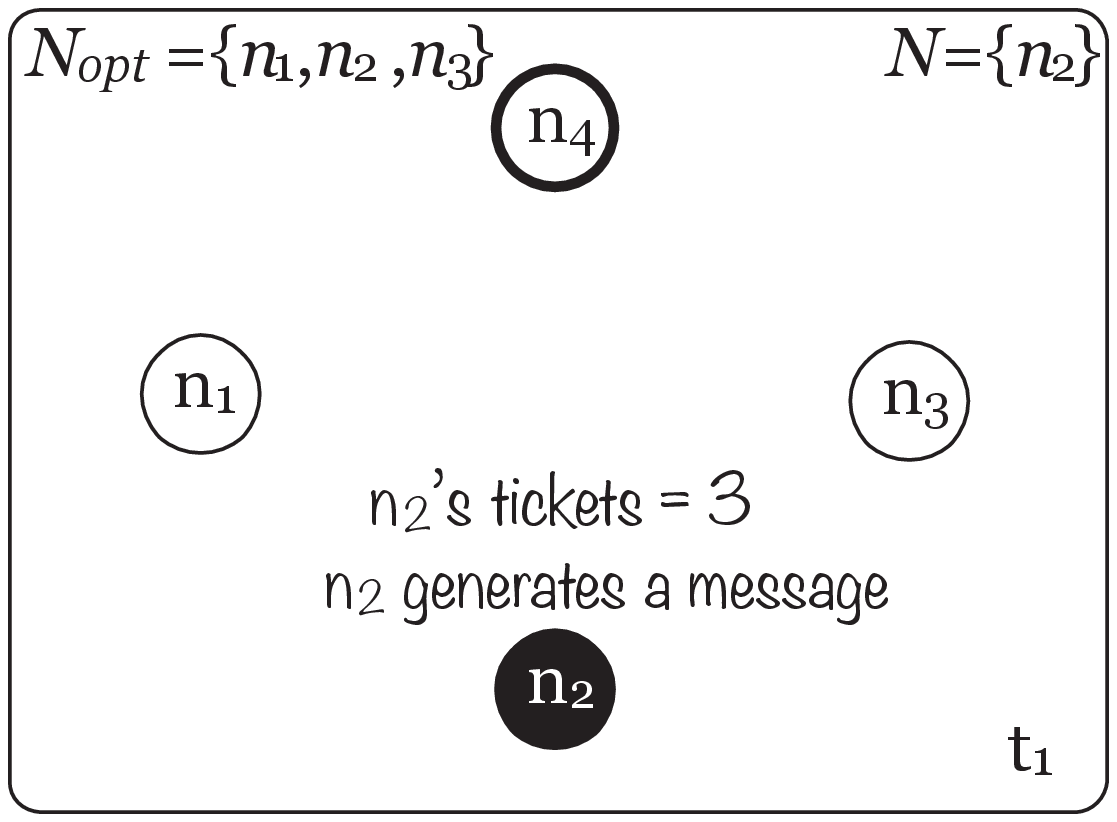}}
\subfigure[Time $t_2$\label{tabu_rt2}]
{\includegraphics[width=0.24\linewidth]{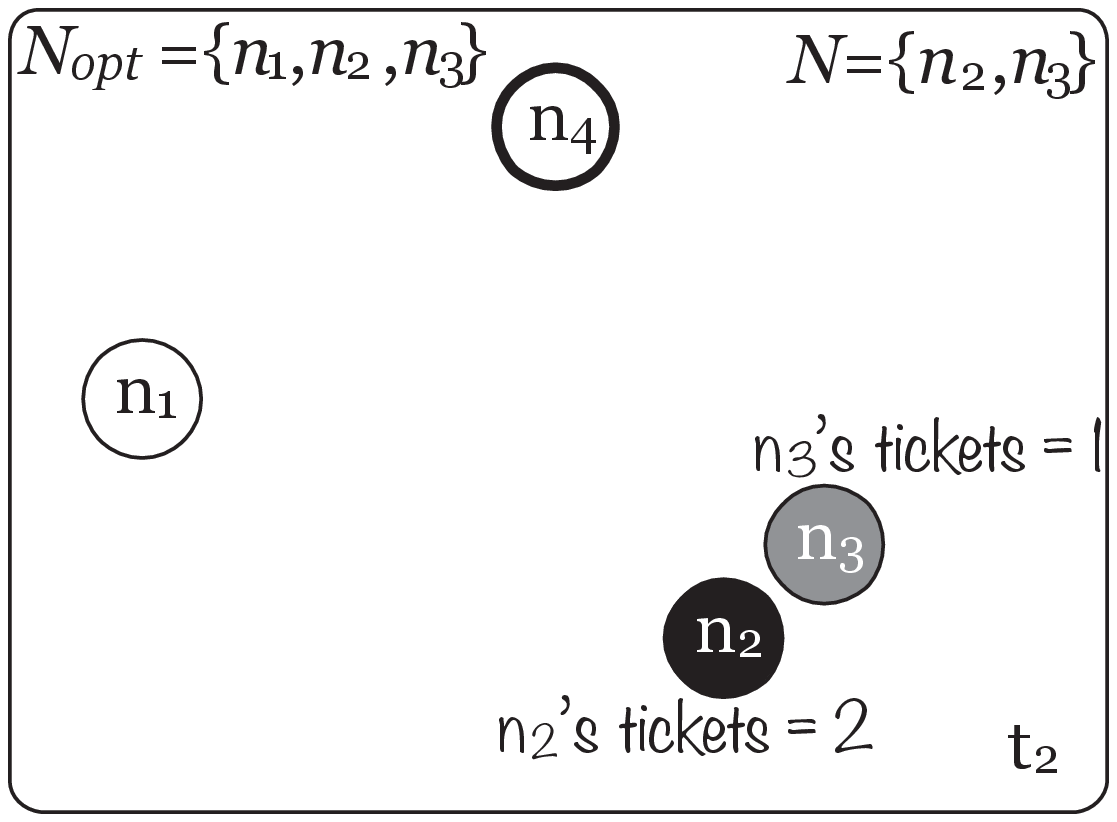}}
\subfigure[Time $t_3$\label{tabu_rt3}]
{\includegraphics[width=0.24\linewidth]{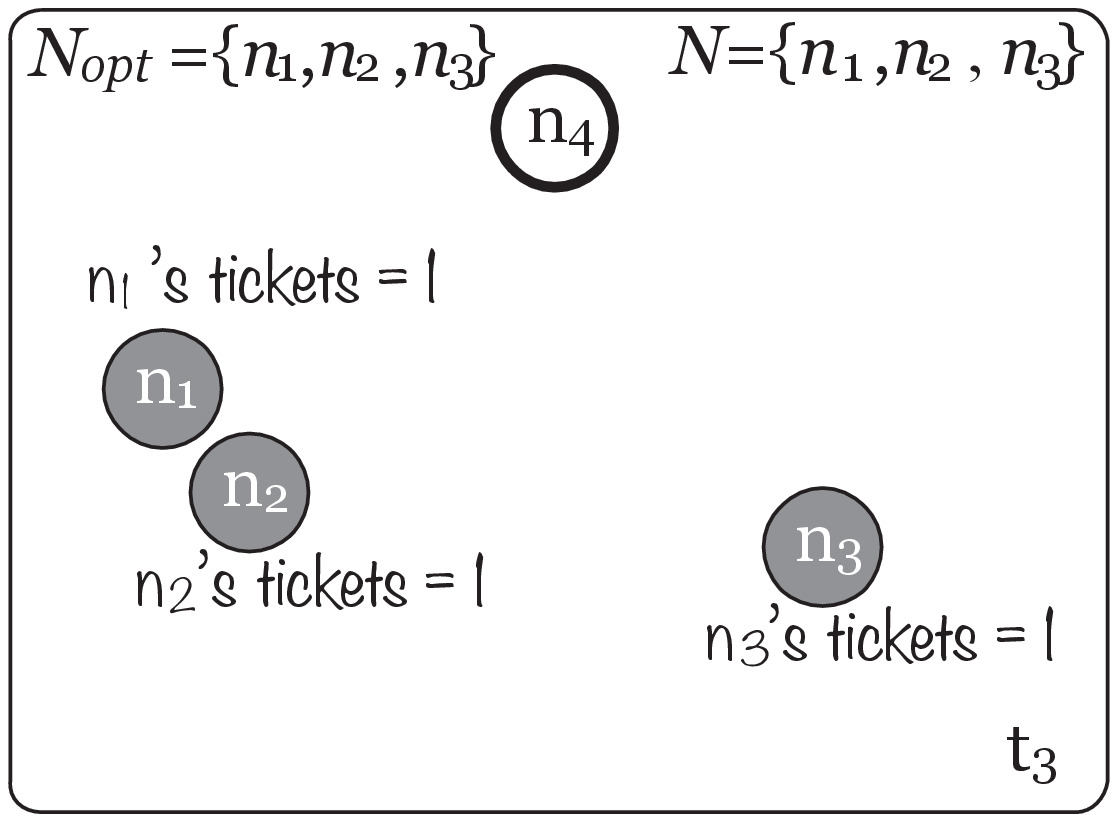}}
\subfigure[Time $t_4$\label{tabu_rt4}]
{\includegraphics[width=0.24\linewidth]{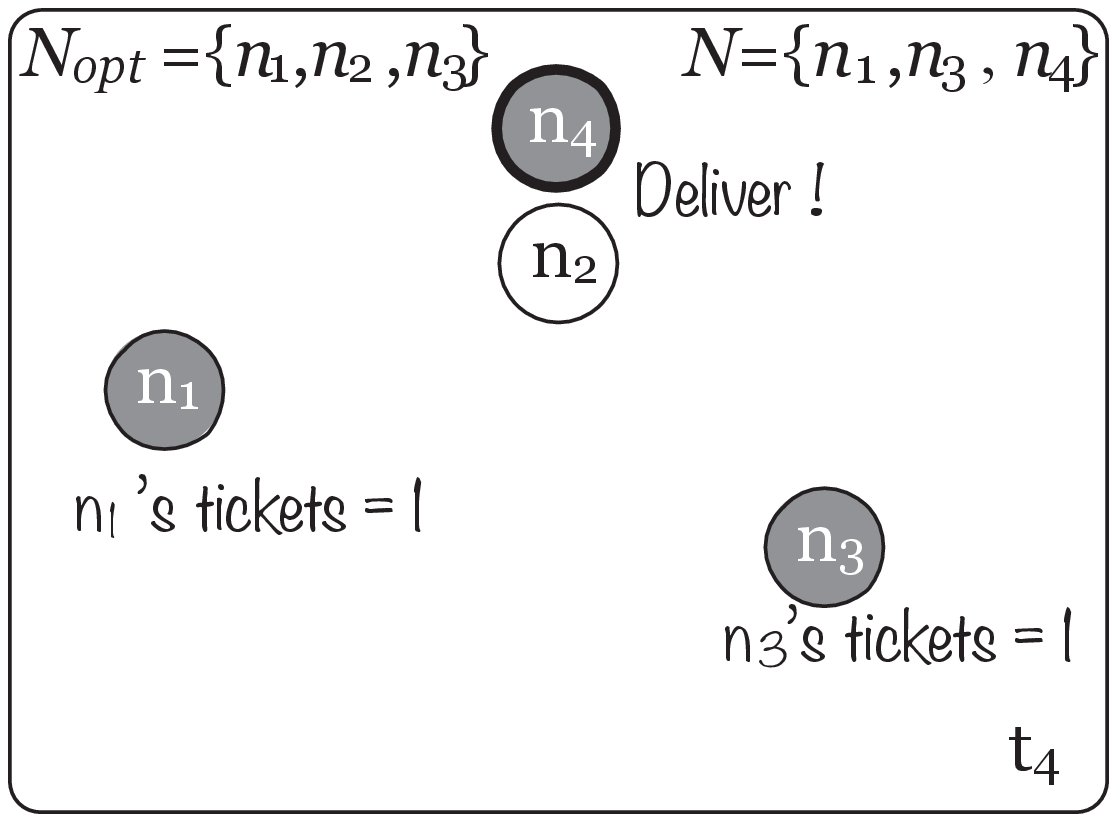}}
\caption{The routing process of Tabu-MPAR.}
\label{fig:tabu_rt}
\end{figure*}

Now we prove that the routing stage in Tabu-MPAR leads $N$ to the optimal set $N_{opt}$.

\begin{lemma}
$N=N_{opt}$ happens if and only if there is no infectious nodes in the network, i.e. all nodes in $N=N_{opt}$ are in state G.
\label{lemma:no_infectious}
\end{lemma}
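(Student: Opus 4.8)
The plan is to combine the ticket-conservation invariant of the spray distribution with the rule that defines the three node states through ticket counts. First I would record the invariant that the total number of tickets carried by the nodes of $N$ is always $|N_{opt}|$, which holds because the source is initialized with $|N_{opt}|$ tickets in the initial stage of Algorithm~\ref{alg:Tabu-MPAR} and the subsequent reallocations in its routing stage only redistribute these tickets between the two meeting nodes $n_a$ and $n_b$ without changing their total. Alongside this I would fix the state characterization read off from the protocol: a node is in state $B$ exactly when it holds at least two tickets, in state $G$ exactly when it holds a single ticket, and in state $W$ (equivalently, outside $N$) when it holds none. Thus $N$ is precisely the set of nodes carrying at least one ticket.

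With these two facts in place, the cardinality half of the equivalence is a one-line counting argument. Writing $t_n\geq 1$ for the number of tickets held by $n\in N$, the invariant gives
\[
|N_{opt}|=\sum_{n\in N}t_n\geq\sum_{n\in N}1=|N|,
\]
with equality if and only if every $t_n$ equals $1$, i.e. if and only if no node holds two or more tickets, i.e. if and only if there is no infectious node. This already settles the ``only if'' direction: if $N=N_{opt}$ then $|N|=|N_{opt}|$, so by the display every node of $N$ carries exactly one ticket and is therefore in state $G$.

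For the converse I would have to upgrade the cardinality equality $|N|=|N_{opt}|$ to the set equality $N=N_{opt}$, and this is the step I expect to be the main obstacle. The counting argument alone does not forbid a configuration in which all $|N_{opt}|$ tickets are singletons yet some of them sit on nodes outside $N_{opt}$. To close the gap I would invoke the membership-driven transition rules of \tablename~\ref{tab:Tabu-MPAR}: whenever a single-ticket node $n_a\notin N_{opt}$ meets a node $n_b\in N_{opt}$ that is still pure, the $GW$/$WG$ entries fire and move the ticket from $n_a$ to $n_b$, so every misplaced ticket is strictly driven toward $N_{opt}$. In an infectious-free state no spraying remains and these are the only admissible moves; hence a configuration holding a ticket outside $N_{opt}$ is not terminal, and once no transition is enabled we must have $N\subseteq N_{opt}$. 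Combining $N\subseteq N_{opt}$ with $|N|=|N_{opt}|$ yields $N=N_{opt}$, completing the converse and the lemma. The delicate point, and the reason this last step is the crux, is precisely that the absence of infectious nodes together with the exhaustion of the $GW$ transitions pins $N$ down to $N_{opt}$ itself rather than merely to a set of the correct size.
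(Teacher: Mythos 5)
Your ``only if'' half is exactly the paper's argument: ticket conservation plus the observation that an infectious node ties up at least two of the $|N_{opt}|$ tickets, so $N=N_{opt}$ forces every ticket to be a singleton and every holder into state G. Where you diverge is the converse, and your instinct there is sound, because the paper never actually proves it: the paragraph the paper labels as the ``if'' condition opens with ``The node set holding the message accords with $N_{opt}$ means that the number of nodes holding the message is $|N_{opt}|$\ldots'', i.e.\ it again takes $N=N_{opt}$ as the hypothesis and re-derives the absence of infectious nodes --- the same implication as the first half, relabelled. The gap you name (all tickets can be singletons while some sit on nodes outside $N_{opt}$) is real, not hypothetical: the BW entry of \tablename~\ref{tab:Tabu-MPAR} sprays tickets to \emph{any} pure node, so an infectious-free instant with $N\neq N_{opt}$ is genuinely reachable, and the literal instantaneous biconditional is false. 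Your repair --- using the GW/WG entries to show that a configuration with a ticket parked outside $N_{opt}$ is not terminal, so that terminal infectious-free configurations satisfy $N\subseteq N_{opt}$ and hence $N=N_{opt}$ by the counting identity --- is essentially the argument the paper defers to its final theorem (that the transition rules lead $N$ to $N_{opt}$), imported into the lemma. So your proof establishes the version of the statement that the paper actually needs downstream, namely that the stable infectious-free configurations are exactly those with $N=N_{opt}$, at the price of an explicit terminality/recurrent-encounter hypothesis; the paper avoids paying that price only by silently proving one direction twice. You should state that hypothesis as part of the lemma rather than leave it implicit, since without it the ``if'' direction cannot be saved.
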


\begin{proof}
Since that the number of all tickets in the network is $|N_{opt}|$, if there exist an infectious node $n_a$, then its tickets must be at least 2 and the tickets in all other nodes are less than $|N_{opt}|-2$. So the number of nodes holding the message is at most $|N_{opt}|-1$, and in this case $n_a$ has 2 tickets for the message, and all the other nodes with replicas only have 1 ticket. So the ``only if'' condition works. 
The node set holding the message accords with $N_{opt}$ means that the number of nodes holding the message is $|N_{opt}|$, which is the maximal number of replicas for this message in the network. So the number of tickets in each of these nodes is equal to 1. According to the definition of node state in Tabu-MPAR, all of these nodes are in infected state, and all other nodes are in pure state, which indicates that currently there is no infectious node in the network. So the ``if'' condition works.
\end{proof}

\begin{lemma}
If the set $N$ evolves to $N_{opt}$, then it would not change any more.
\label{lemma:change}
\end{lemma}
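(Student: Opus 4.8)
The plan is to show that the configuration $N = N_{opt}$ is a fixed point of the routing dynamics, so that once reached it is preserved under every subsequent encounter, and hence forever. First I would invoke Lemma~\ref{lemma:no_infectious}: when $N = N_{opt}$ there is no infectious node, so every node of $N_{opt}$ is in state G (holding exactly one ticket) and every node outside $N_{opt}$ is in state W. This yields the clean characterization I want to exploit as an invariant, namely the equivalences ``node in state G $\iff$ node $\in N_{opt}$'' and ``node in state W $\iff$ node $\notin N_{opt}$''. Because the set $N$ of message-holders can change only when some node gains the message (a W$\rightarrow$G or W$\rightarrow$B transition) or loses it (a transition to W), it suffices to inspect which table entries can fire in this configuration.

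Next I would argue that, since no node is in state B, the only encounters that can occur are of types GG, GW, WG and WW; in particular every entry in the B-row and B-column of \tablename~\ref{tab:Tabu-MPAR} is unreachable. The diagonal entries GG and WW induce no state change by definition. For the two off-diagonal types I would use the equivalences above to show that the guards of the only $N$-altering transitions are vacuously false: removal requires G$\rightarrow$W, whose guard demands $n_a\notin N_{opt}$, yet a G-node lies in $N_{opt}$; addition requires W$\rightarrow$G, whose guard demands $n_a\in N_{opt}$, yet a W-node lies outside $N_{opt}$. Hence no node changes whether it holds the message, and $N$ is unchanged after any single encounter.

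The step I expect to be the main obstacle is ruling out that the ticket reallocation in lines~3--14 of Algorithm~\ref{alg:Tabu-MPAR} silently creates an infectious (state~B) node, which would reopen WB/BW replication and break the fixed point. Here I would give a conservation argument. The total number of tickets is fixed at $|N_{opt}|$, and in the configuration $N = N_{opt}$ they are distributed as exactly one ticket per holder over the $|N_{opt}|$ holders. When two G-nodes meet, the reallocation pools their two tickets and, since the real-valued shares $a$ and $b$ each lie strictly between $0$ and $2$ and sum to $2$, the floor/ceiling rounding of lines~5--14 assigns one ticket to each, so neither reaches two tickets and no G-node is promoted to B; when a W-endpoint is involved it holds no ticket and, by the guard analysis above, no W$\rightarrow$G transition fires, so it receives none. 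Thus no infectious node is ever recreated and the single-ticket distribution is preserved exactly. Combining the guard analysis with this accounting shows that an arbitrary encounter leaves both the membership of $N$ and the ticket distribution intact; an induction over the sequence of encounters then gives that $N$ remains equal to $N_{opt}$ for all subsequent time, which is the claim.
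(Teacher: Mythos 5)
Your proof takes essentially the same route as the paper's: invoke Lemma~\ref{lemma:no_infectious} to get the invariant characterization (every holder is in state G and lies in $N_{opt}$, every non-holder is in state W and lies outside it), then check that every $N$-altering guard in \tablename~\ref{tab:Tabu-MPAR} is false under that characterization. You are, however, more complete than the paper: its proof only argues the GB grid (impossible, since no B node exists) and the G$\rightarrow$W guard for nodes in $N$, leaving the symmetric addition case (W$\rightarrow$G in grid WG) implicit, which you treat explicitly, and you also make the induction over encounters explicit. The one place you go beyond the paper is the ticket-reallocation analysis, which the paper's proof ignores entirely --- defensibly so, since under the table semantics a G node can become B only through grid GB, i.e., only by meeting a B node, and none exist. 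Your conservation argument is the right instinct, but note that it quietly corrects the algorithm as written: lines 3--4 of Algorithm~\ref{alg:Tabu-MPAR} set $a+b=|N_{opt}|$, not the pooled two tickets of the encountering pair, so for $|N_{opt}|>2$ the literal formula could hand a G node two or more tickets, and your claim that the shares ``sum to 2'' would fail. That literal reading, though, would also destroy ticket conservation and hence Lemma~\ref{lemma:no_infectious} itself; the ticket-conserving interpretation you adopted is the only one under which the lemma and the paper's own proofs are coherent, and under it your argument is sound.
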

\begin{proof}
From lemma~2, if the set $N$ reaches $N_{opt}$, then all the nodes in $N$ are in infected (G) state, and there is no non-pure node outside $N$. Since that there is no node in B state, so the transition in grid GB could not happen any more. In grid GW, the transition $G\rightarrow W$ happens only if the current node is not belong to $N_{opt}=N$, which conflicts the assumption. So all the states of all elements in $N$ no longer change.
\end{proof}

\begin{theorem}
The transition rule in \tablename~\ref{tab:Tabu-MPAR} leads $N$ to $N_{opt}$.
\end{theorem}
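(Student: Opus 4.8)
The plan is to combine the two lemmas just established with a monotone progress argument. Lemma~\ref{lemma:change} already shows that $N_{opt}$ is an absorbing configuration, so the only thing left to establish is \emph{convergence}: that starting from the initial state $N=\{n_s\}$, the transitions of \tablename~\ref{tab:Tabu-MPAR} drive $N$ into $N_{opt}$ after finitely many productive encounters. I would make this precise under the standing opportunistic assumption that every pair of nodes meets infinitely often, so that any encounter which \emph{can} change the network state eventually \emph{does} occur.

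First I would record the ticket invariant already used in the proof of Lemma~\ref{lemma:no_infectious}: the total number of tickets in the network is conserved at $|N_{opt}|$, because every grid of \tablename~\ref{tab:Tabu-MPAR} either redistributes tickets between the two encountering nodes (the $\mathrm{B}|\mathrm{G}$ entries) or, in the $\mathrm{GW}$/$\mathrm{WG}$ entries, moves a single ticket from the node that drops the message to the node that receives it. Since each holder carries at least one ticket, this yields $|N|\le|N_{opt}|$ at all times, with equality exactly when no node is infectious (state $\mathrm{B}$).

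Next I would separate the dynamics into its two mechanisms and exhibit the lexicographic potential $\Phi=(|N|,\,|N\cap N_{opt}|)$. For the spray mechanism: while an infectious node survives, its encounter with a pure node allocates the latter at least one ticket, so $|N|$ strictly increases; being bounded above by $|N_{opt}|$, after finitely many sprays no $\mathrm{B}$ node remains and $|N|=|N_{opt}|$. For the migration mechanism: once $|N|=|N_{opt}|$, if $N\neq N_{opt}$ then there exist both a holder $n_a\in N\setminus N_{opt}$ and a non-holder $n_b\in N_{opt}\setminus N$; their encounter triggers $\mathrm{GW}$/$\mathrm{WG}$, i.e.\ $n_a:\mathrm{G}\to\mathrm{W}$ and $n_b:\mathrm{W}\to\mathrm{G}$, which leaves $|N|$ fixed (no $\mathrm{B}$ node is created, since $n_b$ ends with a single ticket) and raises $|N\cap N_{opt}|$ by one. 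Hence every state-changing encounter strictly increases $\Phi$ in the lexicographic order, while $\Phi$ is bounded above by $(|N_{opt}|,|N_{opt}|)$; the unique configuration attaining this maximum is $N=N_{opt}$. Invoking Lemma~\ref{lemma:change} then shows $N$ remains there, which completes the proof.

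I expect the main obstacle to be the liveness/fairness part rather than the algebra: one must guarantee that the specific productive encounters---an infectious node meeting a pure node during spraying, and a non-$N_{opt}$ holder meeting an $N_{opt}$ non-holder during migration---actually take place, which is precisely why I would state the infinitely-often contact assumption explicitly. A secondary point to check carefully is that interleaving spray and migration events does not spoil monotonicity; this is exactly what the lexicographic choice of $\Phi$ handles, since migration never decreases $|N|$ and spraying never decreases $|N\cap N_{opt}|$.
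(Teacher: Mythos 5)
Your proof is correct and follows the same two-phase decomposition as the paper's own argument---spraying eliminates infectious (B) nodes, the one-directional GW/WG migrations then correct membership, and Lemma~\ref{lemma:change} gives absorption---but it supplies machinery the paper's proof lacks. The paper's proof is a sketch: it notes that B$\leftrightarrow$G exchanges do not modify $N$, that allocations of at least one ticket make B nodes eventually become G ``if there are enough encounter opportunities,'' and that the W$\leftrightarrow$G rules admit only $N_{opt}$-members into $N$, and then declares the theorem to follow from Lemmas~\ref{lemma:no_infectious} and~\ref{lemma:change}; it never argues why the interleaved process can neither stall short of $N_{opt}$ nor keep changing $N$ forever. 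Your lexicographic potential $\Phi=(|N|,\,|N\cap N_{opt}|)$ closes exactly that gap: sprays strictly raise the first coordinate, migrations preserve it and strictly raise the second, and the bounded maximum of $\Phi$ is attained only at $N=N_{opt}$, which in addition yields an explicit bound on the number of productive encounters. Your ticket-conservation invariant also re-derives the content of Lemma~\ref{lemma:no_infectious} (a B node holds at least two tickets, so $|N|=|N_{opt}|$ forces every holder into G), and you rightly make explicit the fairness assumption (every pair meets infinitely often) that the paper only gestures at---both proofs need it, since a productive encounter that merely exists need never occur. Two minor points to tighten: (i) claim monotonicity of $\Phi$ only for encounters that modify $N$, since pure ticket redistributions (the B--G grid entries) change the configuration without changing $\Phi$; and (ii) your conservation invariant is the intended semantics of Algorithm~\ref{alg:Tabu-MPAR} rather than its literal text, whose ticket-reallocation lines split $|N_{opt}|$ tickets between any encountering pair---the paper itself relies on the conserving reading in Lemma~\ref{lemma:no_infectious}, so you are consistent with it, but the discrepancy is worth flagging.
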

\begin{proof}
The state transition between B and G does not modify the set $N$. However, from line 4--10 in algorithm~3, it is guaranteed that the distributed tickets are no less than 1, which means that any node in state B would transfer to state G if there are enough encounter opportunities during the routing process. So we just need to focus on the transitions between W and G, as shown in grid WG and GW, where the rule is to transfer the node belonging to $N_{opt}$ to G. Consequently our theorem comes directly from lemma~2 and 3.
\end{proof}

\figurename~\ref{fig:tabu_rt} shows the routing process of Tabu-MPAR, the network situation is the same as \figurename~\ref{fig:local_rt}, where the source $n_2$ generates the message at time $t_1$. Since $n_2$ computes that $N_{opt}=\{n_1,n_2,n_3\}$, so the number of tickets for $n_2$ is initialized to be $|N_{opt}|=3$. At time $t_2$, when $n_2$ and $n_3$ encountered, we allocate the tickets by referring to $E[D_2]$ and $E[D_3]$. The node set $N$ has evolved to $N_{opt}$ at time $t_3$, and there is no infectious nodes in the network, corresponding to lemma~\ref{lemma:no_infectious}. Finally the routing process ends in $t_4$ due to the completion of the delivery.

\section{Simulation}
\label{sec:simulation}
In this section, we conduct extensive simulations to evaluate the performance of the MPAR algorithm by using the ONE simulator \cite{Keranen2009}. We employ the Working Day Movement (WDM) model proposed in \cite{Ekman2008} and set the map to be Manhattan blocks. This model incorporates some sense of hierarchy and distinguishes between inter-building and intra-building movements. Some sub-models such as home, office, evening activities and different transports are introduced in WDM, so as to capture the society characteristics of people. An office model reproduces a kind of star-like trajectories around a desk of the person inside the office building, while home model is just a sojourn in a particular point of a home location. The evening activity sub-model reflects a meeting with friends after work by modeling a random walk of a group along the streets. The Manhattan community model confines the routing paths of the mobile nodes to certain paths that reflect their real moving pattern in addition to the colocation pattern. The Manhattan model consists of grids in a matrix, in which all nodes can only move on the sides of a grid, as shown in \figurename~\ref{fig:manhattan}.

\begin{figure}[!t]
\centering
\includegraphics[width=2in]{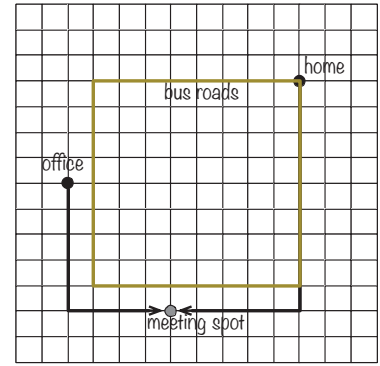}
\caption{Working day movement model in Manhattan blocks}
\label{fig:manhattan}
\end{figure}

Concretely, the network is established as follows. We set the number of pedestrians to be 200, 400, 600, 800. The number of buses, offices and activity spots are correspondingly set to be 2\%, 20\%, 2.5\% of the total number of pedestrians. The working time for each simulation day constants at 8 simulation hours for each person. Each person's probability to do evening activity is set to be 0.5. Besides, we set the probability of owning a car for each person to be 0.2, so people might drive themselves to these places instead of taking public vehicles. The walking speed of pedestrians is set to be 0.8--1.4 m/s. As shown in \figurename~\ref{fig:manhattan}, there are three kinds of locations in the simulation, i.e. home, office and meeting spot. There is a loop line for buses in the center of the entire Manhattan square, so the pedestrians can take a bus to reach their destinations. We put a throw box in every home, office and meeting spot. The capacity of each throwbox is set to be sufficient enough to take custody of all received messages. All the nodes communicate with the bluetooth interface, of which the transmit diameter is 10m and the transmit speed is 250KB/s. Since that each spot (home, office and meeting spot) is set to be a point in the map, if a node reach the spot, it is ensured to encounter with the throwbox so as to have its messages taken custody. 

 We compare the MPAR algorithm with the existing social-aware algorithms: Delegation Forwarding (DF) \cite{Erramilli2008} and SimBet routing \cite{Daly2007}. 
%\begin{itemize}
%\item \textbf{Delegation Fowarding (DF)} \cite{Erramilli2008}. The main idea of DF is to replicate the copies of a packet only to the individual that has a higher utility value to the destination. In this paper, the metric of DF is set to be the last encounter time to the destination node. Each message is only forwarded to nodes with a more recent encounter time to the destination node.
%\item \textbf{SimBet} \cite{Daly2007}. In this algorithm, a novel metric, SimBet Utility, composed of betweenness and similarity, is proposed; the ego network analysis technique is used to estimate the values of the betweenness centrality metric and the similarity metric for each node, based on local information. When some nodes encounter each other, the algorithm lets the node with the maximum utility value deliver messages. 
%\end{itemize}
Four performance metrics commonly used are examined: delivery ratio, average latency overhead ratio and average hop count. Simulation results demonstrate that the MPAR algorithm can significantly improve SDTNs routing performance.
%\begin{enumerate}
%\item \textbf{Delivery Ratio:} Normally, the ultimate goal of routing in SDTNs is to achieve great delivery performance. This criteria is the measure of delivery capability for each protocol. 
%\item \textbf{Average Latency:} End-to-end latency is another important concern in SDTNs routing design. Long average latency means that the message must occupy valuable buffer space for longer, and consequently we desire a low latency value.
%\item \textbf{Overhead Ratio:} It is desirable to have a low overhead ratio, since that it reflect the efficiency of message transmission. Overhead ratio is defined to be the number of relay operations (exclude the delivery action) over the number of total delivered messages.
%\item \textbf{Average Hop Count:} We expected to minimize the number of hops that a message must take in order to reach the destination.
%\end{enumerate}

\subsection{Simulation in Working Day Movement Model}
In the simulations, 30,000 messages are generated by randomly selecting source and destination nodes among all the pedestrians. The size of each message is between 500K and 1M. In simulations on evaluating the four metrics, we set $|\overline{N}|$=200, 400, 600 and 800, respectively. The simulation area is adjusted according to $|\overline{N}|$ to keep the connectivity of the network. The entire simulation time is set to be 12 days. 
\subsubsection{Varying Time-to-live}

\begin{figure*}[!t]
\centering
\subfigure[Number of nodes: $|\overline{N}|=200$\label{200_delivery_ttl}]
{\includegraphics[width=0.24\textwidth]{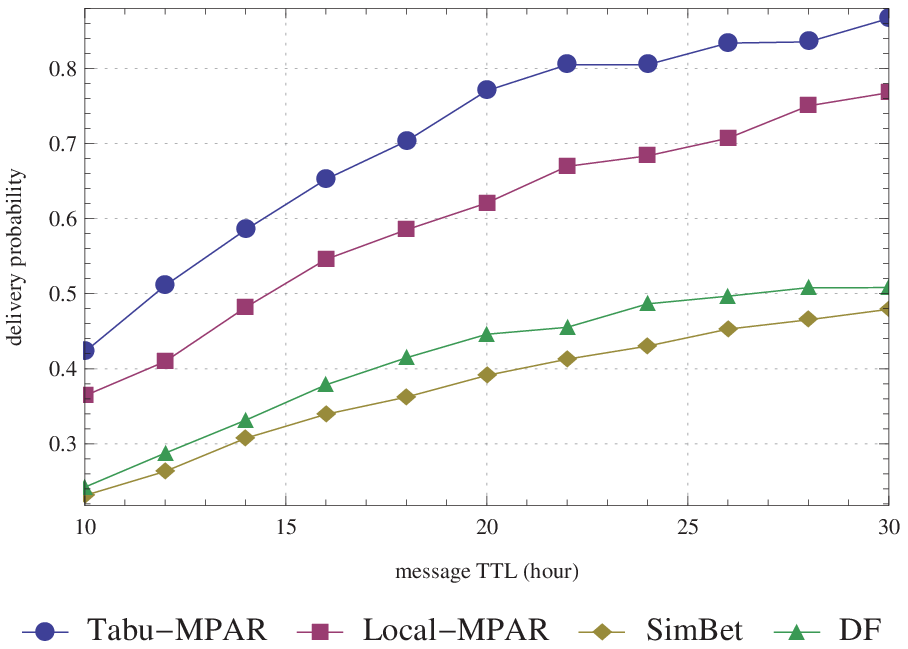}}
\subfigure[Number of nodes: $|\overline{N}|=400$\label{400_delivery_ttl}]
{\includegraphics[width=0.24\textwidth]{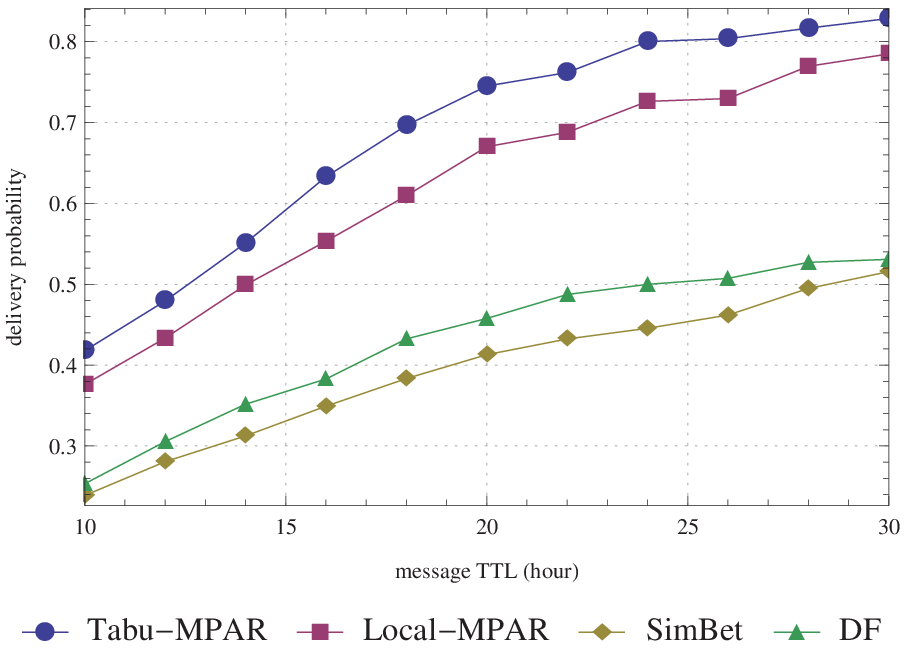}}
\subfigure[Number of nodes: $|\overline{N}|=600$\label{600_delivery_ttl}]
{\includegraphics[width=0.24\textwidth]{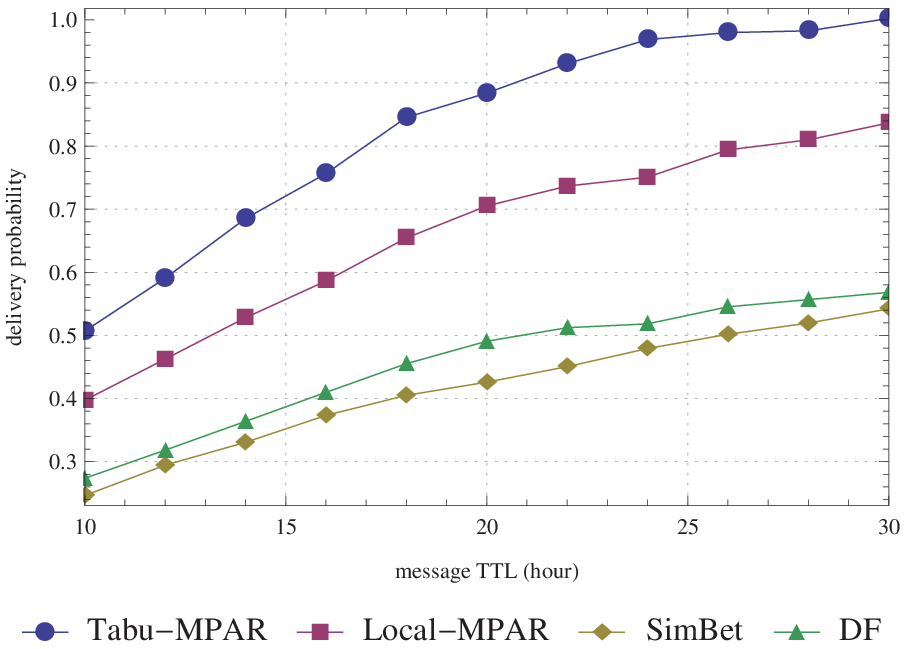}}
\subfigure[Number of nodes: $|\overline{N}|=800$\label{800_delivery_ttl}]
{\includegraphics[width=0.24\textwidth]{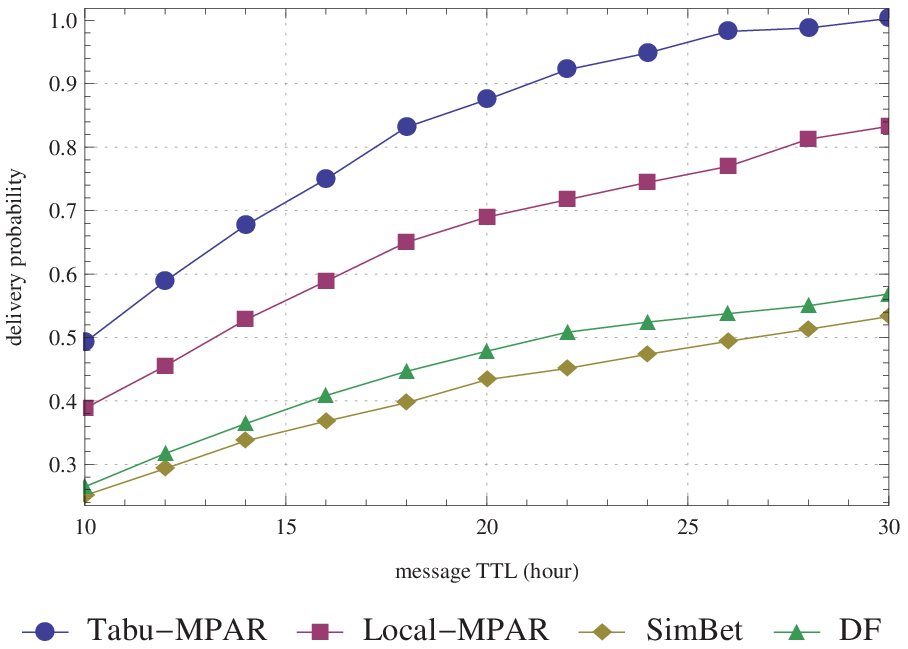}}
\caption{Performance comparisons of delivery probability vs. time-to-live}
\label{fig:delivery_ttl}
\end{figure*}

\begin{figure*}[!t]
\centering
\subfigure[Number of nodes: $|\overline{N}|=200$\label{200_avgLatency_ttl}]
{\includegraphics[width=0.24\textwidth]{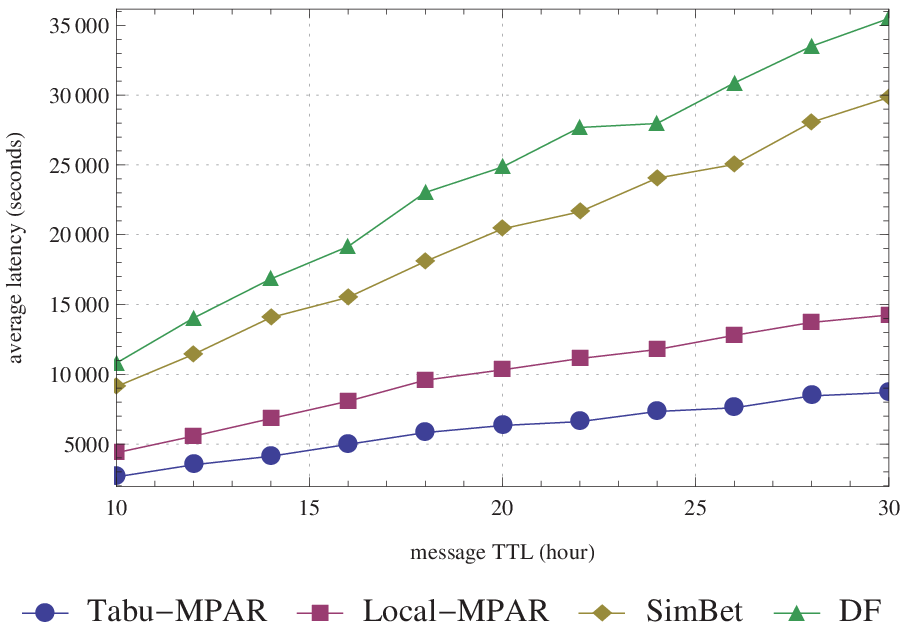}}
\subfigure[Number of nodes: $|\overline{N}|=400$\label{400_avgLatency_ttl}]
{\includegraphics[width=0.24\textwidth]{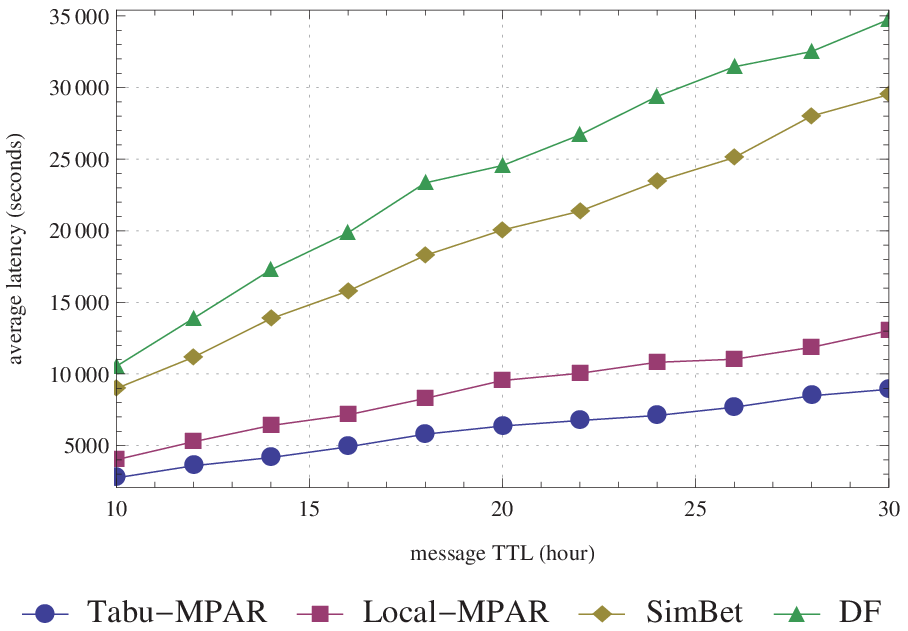}}
\subfigure[Number of nodes: $|\overline{N}|=600$\label{600_avgLatency_ttl}]
{\includegraphics[width=0.24\textwidth]{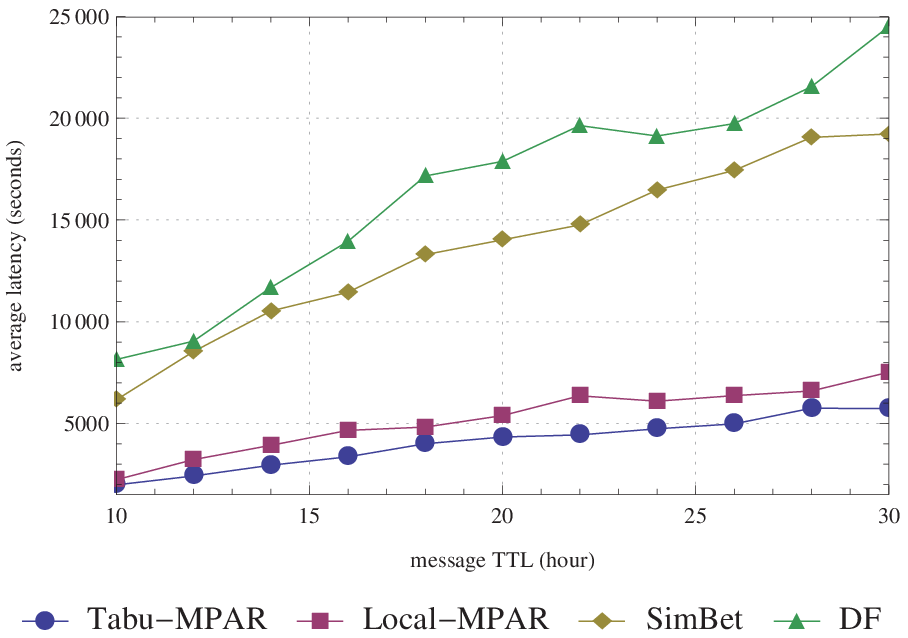}}
\subfigure[Number of nodes: $|\overline{N}|=800$\label{800_avgLatency_ttl}]
{\includegraphics[width=0.24\textwidth]{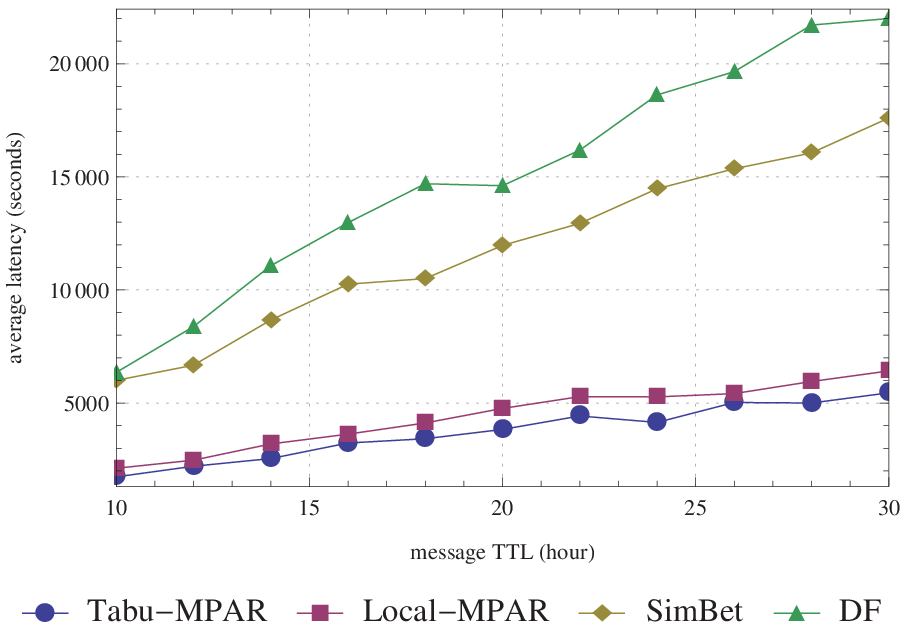}}
\caption{Performance comparisons of average end-to-end latency vs. time-to-live}
\label{fig:latency_ttl}
\end{figure*}

\begin{figure*}[!t]
\centering
\subfigure[Number of nodes: $|\overline{N}|=200$\label{200_overhead_ttl}]
{\includegraphics[width=0.24\textwidth]{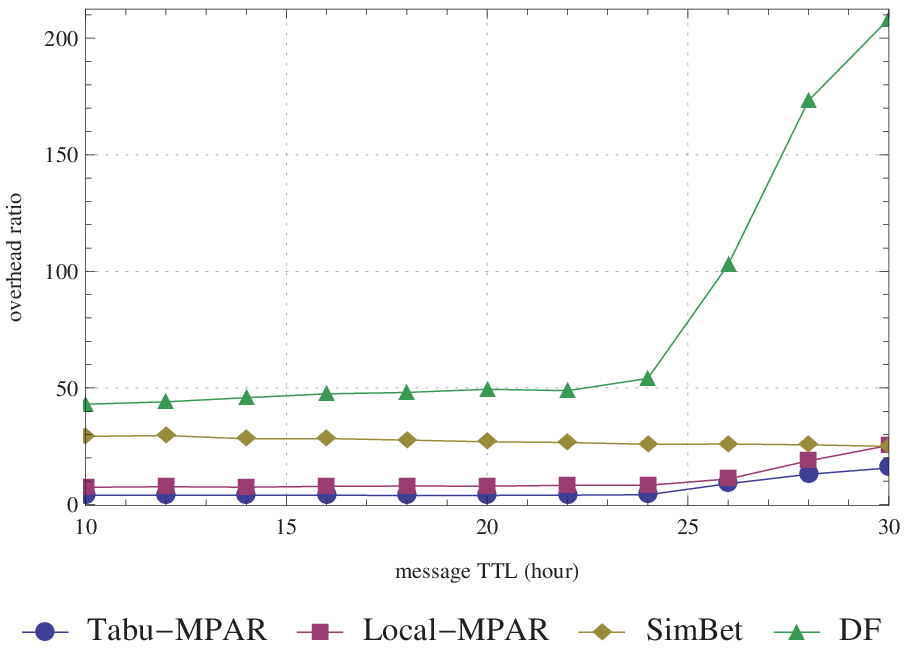}}
\subfigure[Number of nodes: $|\overline{N}|=400$\label{400_overhead_ttl}]
{\includegraphics[width=0.24\textwidth]{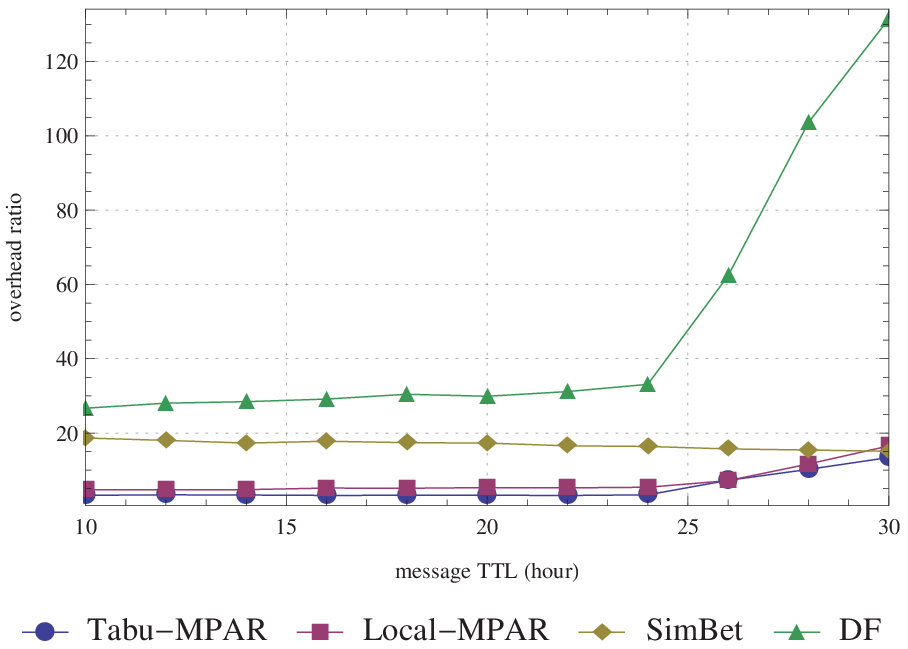}}
\subfigure[Number of nodes: $|\overline{N}|=600$\label{600_overhead_ttl}]
{\includegraphics[width=0.24\textwidth]{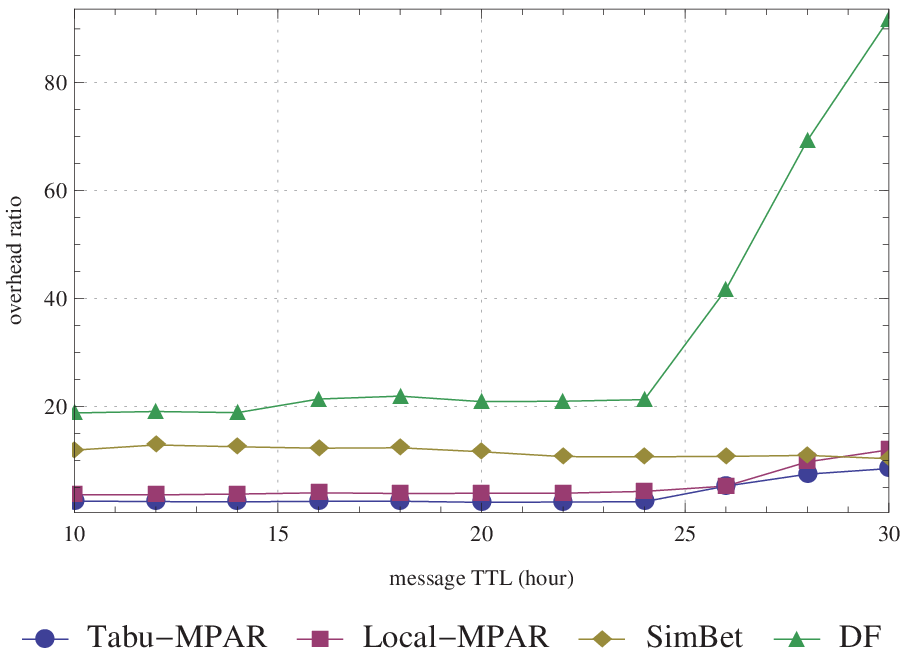}}
\subfigure[Number of nodes: $|\overline{N}|=800$\label{800_overhead_ttl}]
{\includegraphics[width=0.24\textwidth]{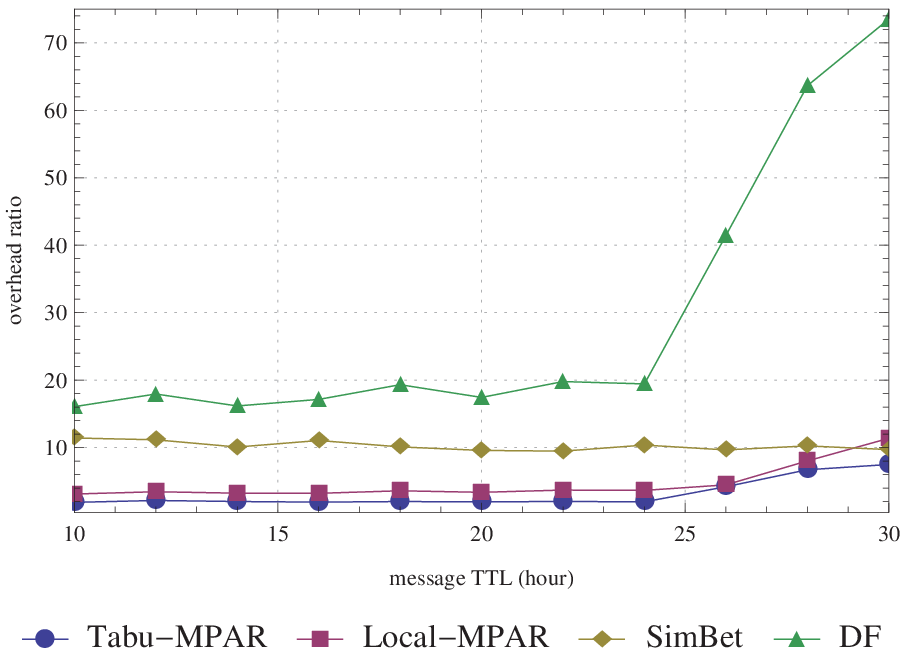}}
\caption{Performance comparisons of overhead ratio vs. time-to-live}
\label{fig:overhead_ttl}
\end{figure*}

\begin{figure*}[!t]
\centering
\subfigure[Number of nodes: $|\overline{N}|=200$\label{200_avgHop_ttl}]
{\includegraphics[width=0.24\textwidth]{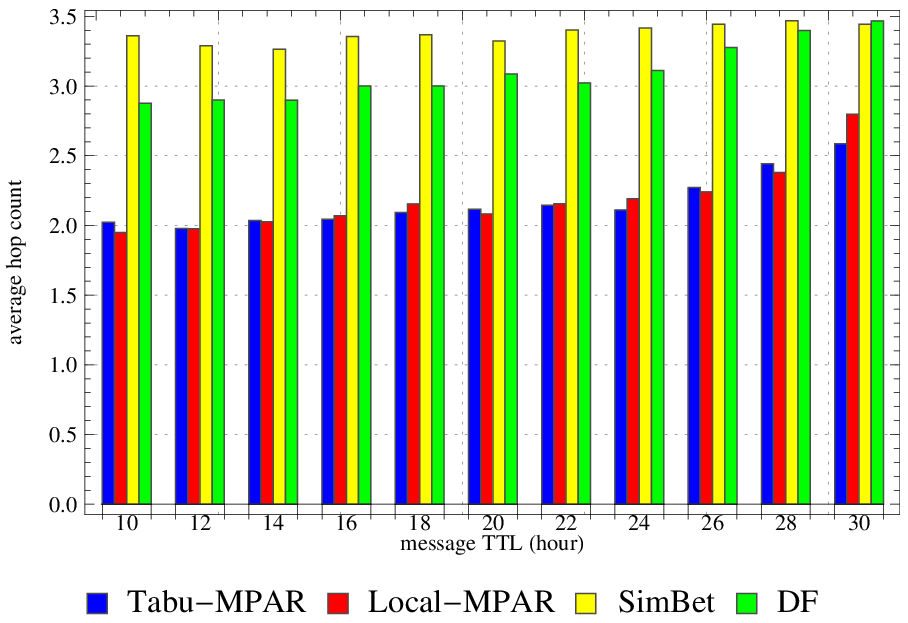}}
\subfigure[Number of nodes: $|\overline{N}|=400$\label{400_avgHop_ttl}]
{\includegraphics[width=0.24\textwidth]{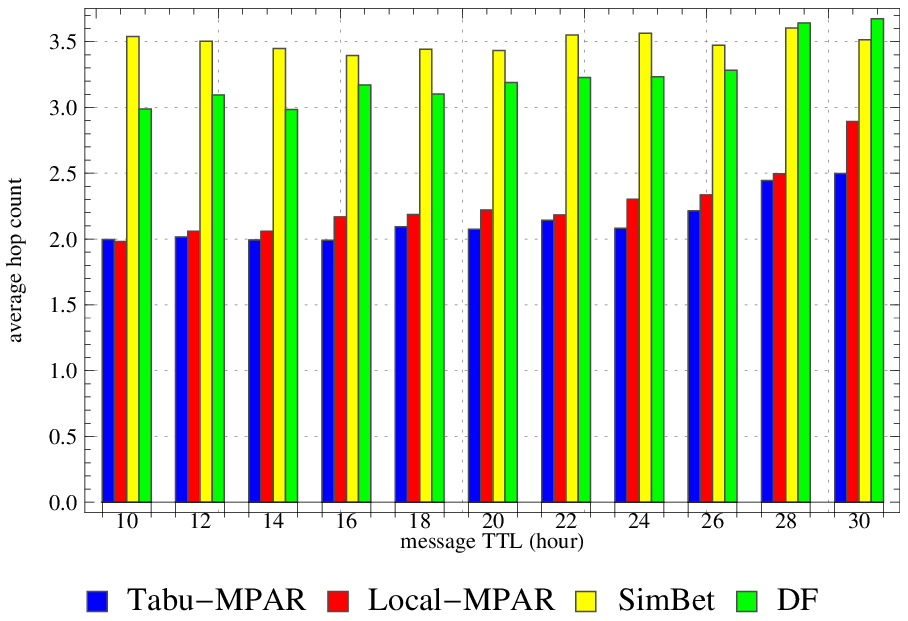}}
\subfigure[Number of nodes: $|\overline{N}|=600$\label{600_avgHop_ttl}]
{\includegraphics[width=0.24\textwidth]{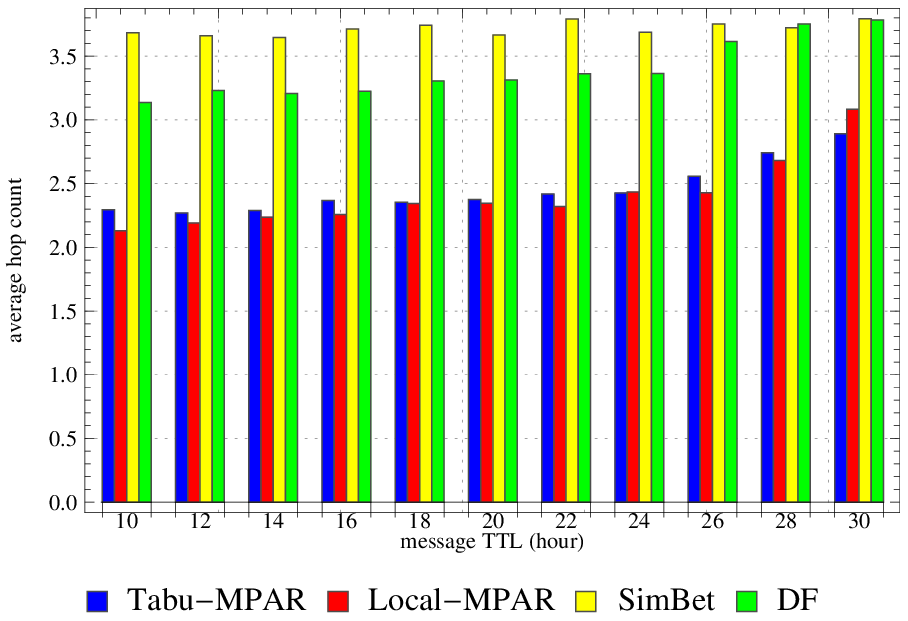}}
\subfigure[Number of nodes: $|\overline{N}|=800$\label{800_avgHop_ttl}]
{\includegraphics[width=0.24\textwidth]{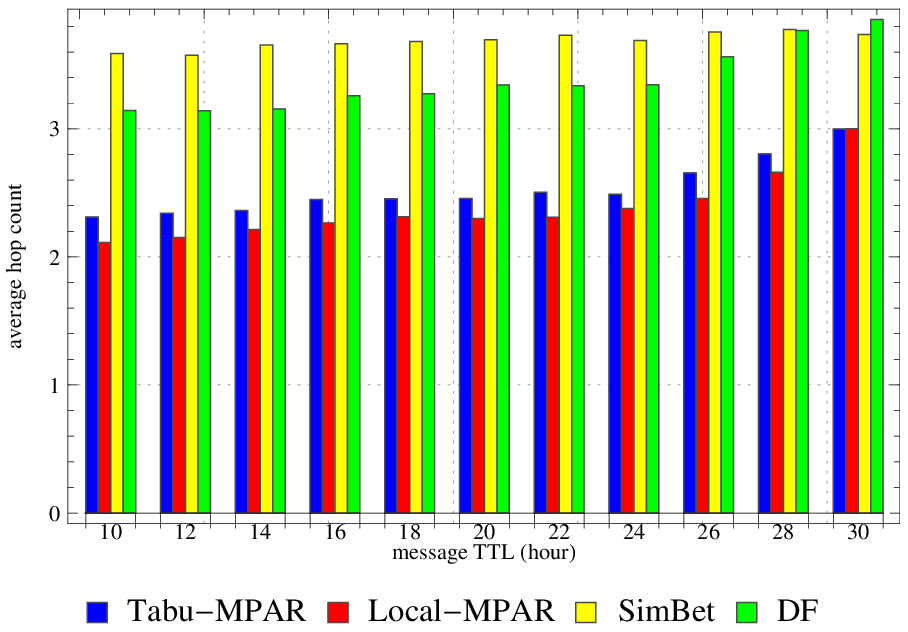}}
\caption{Performance comparisons of average hop count vs. time-to-live}
\label{fig:hop_ttl}
\end{figure*}

In the simulation of varying the time-to-live value, we set the buffer size constant at 200M. The time-to-live value is set from 10 to 30 hours. The simulation results are shown in \figurename~\ref{fig:delivery_ttl}, \ref{fig:latency_ttl}, \ref{fig:overhead_ttl} and \ref{fig:hop_ttl}.

These results show that the two MPAR algorithms significantly outperform DF and SimBet. Compared with DF, Tabu-MPAR and Local-MPAR increase the delivery ratio by about 71.1\% and 37.8\%, and reduces the average latency by approximate 79.2\% and 60.1\%, respectively. Compared with SimBet, Tabu-MPAR and Local-MPAR increase the delivery ratio by about 95.2\% and 55.3\%, and reduces the average latency by about 50\% and 70\%. Regarding the results in \figurename~\ref{fig:overhead_ttl}, both Tabu-MPAR and Local-MPAR have a comparably lower overhead ratio than DF and SimBet. For the three social-aware routing algorithm Tabu-MPAR, Local-MPAR and SimBet, the overhead performance is much better than DF, especially when the message TTL is set to be relatively large. When the number of nodes is smaller, the improvement is a little more apparent. Since that the MPAR routing algorithms constrain the maximal number of replicas and utilize the throwbox to deliver messages, the number of relay operations is very low. Moreover, along with the increasing of nodes, the overloads for each node is released, so the overhead ratio becomes lower. \figurename~\ref{fig:hop_ttl} shows the average hop count performance. We can see that the two MPAR algorithms have smaller average hop count than the other two ones. With the number of nodes increasing, the average hop count increases. This is because the simulation area is correspondingly magnified and it needs more nodes to cooperate to delivery each message.

\subsubsection{Varying Buffer Size}
In the simulation of varying the buffer size, we set the message time-to-live constant at 16 simulation hours. The buffer size is set from 50 to 300 MB. The simulation results are shown in \figurename~\ref{fig:delivery_buffer}, \ref{fig:latency_buffer}, \ref{fig:overhead_buffer} and \ref{fig:hop_buffer}.

\begin{figure*}[!t]
\centering
\subfigure[Number of nodes: $|\overline{N}|=200$\label{200_delivery_buffer}]
{\includegraphics[width=0.24\textwidth]{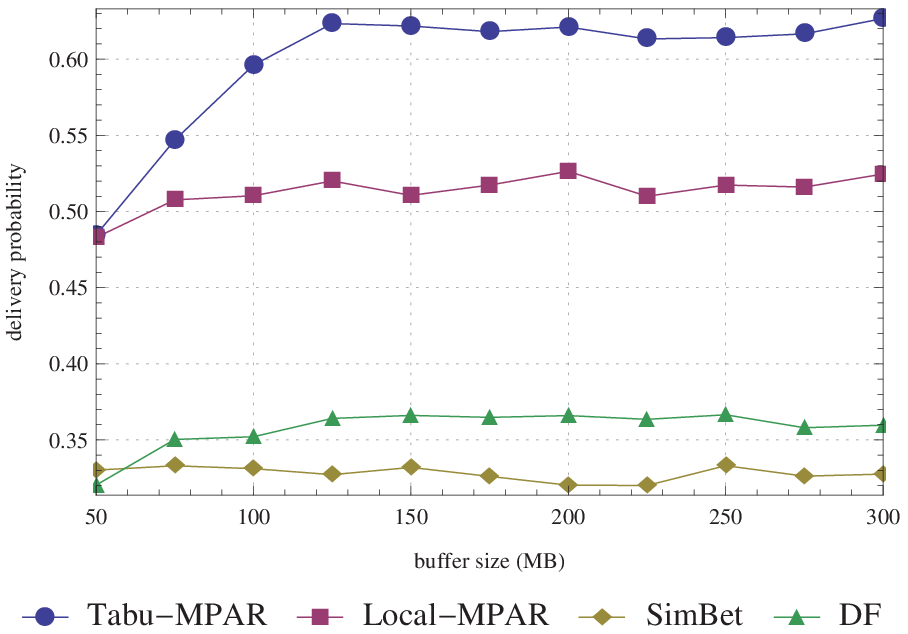}}
\subfigure[Number of nodes: $|\overline{N}|=400$\label{400_delivery_buffer}]
{\includegraphics[width=0.24\textwidth]{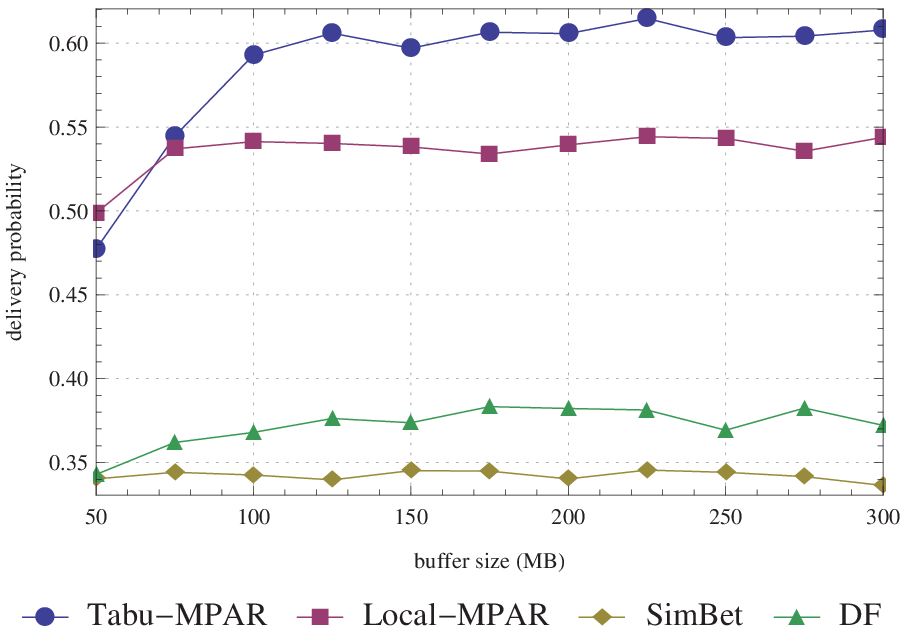}}
\subfigure[Number of nodes: $|\overline{N}|=600$\label{600_delivery_buffer}]
{\includegraphics[width=0.24\textwidth]{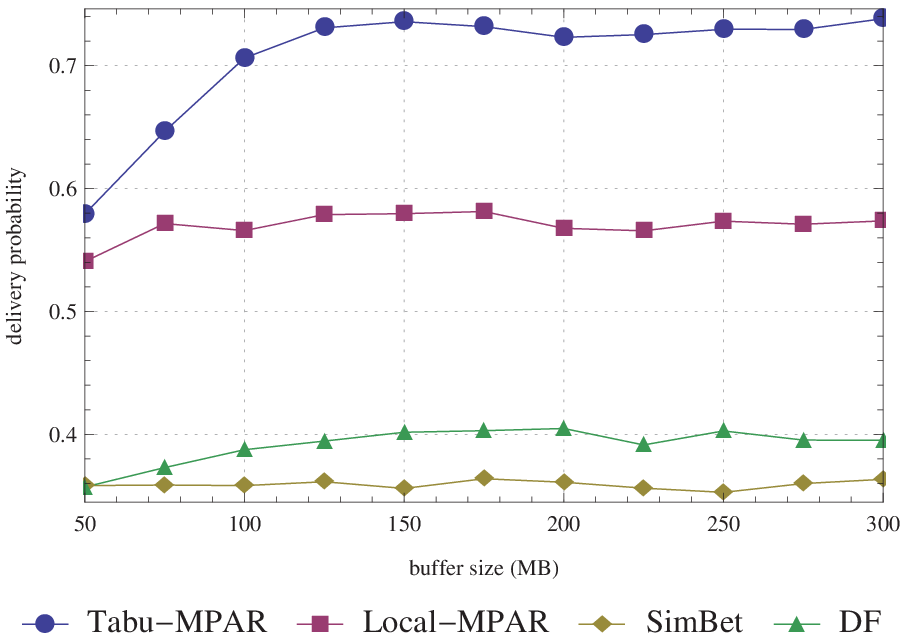}}
\subfigure[Number of nodes: $|\overline{N}|=800$\label{800_delivery_buffer}]
{\includegraphics[width=0.24\textwidth]{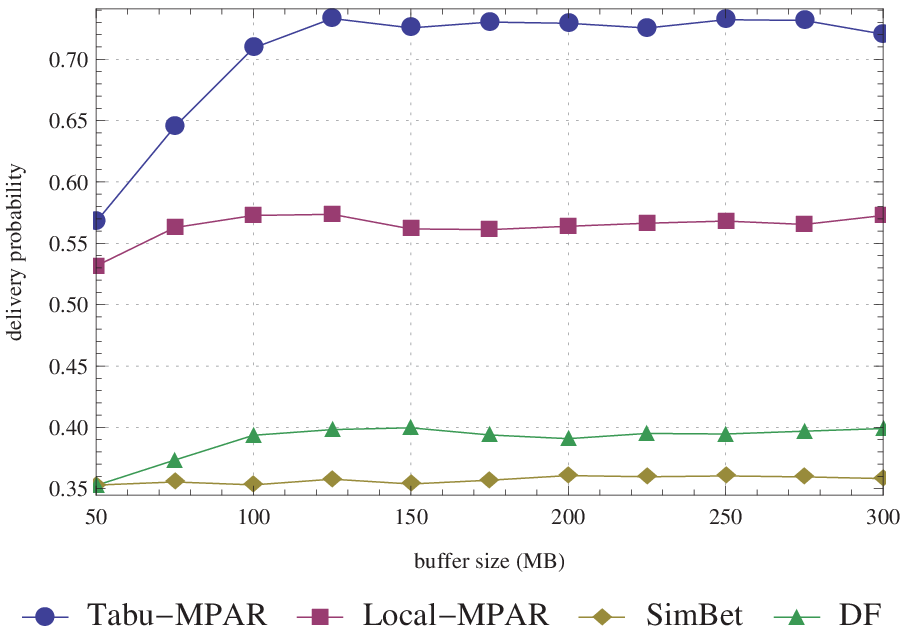}}
\caption{Performance comparisons of delivery probability vs. buffer size}
\label{fig:delivery_buffer}
\end{figure*}

\begin{figure*}[!t]
\centering
\subfigure[Number of nodes: $|\overline{N}|=200$\label{200_avgLatency_buffer}]
{\includegraphics[width=0.24\textwidth]{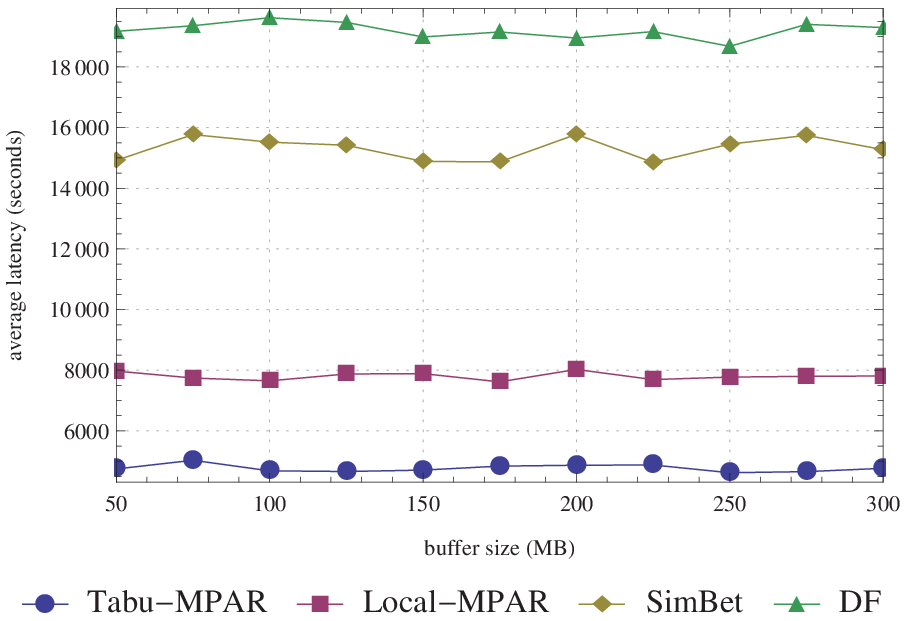}}
\subfigure[Number of nodes: $|\overline{N}|=400$\label{400_avgLatency_buffer}]
{\includegraphics[width=0.24\textwidth]{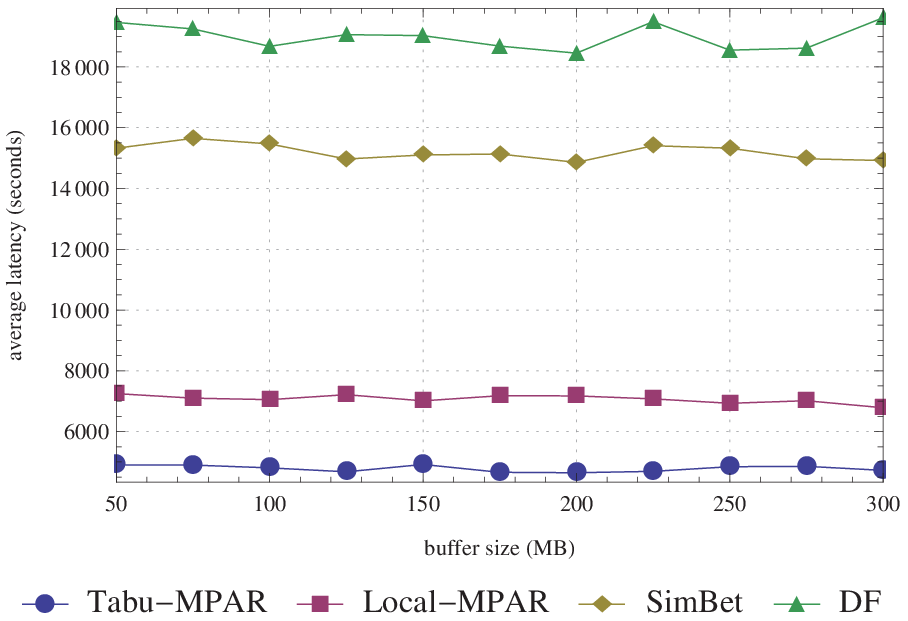}}
\subfigure[Number of nodes: $|\overline{N}|=600$\label{600_avgLatency_buffer}]
{\includegraphics[width=0.24\textwidth]{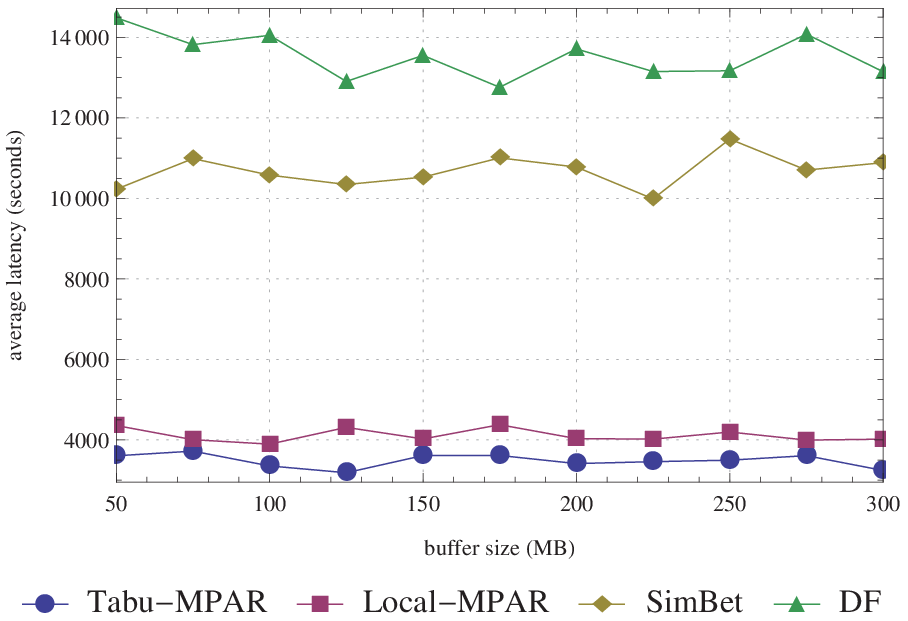}}
\subfigure[Number of nodes: $|\overline{N}|=800$\label{800_avgLatency_buffer}]
{\includegraphics[width=0.24\textwidth]{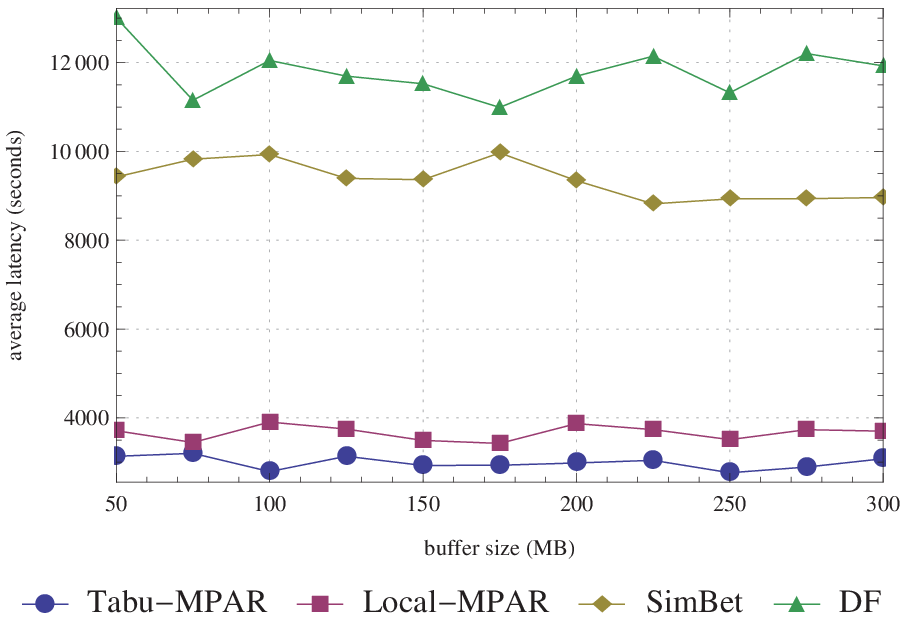}}
\caption{Performance comparisons of delivery probability vs. buffer size}
\label{fig:latency_buffer}
\end{figure*}

\begin{figure*}[!t]
\centering
\subfigure[Number of nodes: $|\overline{N}|=200$\label{200_overhead_buffer}]
{\includegraphics[width=0.24\textwidth]{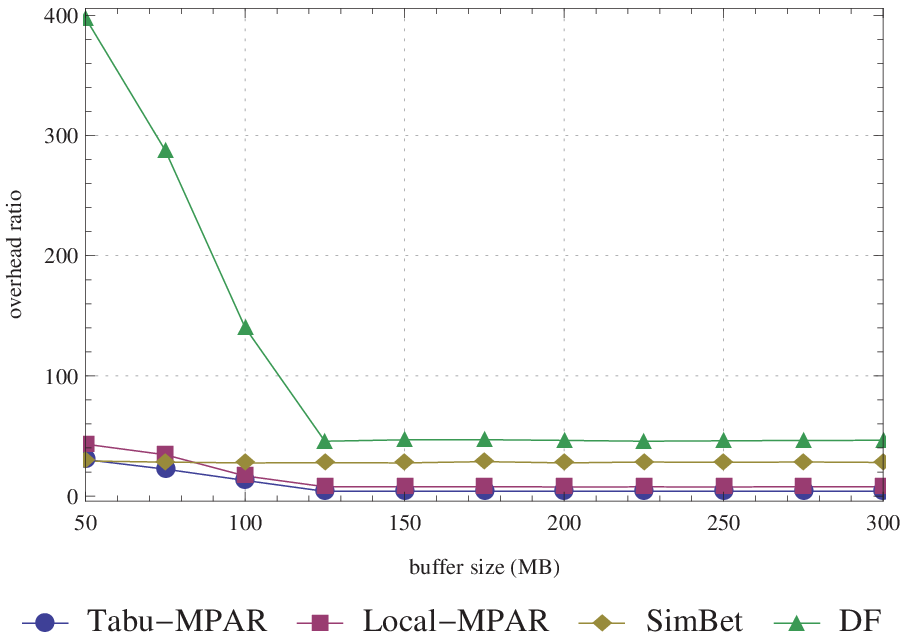}}
\subfigure[Number of nodes: $|\overline{N}|=400$\label{400_overhead_buffer}]
{\includegraphics[width=0.24\textwidth]{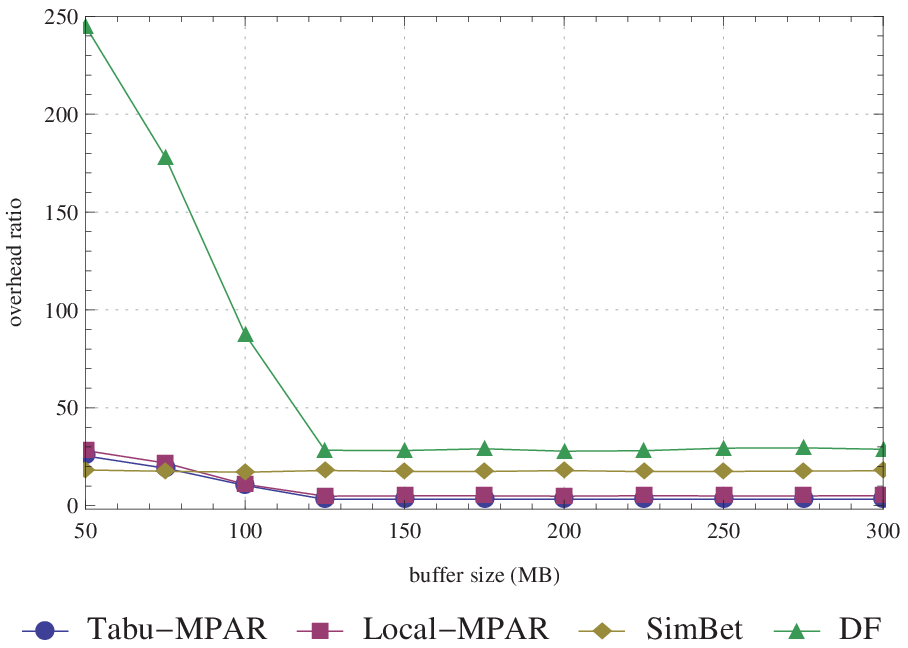}}
\subfigure[Number of nodes: $|\overline{N}|=600$\label{600_overhead_buffer}]
{\includegraphics[width=0.24\textwidth]{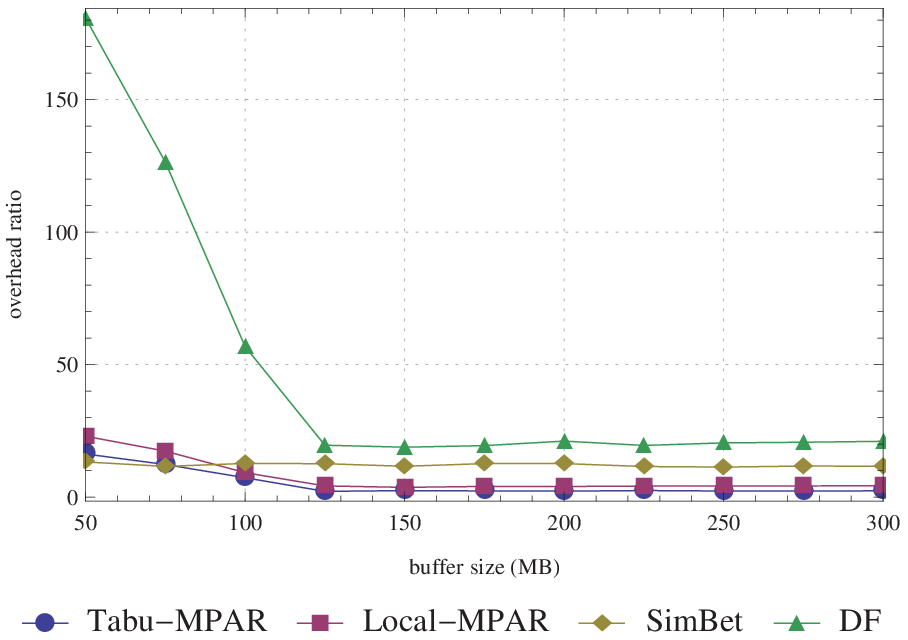}}
\subfigure[Number of nodes: $|\overline{N}|=800$\label{800_overhead_buffer}]
{\includegraphics[width=0.24\textwidth]{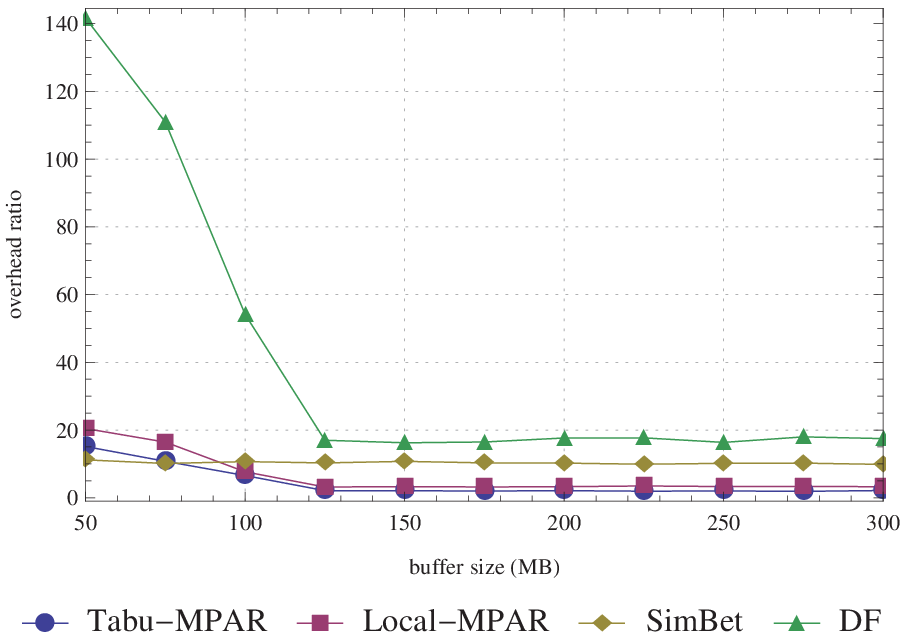}}
\caption{Performance comparisons of delivery probability vs. buffer size}
\label{fig:overhead_buffer}
\end{figure*}

\begin{figure*}[!t]
\centering
\subfigure[Number of nodes: $|\overline{N}|=200$\label{200_avgHop_buffer}]
{\includegraphics[width=0.24\textwidth]{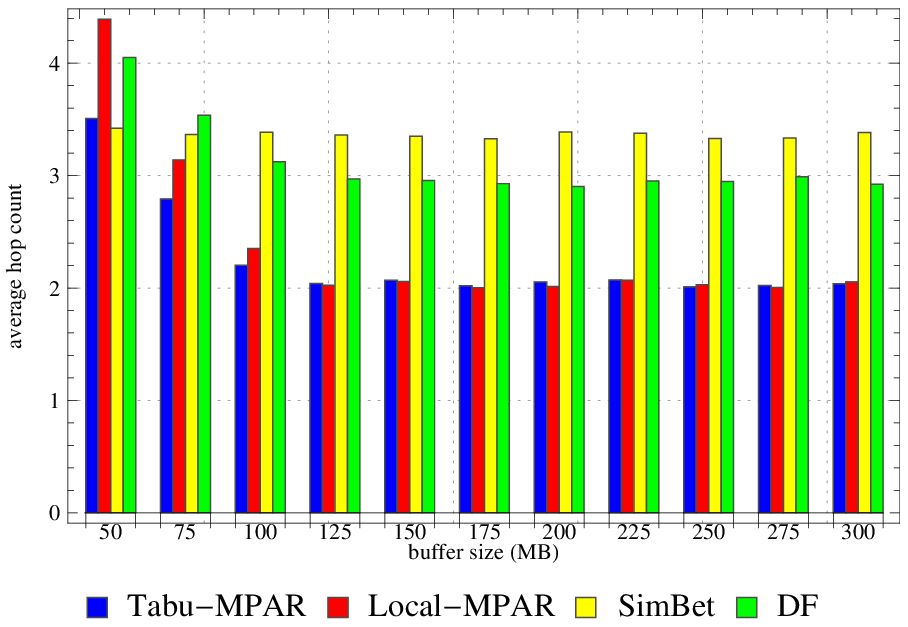}}
\subfigure[Number of nodes: $|\overline{N}|=400$\label{400_avgHop_buffer}]
{\includegraphics[width=0.24\textwidth]{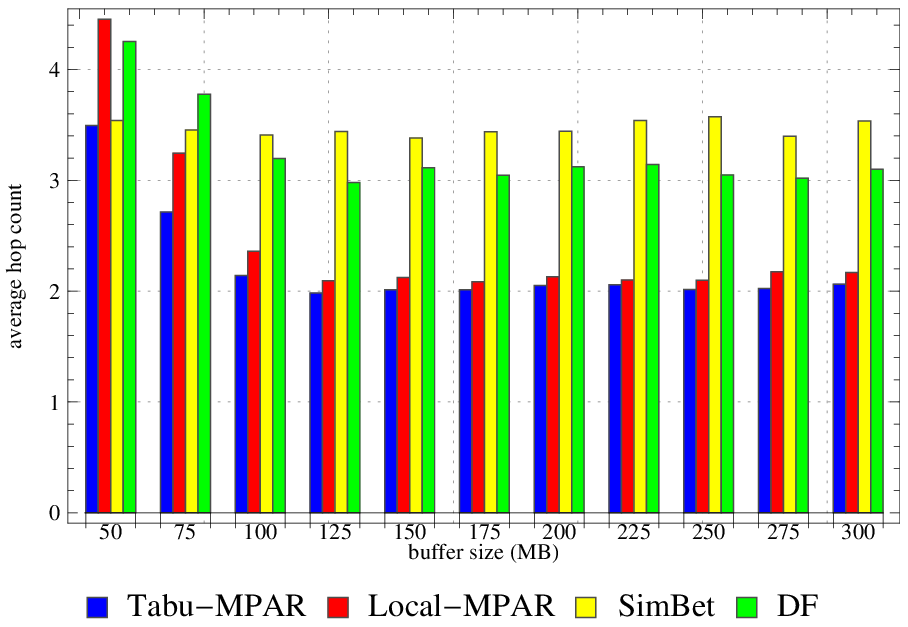}}
\subfigure[Number of nodes: $|\overline{N}|=600$\label{600_avgHop_buffer}]
{\includegraphics[width=0.24\textwidth]{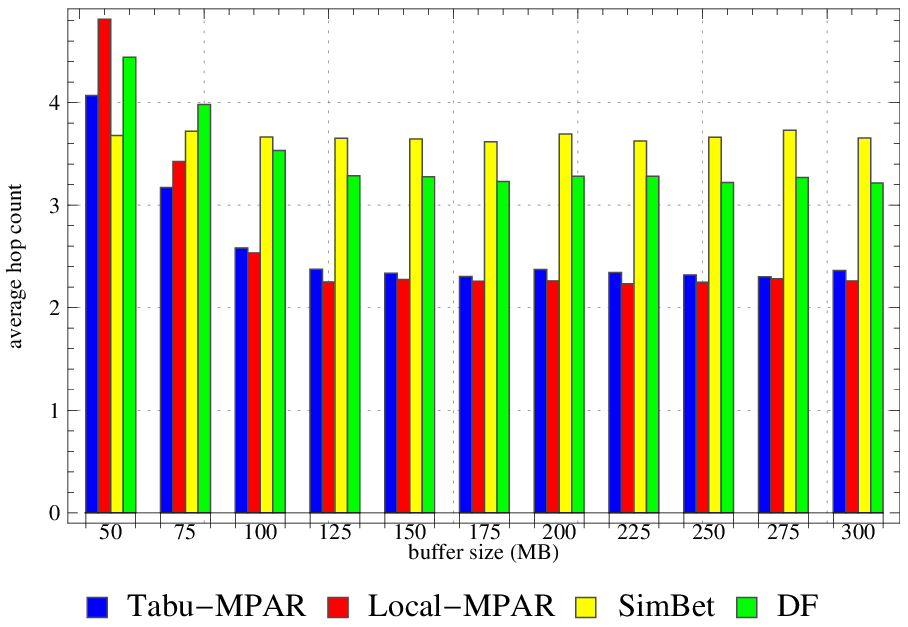}}
\subfigure[Number of nodes: $|\overline{N}|=800$\label{800_avgHop_buffer}]
{\includegraphics[width=0.24\textwidth]{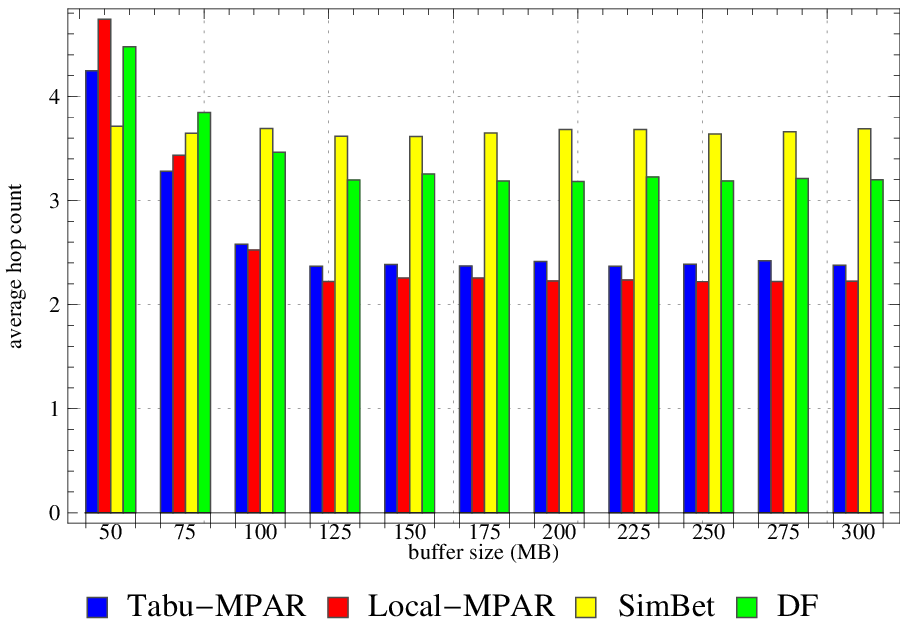}}
\caption{Performance comparisons of delivery probability vs. buffer size}
\label{fig:hop_buffer}
\end{figure*}

We can see from the results that the two MPAR algorithms significantly outperform DF and SimBet.  Compared with DF, Tabu-MPAR and Local-MPAR increase the delivery ratio by about 83.8\% and 40.3\%, and reduces the average latency by about 83.8\% and 57.8\%, respectively. Compared with SimBet, Tabu-MPAR and Local-MPAR increase the delivery ratio by nearly 3 and 2.5 times, and reduces the average latency by about 80\% and 60\%.  Regarding the results in \figurename~\ref{fig:overhead_buffer},Tabu-MPAR and Local-MPAR have almost the same overhead ratio, which is a little bit lower than SimBet and is much lower than DF when the buffer size is smaller than 120M. When there are only a few nodes in a relatively small area, the improvement of the overhead ratio of the social-aware routing schemes is more obvious, as shown in \figurename~\ref{fig:overhead_buffer}. The same as the results in the simulation of varying TTL, we can see from \figurename~\ref{fig:hop_buffer} that the two MPAR algorithms have smaller average hop count than the other two ones. With the number of nodes increasing, the average hop count increases. The reason is as same as in section 6.3.1 that the simulation area is correspondingly magnified and it needs more nodes to cooperate to delivery each message.

\section{Conclusion}
\label{sec:conclusion}
In this paper, we proposed a movement pattern-aware routing protocol MPAR for SDTNs. 
We present a periodical time-aware movement record model and extract the movement pattern from the movement record of nodes, and then each node set is viewed as an entirety during the whole routing process. Two key properties for routing are analyzed and consequently the routing problem is modeled to be a combinatorial optimal search problem. Two search algorithms are proposed to solve the optimization problem, which are respectively based on the local search scheme and the tabu search scheme. Two respective movement pattern-aware routing schemes are designed based on the local search algorithm and the tabu search algorithm, which is called Local-MPAR and Tabu-MPAR and are respectively. In addition, we prove that the Tabu-MPAR can guide the relay node(s) set in evolving to the optimal one. We demonstrate how the MPAR algorithm significantly outperforms the previous ones through extensive simulations, based on the synthetic SDTN mobility model.

\section{Conflict of Interests}
All authors do not have any possible conflict of interests.

\section{Acknowledgments}
This research was supported in part by Natural Science Foundation of Shandong Province under Grant No.ZR2013FQ022 and Foundation research project of Qingdao Science and Technology Plan under Grant No.12-1-4-2-(14)-jch. Science and Technology Plan Project for Colleges and Universities of Shandong Province under Grant NO. J14LN85

\bibliographystyle{elsarticle-num}

%% The Appendices part is started with the command \appendix;
%% appendix sections are then done as normal sections
\appendix
\newtheorem{app:theorem}{Theorem}
\newtheorem{app:lemma}{Lemma}
\section{Proof of Theorem~1}
\label{app:thm_1}
\begin{app:theorem}
For $\forall k\in[1,q)$, even the $1$---$k$ part of the sequence is known, the newly added element $\mathbb{R}_{k+1}$ can still change the movement pattern to any state with at least 1 non-zero element.
\end{app:theorem}

\begin{proof}
Let us represent each vector $R_i$ as 
\[
R_i=[r^i_1,r^i_2,\ldots,r^i_m]~~~\forall j\in[1,m], r^i_j\geq 0
\]
and the vector $\sum_{i=1}^{i=k}\mathbb{R}_{i}$ as 
\[
\sum_{i=1}^{i=k}\mathbb{R}_{i}=[S^{k}_1,S^{k}_2,\ldots,S^{k}_m]
\]
Thus for $\forall i\in [1,m]$ and $k\in [1,z)$ we have
\[
S^{k+1}_i=S^{k}_i+r^{k+1}_i
\]
Let $\mathcal{P'}=\mathbbm{E}(\sum_{i=1}^{i=k+1}\mathbb{R}_{i})$ and $\mathcal{P}=\mathbbm{E}(\sum_{i=1}^{i=k+1}\mathbb{R}_{i})$ be the old and new movement pattern before and after adding $\mathbb{R}_{k+1}$ respectively. And our objective is to prove that for any possible state of $\mathcal{P}$, there exists a possible $\mathbb{R}_{k+1}$ to transform $\mathcal{P'}$ to $\mathcal{P}$.
From this point, when $\mathcal{P}=[\varkappa _1,\varkappa _2,\ldots,\varkappa _m]$ is known, it is obvious that there exists such an $\mathbb{R}_{k+1}$ if and only if the following linear inequations have solution.
\begin{displaymath}
\forall i\in[1,m]~~~~
\left\{
\begin{array}{ll}
S^{k+1}_i\geq\delta\sum_{j=1}^{j=m}\left(\sfrac{S^{k+1}_j}{m}\right)& \varkappa_i=1 \\
S^{k+1}_i<\delta\sum_{j=1}^{j=m}\left(\sfrac{S^{k+1}_j}{m}\right) & \varkappa_i=0
\end{array}
\right.
\end{displaymath}
i.e.
\[
\left\{
\begin{array}{ll}
\sum_{j=1}^{j=m}r_j^{k+1}-\frac{m}{\delta}r_i^{k+1}\leq \frac{m}{\delta}S_i^{k}-\sum_{j=1}^{j=m}S_j^{k} & \varkappa_i=1 \\
\sum_{j=1}^{j=m}r_j^{k+1}-\frac{m}{\delta}r_i^{k+1}> \frac{m}{\delta}S_i^{k}-\sum_{j=1}^{j=m}S_j^{k} & \varkappa_i=0 \\
\end{array}\right.
\]

For $\forall i$, we denote $\frac{m}{\delta}S_i^{k}-\sum_{j=1}^{j=m}S_j^{k}$ by $w_i$, and represent $w_i$ for $\varkappa_i=1$ and $\varkappa_i=0$ as $w_{i|\varkappa_i=1}$ and $w_{i|\varkappa_i=0}$ respectively. We also represent $r_i^{k+1}$ for $\varkappa_i=1$ and $\varkappa_i=0$ in the similar way. Then we have
\begin{equation}
\sum_{j=1}^{j=m}r_j^{k+1}-\frac{m}{\delta}r_{i|\varkappa_i=0}^{k+1}> w_{i|\varkappa_i=0}  
\label{eq:varkappa_0}
\end{equation}
and
\begin{equation}
\sum_{j=1}^{j=m}r_j^{k+1}-\frac{m}{\delta}r_{i|\varkappa_i=1}^{k+1}\leq w_{i|\varkappa_i=1}
\label{eq:varkappa_1}
\end{equation}
We set $r^{k+1}_{i|\varkappa_i=0}=0$ and $\sum_{j=1}^{j=m}r_j^{k+1}=\max\{w_i\}+\epsilon~(\epsilon>0)$, then the equation~\ref{eq:varkappa_0} always holds for $\forall \varkappa_i=0$.

In the following context we prove that we can choose a suitable $\epsilon$ value to make equation~\ref{eq:varkappa_1} hold. To achieve this, we only need that $\sum_{j=1}^{j=m}r_j^{k+1}-\frac{m}{\delta}r_{i|\varkappa_i=1}^{k+1}\leq \min\{w_i\}$ holds for $\forall \varkappa_i=1$, i.e.
\[
\max\{w_i\}+\epsilon-\frac{m}{\delta}r_{i|\varkappa_i=1}^{k+1}\leq \min\{w_i\}
\]
From above discussion, we only need the following inequations hold simultaneously.
\begin{equation}
r_{i|\varkappa_i=1}^{k+1}\geq\frac{\delta}{m}(\max\{w_i\}-\min\{w_i\}+1)~(0<\delta<1)
\label{eq:q1}
\end{equation}
\begin{equation}
\sum_{j=1}^{j=m}r_{j|\varkappa_j=1}^{k+1}=\max\{w_i\}+\epsilon
\label{eq:q2}
\end{equation}
Assume that the number of $\varkappa_i=1$ in $\mathcal{P}$ is $z$, since that $0<z\leq m$, we have
\[
z\cdot\frac{\delta}{m}(\max\{w_i\}-\min\{w_i\}+\epsilon) \leq  
\delta(\max\{w_i\}-\min\{w_i\}+\epsilon) 
\]
To make equation~\ref{eq:q1} and \ref{eq:q2} hold simultaneously, we should have
\[
\delta(\max\{w_i\}-\min\{w_i\}+\epsilon) < \max\{w_i\}+\epsilon = \sum_{j=1}^{j=m}r_{j|\varkappa_j=1}^{k+1}
\]
i.e.
\[
\epsilon>\frac{\delta}{\delta-1}\min\{w_i\}-\max\{w_i\}
\]
\end{proof}

% you can choose not to have a title for an appendix
% if you want by leaving the argument blank
\section{Proof of Theorem~2}
\label{app:thm_2}
\begin{app:theorem}
Even the global information is known in advance, i.e., the value of $\lambda_{i,j}$ is available for any node $n_i$, the $N_{opt}$ Search Problem is still NP-Hard. 
\label{thm:npc}
\end{app:theorem}
\begin{proof}
We reduce the $N_{opt}$ Search Problem as a Subset Sum Problem (SSP). Assume that we know a solution for a certain SSP instance of which all the elements are denoted by $R={r_1,r_2,\ldots,r_n}$ and the target value is $v$. For clarity, we transform $R$ as follows:
\begin{multline*}
R=\{\log 2^{r_1},\log 2^{r_2},\ldots,\log 2^{r_n}\}\\
=\{\log g_1,\log g_2,\ldots,\log g_n\}
\end{multline*}
where we have $g_i=2^{r_i}~~(i\in[1,n])$.
Let us denote all the non-empty subsets as $R_1,R_2\ldots, R_{2^{|N|-1}}$, and then we construct the corresponding $N_{opt}$ instance, where the target value is $P=v$, and the node set is $N={n_1,n_2,\ldots,n_n}$ corresponding to the above mentioned set $R$. Then each non-empty subset $N_i$ of set $N$ exist in a relationship of one-to-one correspondence with subset $R_i$ of set $R$ (all the subscripts are in one-to-one correspondence), and we denote any subset $N_j$ as 
\[
N_j=\{n_{j_1},n_{j_2},\ldots,n_{j_k}\}
\]
For $\forall j$, we let $\lambda_{i,j}=1$, $\tau_{l}=\sqrt{2}$, and for $\forall n_i\in N-d$, we let $\lambda_{i,j}=\Lambda_i$. Besides, we let all the nodes have the same movement record $\mathbb{R}=[0,0,\ldots,0]$. Consequently we have for all nodes that
\[
\mathcal{P}=[1,1,\ldots,1]
\]

Then we have
\[
P_{N_j,d}=1-\prod_{n_i\in N}(1-\Lambda_{i}e^{-\Lambda_i -1})^n
\]
We denote that $\kappa_i=\left(1-\Lambda_i e^{-\Lambda_i-1}\right)^{n}$, and then we have
\[
P_{N_j,d}=1-\kappa_{j_1}\kappa_{j_2}\cdots\kappa_{j_k}
\]

In this way, there must exist a solution instance of the SSP corresponding to that of the $N_{opt}$ problem, i.e.
\[
v=\log (g_{j_1}g_{j_2}\cdots g_{j_k})\Longleftrightarrow P_{N_j,d}=1-\kappa_{j_1}\kappa_{j_2}\cdots\kappa_{j_n}
\]
Otherwise, if there is no solution for this SSP instance, then there does not exist any solution for the corresponding $N_{opt}$ instance either. This is because that if the $N_{opt}$ instance has a solution, then there exist a set $R_j={r_{j_1},r_{j_2},\ldots,r_{j_n}}$ keeping the following equation hold
\[
P=P_{N_j}
\]
which conflicts the assumption before.
\end{proof}

%% If you have bibdatabase file and want bibtex to generate the
%% bibitems, please use
%%
%%  \bibliographystyle{elsarticle-num} 
%%  \bibliography{<your bibdatabase>}

\begin{thebibliography}{10}
\expandafter\ifx\csname url\endcsname\relax
  \def\url#1{\texttt{#1}}\fi
\expandafter\ifx\csname urlprefix\endcsname\relax\def\urlprefix{URL }\fi
\expandafter\ifx\csname href\endcsname\relax
  \def\href#1#2{#2} \def\path#1{#1}\fi

\bibitem{Cao2012}
Y.~Cao, Z.~Sun, {Routing in Delay/Disruption Tolerant Networks: A Taxonomy,
  Survey and Challenges}, IEEE Communications Surveys \& Tutorials 15~(2)
  (2013) 654--677.
\newblock \href {http://dx.doi.org/10.1109/SURV.2012.042512.00053}
  {\path{doi:10.1109/SURV.2012.042512.00053}}.

\bibitem{Vastardis2013}
N.~Vastardis, {Mobile Social Networks: Architectures, Social Properties, and
  Key Research Challenges}, IEEE Communications Surveys \& Tutorials 15~(3)
  (2013) 1355--1371.
\newblock \href {http://dx.doi.org/10.1109/SURV.2012.060912.00108}
  {\path{doi:10.1109/SURV.2012.060912.00108}}.

\bibitem{Fall2003}
K.~Fall, {A Delay-Tolerant Network Architecture for Challenged Internets}, in:
  Proceedings of the 2003 conference on Applications, technologies,
  architectures, and protocols for computer communications - SIGCOMM '03, ACM
  Press, New York, New York, USA, 2003, pp. 27--34.
\newblock \href {http://dx.doi.org/10.1145/863955.863960}
  {\path{doi:10.1145/863955.863960}}.

\bibitem{Ott2012}
J.~Ott, {404 Not Found? – A Quest For DTN Applications}, in: Proceedings of
  the third ACM international workshop on Mobile Opportunistic Networks -
  MobiOpp '12, ACM Press, New York, New York, USA, 2012, p.~3.
\newblock \href {http://dx.doi.org/10.1145/2159576.2159579}
  {\path{doi:10.1145/2159576.2159579}}.

\bibitem{Wei2013b}
K.~Wei, D.~Zeng, S.~Guo, K.~Xu, {On Social Delay-Tolerant Networking:
  Aggregation, Tie Detection, and Routing}, IEEE Transactions on Parallel and
  Distributed Systems 25~(6) (2014) 1563--1573.
\newblock \href {http://dx.doi.org/10.1109/TPDS.2013.264}
  {\path{doi:10.1109/TPDS.2013.264}}.

\bibitem{Bigwood2009}
G.~J. Bigwood, T.~N.~H. Henderson, S.~N. Bhatti, {Social delay-tolerant network
  routing}, in: 2nd ICC Winter Workshop on Complexity in Social Systems, 2009.

\bibitem{Nguyen2009}
A.-M. Nguyen, N.~Cre, {SOCIAL-DTN: Why Social Networking Services is more
  fruitful to Mobile Delay-Tolerant Networks?}, in: 2009 International
  Conference on Ultra Modern Telecommunications \& Workshops, IEEE, 2009, pp.
  1--2.
\newblock \href {http://dx.doi.org/10.1109/ICUMT.2009.5345436}
  {\path{doi:10.1109/ICUMT.2009.5345436}}.

\bibitem{Wei2013}
K.~Wei, X.~Liang, K.~Xu, {A Survey of Social-Aware Routing Protocols in Delay
  Tolerant Networks: Applications, Taxonomy and Design-Related Issues}, IEEE
  Communications Surveys \& Tutorials (2013) 1--23\href
  {http://dx.doi.org/10.1109/SURV.2013.042313.00103}
  {\path{doi:10.1109/SURV.2013.042313.00103}}.

\bibitem{Hui2008}
P.~Hui, J.~Crowcroft, E.~Yoneki, {BUBBLE Rap: Social-based Forwarding in Delay
  Tolerant Networks}, in: Proceedings of the 9th ACM international symposium on
  Mobile ad hoc networking and computing - MobiHoc '08, ACM Press, New York,
  New York, USA, 2008, p. 241.
\newblock \href {http://dx.doi.org/10.1145/1374618.1374652}
  {\path{doi:10.1145/1374618.1374652}}.

\bibitem{Daly2007}
E.~M. Daly, M.~Haahr, {Social network analysis for routing in disconnected
  delay-tolerant MANETs}, in: Proceedings of the 8th ACM international
  symposium on Mobile ad hoc networking and computing - MobiHoc '07, ACM Press,
  New York, New York, USA, 2007, p.~32.
\newblock \href {http://dx.doi.org/10.1145/1288107.1288113}
  {\path{doi:10.1145/1288107.1288113}}.

\bibitem{Xiao2013}
M.~Xiao, I.~J. Wu, L.~Huang, {Community-Aware Opportunistic Routing in Mobile
  Social Networks}, IEEE Transactions on Computers (2013) 1--13\href
  {http://dx.doi.org/10.1109/TC.2013.55} {\path{doi:10.1109/TC.2013.55}}.

\bibitem{Li2013}
Z.~Li, H.~Shen, {SEDUM: Exploiting Social Networks in Utility--Based
  Distributed Routing for DTNs}, IEEE Transactions on Computers 62~(1) (2013)
  83--97.
\newblock \href {http://dx.doi.org/10.1109/TC.2011.232}
  {\path{doi:10.1109/TC.2011.232}}.

\bibitem{Ning2013}
T.~Ning, Z.~Yang, H.~Wu, Z.~Han, {Self-Interest-Driven incentives for ad
  dissemination in autonomous mobile social networks}, 2013 Proceedings IEEE
  INFOCOM (2013) 2310--2318\href
  {http://dx.doi.org/10.1109/INFCOM.2013.6567035}
  {\path{doi:10.1109/INFCOM.2013.6567035}}.

\bibitem{Wu2012}
J.~Wu, Y.~Wang, {Social feature-based multi-path routing in delay tolerant
  networks}, in: 2012 Proceedings IEEE INFOCOM, IEEE, 2012, pp. 1368--1376.
\newblock \href {http://dx.doi.org/10.1109/INFCOM.2012.6195500}
  {\path{doi:10.1109/INFCOM.2012.6195500}}.

\bibitem{Moreira2012}
W.~Moreira, P.~Mendes, S.~Sargento, {Opportunistic routing based on daily
  routines}, 2012 IEEE International Symposium on a World of Wireless, Mobile
  and Multimedia Networks (WoWMoM) (2012) 1--6\href
  {http://dx.doi.org/10.1109/WoWMoM.2012.6263749}
  {\path{doi:10.1109/WoWMoM.2012.6263749}}.

\bibitem{Ciobanu2013}
R.~I. Ciobanu, C.~Dobre, V.~Cristea, {SPRINT: Social prediction-based
  opportunistic routing}, 2013 IEEE 14th International Symposium on "A World of
  Wireless, Mobile and Multimedia Networks" (WoWMoM) (2013) 1--7\href
  {http://dx.doi.org/10.1109/WoWMoM.2013.6583442}
  {\path{doi:10.1109/WoWMoM.2013.6583442}}.

\bibitem{Henderson2004a}
T.~Henderson, D.~Kotz, I.~Abyzov, {The changing usage of a mature campus-wide
  wireless network}, in: Proceedings of the 10th annual international
  conference on Mobile computing and networking - MobiCom '04, ACM Press, New
  York, New York, USA, 2004, p. 187.
\newblock \href {http://dx.doi.org/10.1145/1023720.1023739}
  {\path{doi:10.1145/1023720.1023739}}.

\bibitem{Ibrahim2009}
M.~Ibrahim, P.~Nain, I.~Carreras, {Analysis of relay protocols for
  throwbox-equipped DTNs}, in: 2009 7th International Symposium on Modeling and
  Optimization in Mobile, Ad Hoc, and Wireless Networks, IEEE, 2009, pp. 1--9.
\newblock \href {http://dx.doi.org/10.1109/WIOPT.2009.5291625}
  {\path{doi:10.1109/WIOPT.2009.5291625}}.

\bibitem{Gao2009}
W.~Gao, Q.~Li, B.~Zhao, G.~Cao, \href{http://dl.acm.org/citation.cfm?id=1530790
  papers2://publication/uuid/CA59F365-D544-4A4C-832F-1128140EF29C
  http://portal.acm.org/citation.cfm?doid=1530748.1530790}{{Multicasting in
  delay tolerant networks}}, in: Proceedings of the tenth ACM international
  symposium on Mobile ad hoc networking and computing - MobiHoc '09, ACM Press,
  New York, New York, USA, 2009, p. 299.
\newblock \href {http://dx.doi.org/10.1145/1530748.1530790}
  {\path{doi:10.1145/1530748.1530790}}.
\newline\urlprefix\url{http://dl.acm.org/citation.cfm?id=1530790
  papers2://publication/uuid/CA59F365-D544-4A4C-832F-1128140EF29C
  http://portal.acm.org/citation.cfm?doid=1530748.1530790}

\bibitem{Vahdat2000}
A.~Vahdat, D.~Becker, {Epidemic Routing for Partially Connected ad hoc
  Networks}, Duke University Technical Report Cs-2000-06~(Tech. Rep.).
\newblock \href {http://dx.doi.org/10.1.1.34.6151} {\path{doi:10.1.1.34.6151}}.

\bibitem{Spyropoulos2005}
T.~Spyropoulos, K.~Psounis, C.~S. Raghavendra, {Spray and Wait: An Efficient
  Routing Scheme for Intermittently Connected Mobile Networks}, in: Proceeding
  of the 2005 ACM SIGCOMM workshop on Delay-tolerant networking - WDTN '05, ACM
  Press, New York, New York, USA, 2005, pp. 252--259.
\newblock \href {http://dx.doi.org/10.1145/1080139.1080143}
  {\path{doi:10.1145/1080139.1080143}}.

\bibitem{Keranen2009}
A.~Ker\"{a}nen, J.~Ott, T.~K\"{a}rkk\"{a}inen, {The ONE simulator for DTN
  protocol evaluation}, in: Proceedings of the Second International ICST
  Conference on Simulation Tools and Techniques, ICST, 2009, p.~55.
\newblock \href {http://dx.doi.org/10.4108/ICST.SIMUTOOLS2009.5674}
  {\path{doi:10.4108/ICST.SIMUTOOLS2009.5674}}.

\bibitem{Ekman2008}
F.~Ekman, A.~Ker\"{a}nen, J.~Karvo, J.~Ott, {Working day movement model}, in:
  Proceeding of the 1st ACM SIGMOBILE workshop on Mobility models -
  MobilityModels '08, ACM Press, New York, New York, USA, 2008, pp. 33--40.
\newblock \href {http://dx.doi.org/10.1145/1374688.1374695}
  {\path{doi:10.1145/1374688.1374695}}.

\bibitem{Erramilli2008}
V.~Erramilli, A.~Chaintreau, M.~Crovella, C.~Diot, {Delegation forwarding}, in:
  Proceedings of the 9th ACM international symposium on Mobile ad hoc
  networking and computing - MobiHoc '08, ACM Press, New York, New York, USA,
  2008, p. 251.
\newblock \href {http://dx.doi.org/10.1145/1374618.1374653}
  {\path{doi:10.1145/1374618.1374653}}.

\end{thebibliography}

%% else use the following coding to input the bibitems directly in the
%% TeX file.

\end{document}